\setlist{nosep}
\tikzset{
  circ/.style = {circle,draw,fill,inner sep=1.3pt}
  }  
\newcommand{\maybe}[1]{}
\renewcommand{\maybe}[1]{#1}
\newcommand{\defproblemboxed}[3]{
\vspace{1mm}
\begin{center}
\begin{tcolorbox}[colback=WhiteSmoke,colframe=DarkSlateBlue,title=\textsc{#1}]
  {\bf{Input:}} #2  \\
  {\bf{Question:}} #3
\end{tcolorbox}
\end{center}
\vspace{1mm}
}
\newcommand{\tw}{\mathtt{tw}}
\newcommand{\pw}{\mathtt{pw}}
\newcommand{\vc}{\mathtt{vc}}
\newcommand{\fvs}{\mathtt{fvs}}
\newcommand{\degree}{\mathtt{deg}}
\newcommand{\para}{\textsf{para}}
\newcommand{\calC}{\mathcal{C}}
\newcommand{\calD}{\mathcal{D}}
\newcommand{\calF}{\mathcal{F}}
\newcommand{\calO}{\ensuremath{{\mathcal O}}}
\newcommand{\calP}{\mathcal{P}}
\newcommand{\calT}{\mathcal{T}}
\newcommand{\calU}{\mathcal{U}}
\newcommand{\calX}{\mathcal{X}}
\newcommand{\true}{\texttt{True}}
\newcommand{\false}{\texttt{False}}
\newcommand{\hard}{\textsf{hard}}
\newcommand{\paraNP}{\textsf{paraNP}}
\newcommand{\yes}{\textsc{Yes}}
\newcommand{\no}{\textsc{No}}
\newtheorem{reduction rule}{Reduction Rule}[section]
\newtheorem{marking-scheme}{Marking Scheme}[section]
\patchcmd{\BR@backref}{\newblock}{\newblock($\uparrow$~}{}{}
\patchcmd{\BR@backref}{\par}{)\par}{}{}
\title{Romeo and Juliet Meeting in Forest Like Regions} 
\titlerunning{Romeo and Juliet Meeting in Forest Like Regions} 
\author{Neeldhara Misra}{Indian Institute of Technology, Gandhinagar, India \and \url{https://www.neeldhara.com} }{neeldhara.m@iitgn.ac.in}{https://orcid.org/0000-0003-1727-5388}{The author is grateful for support from DST-SERB and IIT Gandhinagar. This work was partially supported by the ECR grant ECR/2018/002967.}
\author{Manas Mulpuri}{Indian Institute of Technology, Gandhinagar, India}{mulpuri.m@iitgn.ac.in}{}{}
\author{Prafullkumar Tale}{Indian Institute of Science Education and Research, Pune, India\and \url{https://pptale.github.io/}}{prafullkumar@iiserpune.ac.in}{}{Part of the work was carried out when the author was a Post-Doctoral Researcher at CISPA Helmholtz Center for Information Security, Germany,  supported by the European Research Council (ERC) consolidator grant No.~725978 SYSTEMATICGRAPH.}
\author{Gaurav Viramgami}{Indian Institute of Technology, Gandhinagar, India}{viramgami.g@iitgn.ac.in}{}{}
\authorrunning{Misra, Mulpuri, Tale, and Viramgami} 
\keywords{Games on Graphs,  Dynamic Separators,  W[1]-hardness,  Structural Parametersization,  Treewidth} 
\begin{document}

\maketitle

\begin{abstract}
The game of rendezvous with adversaries is a game on a graph played by two players: \emph{Facilitator} and \emph{Divider}.
Facilitator has two agents and Divider has a team of $k \ge 1$ agents.
{While the initial positions of Facilitator's agents are fixed, Divider gets to select the initial positions of his agents.}
{Then, they take turns to move their agents to adjacent vertices (or stay put) with Facilitator's goal to bring both her agents at same vertex and Divider's goal to prevent it.}
The computational question of interest is to determine if Facilitator has a winning strategy against Divider with $k$ agents.
Fomin, Golovach, and Thilikos [WG, 2021] introduced this game and proved that it is \PSPACE-\hard\ and \co-\W[2]-\hard\ parameterized by the number of agents.

This hardness naturally motivates the structural parameterization of the problem.
The authors proved that it admits an \FPT\ algorithm when parameterized by the modular width and the number of allowed rounds.
However, they left open the complexity of the problem from the perspective of other structural parameters.
In particular, they explicitly asked whether the problem admits an \FPT\ or \XP-algorithm with respect to the treewidth of the input graph.
We answer this question in the negative and show that \textsc{Rendezvous} is \co-\NP-\hard\ even for graphs of constant treewidth.
Further, we show that the problem is \co-\W[1]-\hard\ when parameterized by the feedback vertex set number and the number of agents, and is unlikely to admit a polynomial kernel when parameterized by the vertex cover number and the number of agents.
Complementing these hardness results, we show that the \textsc{Rendezvous} is \FPT\ when parameterized by both the vertex cover number and the solution size.
Finally, for graphs of treewidth at most two {and girds}, we show that the problem can be solved in polynomial time.
\end{abstract}

\newpage

\section{Introduction}

The game of rendezvous with adversaries on a graph --- \textsc{Rendezvous} --- is a natural dynamic version of the problem of finding a vertex cut between two vertices $s$ and $t$ introduced by Fomin, Golovach, and Thilikos~\cite{DBLP:conf/wg/FominGT21}.
The game is played on a finite undirected connected graph $G$ by two players: \emph{Facilitator} and \emph{Divider}.
Facilitator has two agents Romeo and Juliet that are initially placed in designated vertices $s$ and $t$ of $G$.
Divider, on the other hand, has a team of $k \ge 1$ agents $D_1, \ldots, D_k$ that are initially placed in some vertices of $V(G)\backslash\{s, t\}$ chosen by him.
We note that a single vertex can accommodate multiple agents of Divider.

Then the players make their moves by turn, starting with Facilitator.
At every move, each player moves some of his/her agents to adjacent vertices or keeps them in their old positions.
No agent can be moved to a vertex that is currently occupied by adversary's agents.
Both players have complete information about $G$ and the positions of all the agents.
Facilitator aims to ensure that Romeo and Juliet meet; that is, they are in the same vertex.
The task of Divider is to prevent the rendezvous of Romeo and Juliet by maintaining $D_1, \ldots, D_k$ in positions that block the possibility to meet.
Facilitator wins if Romeo and Juliet meet, and Divider wins if they succeed in preventing the meeting of Romeo and Juliet forever. This setup naturally leads to the following computational question.

\defproblemboxed{Rendezvous}{A graph $G$ with two given vertices $s$ and $t$, and a positive integer $k$.}{Can Facilitator win on $G$ starting from $s$ and $t$ against Divider with $k$ agents?}

We will often refer to $k$, the number of agents employed by Divider to keep Romeo and Juliet separated, as the ``solution size'' for this problem.

\paragraph*{Known Results}

Fomin, Golovach, and Thilikos~\cite{DBLP:conf/wg/FominGT21} initiated an extensive study of the computational complexity of \textsc{Rendezvous}.
They concluded that the problem is \PSPACE-\hard\ and \co-\W[2]-\hard\footnote{We refer the reader to~\Cref{sec:prelim} for the definitions of these complexity classes.} when parameterized by the number of Divider's agents, while also demonstrating an $|V(G)|^{\mathcal{O}(k)}$ algorithm based on backtracking stages over the game arena.
They also show that the problem admits polynomial time algorithms on chordal graphs and $P_5$-free graphs.
A related problem considered is~\textsc{Rendezvous in Time}, which asks if Facilitator can force a win in at most $\tau$ steps.
It turns out that~\textsc{Rendezvous in Time} is co-NP-complete even for $\tau = 2$ and is \FPT\ when parameterized by $\tau$ and the neighborhood diversity of the graph.
The latter is an ILP-based approach and uses the fact that \textsc{Integer Linear Programming Feasibility} is \FPT\ in the number of variables.
{We refer readers to \cite{DBLP:conf/wg/FominGT21}, and references within, for more related problems.}

The smallest number of agents that Divider needs to use to win on a graph $G$ is called the ``dynamic'' separation number of $G$.
We denote this by $d_G(s,t)$.
Note that if $s$ and $t$ are adjacent or $s = t$, then $d_G(s,t) := + \infty$.
The ``static'' separation number between $s$ and $t$, the original positions of Facilitator's agents, is simply the smallest size of a $(s,t)$-vertex cut, i.e, a subset of vertices whose removal disconnects $s$ and $t$.
We use $\lambda_G(s,t)$ to denote the minimum size of a vertex $(s,t)$-separator in $G$.
It is clear that $d_G(s,t) \leq \lambda_G(s,t)$, since positioning $\lambda_G(s,t)$ many guards on the vertices of a $(s,t)$-vertex allows Divider to win the game right away.
It turns out that $d_G(s,t) = 1$ if and only if $\lambda_G(s,t) = 1$.
However, there are examples of graphs where $d_G(s,t)$ is arbitrarily smaller than $\lambda_G(s,t)$~\cite{DBLP:conf/wg/FominGT21}.
The results in~\cite{DBLP:conf/wg/FominGT21} for chordal graphs and $P_5$-free graphs are based on the fact that in these graphs, it turns out that $d_G(s,t) = \lambda_G(s,t)$.

\paragraph*{Our Contributions}

Given that the problem is hard in the solution size, often regarded the ``standard'' parameter, a natural approach is to turn to structural parameters of the input graph.
One of the most popular structural parameters in the context of graphs is treewidth, which is a measure of how ``tree-like'' a graph is.
XP and FPT algorithms parameterized by treewidth are natural generalizations of tractability on trees.
Indeed, \textsc{Rendezvous} is easy to solve on trees because $\lambda_G(s,t) = 1$ for any distinct $s$ and $t$ when $G$ is a tree and $st \notin E(G)$.
The complexity of \textsc{Rendezvous} parameterized by treewidth, however, is wide open --- in particular it is not even known if the problem is in XP parameterized by treewidth.

Interestingly, it was pointed out in~\cite{DBLP:conf/wg/FominGT21} that if the initial positions $s$ and $t$ are not is the same bag of a tree decomposition of width $w$, then the upper bound for the dynamic separation number by $\lambda_G(s, t)$ together with the XP algorithm for the standard parameter can be employed to solve the problem in $n^{O(w)}$ time.
Thus, the question that was left open by Fomin, Golovach, and Thilikos was if the problem can be solved in the same time if $s$ and $t$ are in the same bag.
Our first contribution is to answer this question in the negative by showing that \textsc{Rendezvous} is in fact \co-\NP-\hard\ even for graphs of \emph{constant} treewidth. In fact, we show more:

\begin{restatable}{theorem}{fvsconphard}
    \label{thm:fvs-conp-hard}
    {\sc Rendezvous} is {\em \co-\NP-\hard} even when restricted to:
    \begin{itemize}
        \item graphs whose feedback vertex set number is at most 14, or
        \item graphs whose pathwidth is at most 16.
    \end{itemize}
    In particular, {\sc Rendezvous} is {\em \co-\para-\NP-\hard} parameterized by treewidth.
\end{restatable}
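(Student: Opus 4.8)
The plan is to establish co-NP-hardness by reducing a suitable NP-hard problem to the statement ``\emph{Divider} wins'', since an instance is a \no-instance of \textsc{Rendezvous} exactly when Divider has a winning strategy. Concretely, I would reduce from a satisfiability-type problem (a CNF formula $\varphi$, ideally in a bounded-occurrence normal form so that vertex degrees stay controlled) and build a graph $G_\varphi$ with terminals $s,t$ placed in a common bag and a budget $k$ of Divider agents, such that Divider can keep Romeo and Juliet apart forever if and only if $\varphi$ is satisfiable. The intended correspondence is that \emph{Divider's initial placement encodes a truth assignment}: for each variable a small gadget offers two ``slots'', and occupying one rather than the other commits that variable to \true\ or \false. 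Facilitator's role is to \emph{audit} this assignment — Romeo and Juliet funnel toward one another through clause gadgets, and a meeting becomes unavoidable precisely when some clause is left unsatisfied by the committed assignment, or when Divider answers inconsistently for the same variable across different clauses. Since Divider's budget $k$ may be polynomial, his placement is rich enough to encode a full assignment and to attempt a global dynamic separator in the sense of $d_{G_\varphi}(s,t)$.

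The central difficulty is that the variable–clause incidence structure of a CNF formula has unbounded treewidth, so it cannot be hard-wired into $G_\varphi$ as edges without destroying the width bound. The key idea is therefore to move this structure into the \emph{temporal} dimension of the game rather than into the static graph: all of Facilitator's probing is routed through a \emph{constant} number of high-degree ``hub'' vertices, while the gadgets hanging off the hubs are forests (or simple paths). Because only these few hubs carry the cyclic part of the interaction, deleting them leaves a forest, which is exactly what bounds the feedback vertex set by an absolute constant; laying the same gadgets out linearly and bounding every separator along a left-to-right sweep yields the pathwidth bound. The sequential nature of play — Facilitator commits to checking one clause at a time, while Divider must respond move-by-move with agents that were placed once and for all — is what forces Divider to behave as a single global assignment, so that any attempt to re-interpret a variable mid-game is punished by an immediate rendezvous.

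I would then prove the two directions separately. For the ``if'' direction, assuming $\varphi$ is satisfiable, Divider places his agents according to a satisfying assignment, and I would exhibit his responding strategy together with an invariant showing that at every round he maintains a valid dynamic separator between Romeo and Juliet, so they never meet. For the ``only if'' direction I would argue the contrapositive: if $\varphi$ is unsatisfiable, then against \emph{any} placement Facilitator wins, because either the placement fails to cover some clause (opening a short route on which the agents can be merged) or it is internally inconsistent (which Facilitator detects and exploits within a bounded number of rounds). The main obstacle I anticipate is the robustness of this audit: I must ensure that \emph{no} clever placement or mid-game reconfiguration can block every audit route simultaneously while staying within the budget $k$, and that each of Facilitator's winning responses can be executed using only the local, bounded-width structure available to her. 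Verifying these invariants, and then bookkeeping the exact contributions of the hub vertices and forest gadgets to obtain the constants $14$ and $16$, is where the technical weight lies; the concluding claim that \textsc{Rendezvous} is \co-\para-\NP-\hard\ parameterized by treewidth then follows immediately, since constant feedback vertex set number (or constant pathwidth) bounds treewidth by a constant.
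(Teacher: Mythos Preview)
Your high-level framework is exactly right and matches the paper: reduce an \NP-hard problem to ``Divider wins'', let Divider's initial placement encode a certificate, have Facilitator audit it, and keep the cyclic part of $G$ confined to a constant set of hub vertices so that deleting them leaves a forest. Where your plan diverges from the paper is in the choice of source problem and, more importantly, in the mechanism by which the audit is carried out.

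The paper reduces from \textsc{3-Dimensional Matching} rather than SAT, and this choice is not incidental. In 3DM every set contains exactly one element of each of three fixed types $\alpha,\beta,\gamma$, so six hubs (one per type, duplicated on a ``left'' and ``right'' side) suffice. The entire set--element incidence is then encoded \emph{arithmetically in path lengths}: the path from a set-vertex to the $\alpha$-hub has length depending on the index of that set's $\alpha$-element, and the path from the $\alpha$-hub to the $i$-th target has complementary length, so an agent reaches the target in time exactly when the indices match. Facilitator's audit is a \emph{single} two-phase commitment (pick a type and side, then pick an index $i$), after which all $n$ agents must be simultaneously in position; there is no sequential clause-by-clause checking and hence no reconfiguration issue to worry about.

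Your SAT outline leaves precisely this mechanism unspecified. With arbitrary variable--clause incidence there is no analogue of the three types, so it is not clear what your constant set of hubs would be or what numerical invariant the path lengths would encode. The ``sequential audit, one clause at a time'' you sketch is also different from the paper's one-shot parallel check and raises the very reconfiguration problem you flag but do not resolve: between two clause-checks Divider has time to shuffle agents, and preventing this with only $O(1)$ hubs is where the real difficulty sits. In short, the skeleton is right, but the load-bearing idea --- replacing graph incidence by carefully tuned distances through a bounded number of type-specific hubs, which is what the 3DM structure hands you for free --- is missing from your plan, and SAT does not obviously supply a substitute.
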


We obtain this hardness by a non-trivial reduction from the \textsc{3-Dimensional Matching} problem.
In the backdrop of this somewhat surprising result, we are motivated to pursue the question of the complexity of \textsc{Rendezvous} for larger parameters.
It turns out that even augmenting the feedback vertex set {number or the pathwidth} with the solution size is not enough.
Specifically, we show that the problem is {unlikely to admit an \FPT-algorithm even} when parameterized by {these combined parameters}.

\begin{restatable}{theorem}{fvswhard}
    \label{thm:fvs-w-hard}
    {\sc Rendezvous} is {\em \co-\W[1]-\hard} when parameterized by:
    \begin{itemize}
        \item the feedback vertex set number and the solution size, or
        \item the pathwidth and the solution size.
    \end{itemize}
\end{restatable}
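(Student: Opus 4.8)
The plan is to exhibit a parameterized reduction from \textsc{Multicolored Clique} --- which is \W[1]-\hard\ parameterized by the number $\ell$ of colour classes --- to the \emph{complement} of \textsc{Rendezvous}. Concretely, from a graph $H$ with colour classes $V_1, \dots, V_\ell$ I would build an instance $(G, s, t, k)$ of \textsc{Rendezvous} in which \emph{Divider} wins if and only if $H$ contains a multicoloured clique. Because Divider wins precisely on the \textsc{Yes}-instances of the source, such a reduction certifies that \textsc{Rendezvous} is \co-\W[1]-\hard, provided both structural quantities in the statement are controlled: I would keep the feedback vertex set number bounded by a constant (as in \Cref{thm:fvs-conp-hard}) and set $k = \mathcal{O}(\ell)$, so that the combined parameter is $\mathcal{O}(\ell)$.

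The construction would attach, to a constant number of hub vertices, one \emph{selection gadget} per colour class, whose vertices correspond to the elements of $V_i$; Divider's opening move places a single blocking agent in each gadget, thereby committing to a vertex $v_i \in V_i$, which accounts for the $\ell$ agents. The decisive idea --- and the reason a structurally simple graph suffices --- is to let the game verify the $\binom{\ell}{2}$ pairwise adjacencies \emph{over time} rather than through dense static incidences: Facilitator advances her agents in phases, each phase threatening a route that Divider can seal only by sliding the relevant blockers along edges of $G$ that are present exactly when the committed vertices $v_i$ and $v_j$ are adjacent in $H$. Thus the temporal unfolding of the game, not the static structure, carries the quadratically many constraints, and Divider can maintain the separation through every phase if and only if $\{v_1, \dots, v_\ell\}$ is pairwise adjacent, i.e. a clique. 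I would reuse the gadget philosophy of the \co-\NP-hardness reduction, substituting its \textsc{3-Dimensional Matching} machinery with this colour-selection mechanism so that the agent count scales with the parameter.

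The two directions of correctness are where the real work lies. For completeness, given a multicoloured clique $\{v_1, \dots, v_\ell\}$ I would describe an explicit Divider strategy that parks an agent on each $v_i$ and, in response to each Facilitator threat, shifts the threatened blocker along a clique edge; verifying that the needed moves are always legal is the easier direction. The harder direction, which I expect to be the main obstacle, is soundness: if $H$ has no multicoloured clique then every committed selection leaves some pair $v_i, v_j$ non-adjacent, and I must furnish a Facilitator strategy that \emph{detects and exploits} this single gap against all of Divider's adaptive responses. The natural approach is to have Facilitator use one agent to pin down Divider's commitments while the other probes the gadgets in a fixed order, so that the phase corresponding to the missing edge presents a vertex through which the second agent can slip and force a rendezvous; making this robust against every Divider schedule is the delicate part.

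Finally, I would verify the structural bounds on $G$. Deleting the constantly many hub vertices that tie the gadgets together should leave a forest, giving a constant feedback vertex set number and hence the first bullet; arranging the $\ell$ gadgets in a line and keeping the hubs in every bag yields a path decomposition of constant width, giving the second bullet. Both parts of the theorem then follow from essentially one construction, with only the final decomposition argument differing.
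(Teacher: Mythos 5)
Your high-level framing is fine --- a parameterized reduction to the complement, with Divider's opening placement encoding a selection and the number of agents tied to the parameter, is exactly the shape of the paper's argument. But the paper does not reduce from \textsc{Multicolored Clique}; it reduces from \textsc{(Monotone) NAE-Integer-3-Sat} (\W[1]-hard in the number of variables, by Bringmann et al.), and this choice is not cosmetic. The constraints there are threshold inequalities $x_i \le d$, and a threshold is a one-dimensional condition that can be encoded purely by \emph{path lengths}: an agent parked at position $d$ on the $i$-th variable path can reach the clause vertex $c_j^{\ell}$ within the time budget if and only if $d \le d_1$, because the connecting path has $2d^{\star}-d_1$ internal vertices. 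The whole graph is therefore a constant number of hubs joined to $n$ paths by paths of varying lengths, which is what keeps the feedback vertex set number and pathwidth linear in the number of variables.

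The gap in your plan is the step you yourself flag as ``the decisive idea'': making Divider's ability to seal a route depend on whether $v_i$ and $v_j$ are adjacent in $H$. Adjacency in an arbitrary graph $H$ is a two-dimensional relation on $V_i \times V_j$, and you propose to realize it by ``edges of $G$ that are present exactly when the committed vertices $v_i$ and $v_j$ are adjacent in $H$.'' Any direct implementation of that sentence embeds (a subdivision of) the edge set of $H$ between the selection gadgets, and then deleting your constantly many hubs does \emph{not} leave a forest --- the cycles of $H$ survive --- so the claimed constant feedback vertex set number and constant pathwidth fail. No alternative mechanism is offered: ``verifying adjacencies over time'' still needs some graph-theoretic substrate through which the timing information flows, and with only path-length arithmetic available you can check order comparisons, not membership in an arbitrary edge set. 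This is precisely why the paper routes through \textsc{NAE-Integer-3-Sat}. To rescue your approach you would need an arithmetic encoding of adjacency (e.g.\ identifying each vertex of $V_i$ with an integer and each edge with a pair of distance constraints that only the two endpoints satisfy simultaneously), which is a substantial additional construction that the proposal does not contain. As written, both the soundness argument and the structural bounds are unestablished at the one point where the proof actually has to do work.
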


This result is shown by a parameter preserving reduction from the \textsc{(Monotone) NAE-Integer-3-Sat} problem, which was shown to be {\W[1]-}\hard\ when parameterized by the number of variables by Bringmann et al.~\cite{DBLP:journals/jcss/BringmannHML16}.
Note that with this, we have a reasonably complete understanding of {\sc Rendezvous} in the combined parameter.
Indeed, recall that the problem is \co-\W[1]-\hard\ and \XP\ parameterized by the solution size alone, and \co-\para-\NP-\hard{} parameterized by the feedback vertex set number alone as shown above.

Given the above hardness, we consider \textsc{Rendevous} parameterized by the vertex cover number, a larger parameter compared to both the feedback vertex set number and pathwidth.
The status of~\textsc{Rendevous} with respect to the vertex cover parameterization was also left open in~\cite{DBLP:conf/wg/FominGT21}.
We see that the problem admits a natural exponential kernel in this parameter when combined with the solution size, and is hence FPT in the combined parameter; however this kernel cannot be improved to a polynomial kernel under standard complexity-theoretic assumptions.
\begin{restatable}{theorem}{vcfptnpk}
    \label{thm:vc-fpt-no-poly}
    {\sc Rendezvous} is \FPT\ when parameterized by the vertex cover number of the input graph and the solution size.
    Moreover, the problem does not admit a polynomial kernel when parameterized by the vertex cover number and the solution size unless \NP $\subseteq$ \co-\NP/poly.
\end{restatable}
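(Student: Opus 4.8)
The plan is to prove the two assertions separately: an exponential kernel (hence membership in \FPT) obtained by a twin-reduction on the independent set, and a no-polynomial-kernel lower bound obtained by cross-composition.

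\textbf{\FPT\ via an exponential kernel.} Fix a vertex cover $C$ of $G$ with $|C| = \vc$ and let $I = V(G)\setminus C$ be the complementary independent set. I classify each $v \in I$ by its \emph{type} $N(v)\subseteq C$; there are at most $2^{\vc}$ types. Two vertices of the same type are false twins and are completely interchangeable for both players, since the movement and the blocking constraints at such a vertex depend only on its (identical) neighbourhood in $C$. The reduction rule is then: while some type class contains more than $k+3$ vertices other than $s$ and $t$, delete one such vertex. After exhaustive application each type class retains at most $k+3$ vertices (together with possibly $s$ and $t$), so the kernel has at most $\vc + 2^{\vc}(k+3) + 2 = 2^{\mathcal{O}(\vc)}\cdot\mathcal{O}(k)$ vertices. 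A deletion keeps $C$ a vertex cover and preserves connectivity, because we remove an independent-set vertex only when a twin with the same neighbourhood survives. Running the $\lvert V(G)\rvert^{\mathcal{O}(k)}$ algorithm of Fomin, Golovach and Thilikos on this kernel then yields the desired \FPT\ bound in $\vc + k$.

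\textbf{Correctness of the rule.} The key lemma I would establish is a capping statement: the winner of the game depends on each type class $T$ only through $\min(|T|, k+3)$. I would prove this by an exchange argument. At most $k+2$ agents are ever on the board, so in any configuration at least one vertex of a class of size at least $k+3$ is unoccupied; hence whenever a player wishes to move an agent onto, or to keep an agent on, a vertex of such a class, an unblocked representative among the $k+3$ retained ones is always available. Formally I would maintain, along any play, a correspondence between configurations of $G$ and of the reduced graph $G'$ that agrees on $C$ and on the occupied type-multiset, and show that every legal move of either player in one graph is mirrored by a legal move in the other. This transfers winning strategies in both directions, so $G$ and $G'$ have the same winner, and the rule is safe.

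\textbf{No polynomial kernel, and the main obstacle.} For the lower bound I would use an OR-cross-composition in the sense of Bodlaender, Jansen and Kratsch. Since a parameterized problem and its complement have the same kernelization status, it suffices to cross-compose an NP-hard problem into the complementary problem ``\emph{Divider wins}''; I take the \textsc{3-Dimensional Matching} problem used in the proof of \Cref{thm:fvs-conp-hard}, whose single-instance reduction produces a graph on which Divider wins exactly when the matching instance is a yes-instance. Given $t$ instances grouped by a polynomial equivalence relation so that they share the same universe and the same number of triples, I would build one graph $G$ in which the element gadgets form a shared core $K$ placed into the vertex cover, while the triple vertices of all instances lie in the independent set attached to $K$, and a selector of size $\mathcal{O}(\log t)$ lets Divider commit to a single instance and then confines the game to it; thus Divider wins $G$ iff he can commit to some yes-instance, i.e. iff the OR of the inputs holds. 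The core has size $\mathrm{poly}(\max_i |x_i|)$ and the selector contributes $\mathcal{O}(\log t)$, so both $\vc(G)$ and the number $k$ of Divider agents are bounded by $\mathrm{poly}(\max_i |x_i| + \log t)$, as a cross-composition requires; the framework of Bodlaender, Jansen, Kratsch and Drucker then rules out a polynomial kernel unless \NP $\subseteq$ \co-\NP/poly. The delicate part---where I expect the real work to lie---is the selector together with the shared core: I must simultaneously force Divider to commit to a single instance, prevent Facilitator from exploiting the coexistence of many instances (for example by routing an agent through an inactive instance or through the shared element core to meet for free regardless of the inputs), and do so with only $\mathcal{O}(\log t)$ extra agents. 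Reconciling the sharing of the element core, which is needed to keep $\vc$ small, with the prevention of cross-instance interference, which is needed for correctness, is the crux of the construction.
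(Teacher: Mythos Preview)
Your \FPT\ argument matches the paper's: partition the independent set by neighbourhood type in the vertex cover, cap each type class (the paper keeps $k+1$ vertices, you keep $k+3$), and argue safeness by a twin-exchange; the resulting kernel has $\vc(G) + 2^{\vc(G)}\cdot\mathcal{O}(k)$ vertices.

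For the lower bound you are taking a much harder route than the paper, and the step you yourself flag as ``the crux'' is precisely the one you have not carried out. The paper does not design a cross-composition at all. It simply observes that the polynomial-time reduction from \textsc{Set Cover} to \textsc{Rendezvous} that Fomin, Golovach and Thilikos already used to establish \co-\W[2]-hardness happens to produce graphs with a vertex cover of size $|U|+k+\mathcal{O}(1)$: the universe vertices together with $s$, $t$, and the $k$ auxiliary vertices $w_i$ cover every edge of the reduced instance. Since \textsc{Set Cover} has no polynomial kernel parameterized by $|U|+k$ unless \NP $\subseteq$ \co-\NP/poly, this is a polynomial parameter transformation and the kernel lower bound transfers immediately---no selector, no shared core, no cross-instance interference to worry about. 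Your plan, by contrast, defers both the selector gadget and the question of how to share element gadgets inside a small vertex cover; note in particular that in the reduction of \Cref{thm:fvs-conp-hard} the element side consists of long subdivided paths rather than a handful of hub vertices, so it does not sit inside any small cover as written, and making it do so would require a fresh gadget design on top of the selector. Two substantial constructions thus remain unsupplied where the paper needs none.
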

We briefly describe the intuition for the exponential kernel with respect to the vertex cover number.
Suppose the graph $G$ has a vertex cover $X$, where $|X| \leq \ell$, and, one may assume, without loss of generality, that $s,t \in X$.
Further, for any $Y \subseteq X$, let $I_Y$ denote the set of all vertices in $G \setminus X$ whose neighborhood in $X$ is exactly $Y$.
It is not hard to see that if $|I_Y| > k$, then one might as well ``curtail'' the set to $k+1$ vertices without changing the instance. This leads to an exponential kernel in the combined parameter.
It is also true that $k$ is bounded, without loss of generality, by $\ell$ and the size of the common neighbhorhood of $s$ and $t$ to begin with; however, it is unclear if $k$ can always be bounded by some function of the vertex cover alone.
The kernelization lower bound follows from observing the structure of the reduced instance in {the} reduction {used} in~\cite{DBLP:conf/wg/FominGT21} {to prove that problem is \co-\W[2]-\hard\ when parameterized by the solution size}.

Finally, we present polynomial time algorithms on two restricted cases.
\begin{restatable}{theorem}{polytimecases}
    \label{thm:poly-time-cases}
    {\sc Rendezvous} can be solved in polynomial time on the classes of treewidth at most two graphs and grids.
\end{restatable}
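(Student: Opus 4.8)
The plan is to reduce both cases to a polynomial-time max-flow computation of the static separation number $\lambda_G(s,t)$, supplemented where necessary by the known $|V(G)|^{\mathcal{O}(k)}$ algorithm of Fomin, Golovach, and Thilikos~\cite{DBLP:conf/wg/FominGT21}, exploiting the fact that on these two classes the separators of interest are either bounded in size (grids) or coincide with the dynamic separation number (treewidth two). Throughout I would first dispose of the degenerate cases: if $s = t$ the agents have already met, and if $st \in E(G)$ then $d_G(s,t) = +\infty$, so in both situations Facilitator wins for every $k$ and we answer \yes.

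\paragraph*{Grids.} Every vertex of a grid has degree at most $4$, and by Menger's theorem $\lambda_G(s,t) \le \min\{\deg(s),\deg(t)\} \le 4$. I would compute $\lambda_G(s,t)$ in polynomial time by a vertex-capacitated max-flow computation. If $k \ge \lambda_G(s,t)$, then $k \ge \lambda_G(s,t) \ge d_G(s,t)$, so Divider wins and we answer \no. Otherwise $k \le \lambda_G(s,t) - 1 \le 3$ is a constant, and invoking the $|V(G)|^{\mathcal{O}(k)}$ algorithm now runs in $|V(G)|^{\mathcal{O}(1)}$ time. This settles the grid case using only the bounded degree of grids together with results already available.

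\paragraph*{Graphs of treewidth at most two.} Here the degree, and hence $\lambda_G(s,t)$, can be large (e.g.\ in $K_{2,m}$), so the grid argument does not apply directly. Instead I would prove the structural identity that, for a $K_4$-minor-free graph $G$ and non-adjacent $s \neq t$, the dynamic and static separation numbers coincide: $d_G(s,t) = \lambda_G(s,t)$. Since $d_G(s,t) \le \lambda_G(s,t)$ always holds, it suffices to show that Facilitator wins whenever $k < \lambda_G(s,t)$, which gives the reverse inequality $d_G(s,t) \ge \lambda_G(s,t)$. Granting this, the algorithm simply computes $\lambda_G(s,t)$ by max-flow and answers \yes\ exactly when $k < \lambda_G(s,t)$.

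\paragraph*{The main obstacle.} The crux is the Facilitator strategy establishing $d_G(s,t) \ge \lambda_G(s,t)$ on $K_4$-minor-free graphs; this is where I expect the real work to lie. I would argue by induction on $|V(G)|$ using the series-parallel decomposition of the $s$--$t$ block of $G$. If $s$ and $t$ are separated by a cut vertex then $\lambda_G(s,t) = 1$ and the base case $d_G(s,t) = \lambda_G(s,t) = 1$ (already known to hold) applies; hence we may assume this block is $2$-connected with $\lambda_G(s,t) \ge 2$. Writing the block as a parallel composition of series components $C_1,\dots,C_p$ with poles $s,t$, Menger's theorem gives $\lambda_G(s,t) = \sum_i \lambda_{C_i}(s,t)$, so when $k < \lambda_G(s,t)$ some component carries more internally disjoint paths than the Divider agents currently assigned to it. The key point, special to $K_4$-minor-free graphs, is that distinct parallel components meet only in $\{s,t\}$, so a Divider agent can migrate from one component to another only by passing through $s$ or $t$; Facilitator can therefore keep Romeo and Juliet positioned so as to deny these transit vertices and drive the game into a single under-defended component, where the series (bottleneck) structure lets her make monotone progress and recurse. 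The delicate part is to formalize that this local progress is globally consistent, so that Divider can never recycle an agent across components in time; this is precisely where $K_4$-minor-freeness is used essentially. Such an argument must break down once $K_4$ minors---and hence treewidth three---are allowed, consistent with the \co-\NP-\hard ness already established at constant treewidth.
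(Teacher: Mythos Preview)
Your grid argument is correct and, pleasantly, simpler than the paper's: the paper proves directly that $d_G(s,t)=2$ for any non-adjacent $s,t$ by exhibiting an explicit two-agent Divider strategy (the two agents shadow Romeo and Juliet one row above and below, preventing the row separation from ever decreasing), whereas you just use $\deg \le 4$ to bound $\lambda_G(s,t)\le 4$ and fall back on the $|V(G)|^{\mathcal O(k)}$ algorithm for $k\le 3$. Your route yields the polynomial-time algorithm with less work; the paper's route additionally pins down the exact value of $d_G$ and exhibits grids as a class where $d_G(s,t)$ is strictly smaller than $\lambda_G(s,t)$.

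For treewidth two you correctly identify the target identity $d_G(s,t)=\lambda_G(s,t)$ and the tool (series--parallel structure), which is exactly what the paper proves. However, your induction step assumes that the $2$-connected block containing $s$ and $t$ can be written as a parallel composition \emph{with $s$ and $t$ as the poles}, and this is not always possible. Take $K_{2,3}$ with parts $\{a,b\}$ and $\{x,y,z\}$ and set $s=x$, $t=y$: adding the edge $xy$ creates a $K_4$ minor (contract the path $a\text{--}z\text{--}b$), so $K_{2,3}$ is not a two-terminal series--parallel graph with terminals $x,y$; and indeed removing $\{x,y\}$ leaves the rest connected, so there is no non-trivial parallel split at $x,y$ either. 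Your recursion (``drive the game into a single under-defended component and recurse'') therefore stalls on such instances, because the whole block is already that single component. The paper circumvents this by taking the \emph{canonical} series/parallel decomposition $G=G_1\oplus G_2$ (with its own terminals $x,y$, unrelated to $s,t$) and doing a case analysis on where $s$ and $t$ sit relative to $G_1$, $G_2$, and $\{x,y\}$; in each case it argues that a hypothetical Divider win in $G$ with fewer than $\lambda_G(s,t)$ agents would restrict to a Divider win in some $G_b$ with fewer than $\lambda_{G_b}(s,t)$ agents, contradicting the inductive hypothesis. If you want to salvage your Facilitator-strategy viewpoint, you will need to run the induction over this general decomposition rather than one anchored at $s,t$.
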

Recall that the polynomial time algorithm on the classes of trees, chordal graphs, and $P_5$-free graphs is obtained by proving  that the size of dynamic separator is same as that of separator.
In case of grids, we present a winning strategy for Divider for any non-trivial instances.
This makes grids unique graph class in which the problem admits polynomial time algorithm even when dynamic separator can be smaller than separator.

\textbf{Organization of the paper.}
After presenting technical preliminaries in~\Cref{sec:prelim}, we first describe the proof of~\Cref{thm:fvs-conp-hard} in~\Cref{sec:tree-width}, along with a separate discussion focused on the intuition for the proof.
We present \Cref{thm:fvs-w-hard} in \Cref{sec:feedback-vertex-set}.
The proof of~\Cref{thm:vc-fpt-no-poly} can be found in~\Cref{sec:vertex-cover}.
The polynomial time results are presented in~\Cref{sec:polycases}.




\section{Preliminaries}
\label{sec:prelim}

For a positive integer $q$, we denote {the} set $\{1, 2, \dots, q\}$ by $[q]$.
We use $\mathbb{N}$ to denote the collection of all non-negative integers.

\paragraph*{Graph theory}
We use standard graph-theoretic notation, and we refer the reader to~\cite{DBLP:books/daglib/Diestel12} for any undefined notation.
For an undirected graph $G$, sets $V(G)$ and $E(G)$ denote its set of vertices and edges, respectively.
We denote an edge with two endpoints $u, v$ as $uv$. Unless otherwise specified, we use $n$ to denote the number of vertices in the input graph $G$ of the problem under consideration.
Two vertices $u, v$ in $V(G)$ are \emph{adjacent} if there is an edge $uv$ {in $G$}.
The \emph{open neighborhood} of a vertex $v$, denoted by $N_G(v)$, is the set of vertices adjacent to $v$.
The \emph{closed neighborhood} of a vertex $v$, denoted by $N_G[v]$, is the set $N_G(v) \cup \{v\}$.
We say that a vertex $u$ is a \emph{pendant vertex} if $|N_G(v)| = 1$.
The \emph{degree} of a vertex $v$, denoted by $\degree_{G}(v)$, is equal to the number of vertices in the open neighbourhood of $v$, i.e., $\degree_{G}(v) = |N_{G}(v)|$.
We omit the subscript in the notation for neighborhood if the graph under consideration is clear.

For a subset $S$ of $V(G)$, we define $N[S] = \bigcup_{v \in S} N[v]$ and $N(S) = N[S] \setminus S$.
For a subset $F$ of edges, we denote by $V(F)$ the collection of endpoints of edges in $F$.
For a subset $S$ of $V(G)$ ({resp.} a subset $F$ of $E(G)$), we denote the graph obtained by deleting $S$ ({resp.} deleting $F$) from $G$ by $G - S$ ({resp.} by $G - F$).
We denote the subgraph of $G$ induced on the set $S$ by $G[S]$.

A graph is {\em connected} if there is a path between every pair of distinct vertices.
A subset $S \subseteq V(G)$ is said to be a \emph{connected set} if $G[S]$ is connected.

A \emph{simple path}, denoted by $P[u, v, d]$, is a non-empty graph $G$ of the form $V(G) = \{u, x_{1}, \ldots, x_{d}, v\}$, and $E(G) = \{ux_{1}, x_{1}x_{2}, \ldots, x_{d-1}x_{d}, x_{d}v\}$, where $u$, $v$, and all $x_{i}$'s are distinct.
The vertices $\{x_{1}, x_{2}, \ldots, x_{d}\}$ are the \emph{internal vertices} of $P[u, v, d]$, and
the vertices $\{x_{i} : \degree(x_{i}) > 2, i \in [d]\}$, i.e., internal vertices whose \emph{degree} is strictly greater than $2$ are the \emph{branching points} of $P[u, v, d]$.
We use $P[u, v, d_1] \circ P[v, w, d_2]$ to denote the unique simple path from $u$ to $w$ that contains $v$ and has $d_1 + d_2 + 1$ many internal vertices.

A set of vertices $Y$ is said to be {an} \emph{independent set} if no two vertices in $Y$ are adjacent.
For a graph $G$, a set $X \subseteq V(G)$ is said to be {a} \emph{vertex cover} if $V(G) \setminus X$ is an independent set.
A set of vertices $Y$ is said to be a \emph{clique} if any two vertices in $Y$ are adjacent.
A vertex cover $X$ is a \emph{minimum vertex cover} if for any other vertex cover $Y$ of $G$, we have $|X| \le |Y|$.
We denote by $\vc(G)$ the size of {a} minimum vertex cover of {a graph} $G$.
For a graph $G$, a set $X \subseteq V(G)$ is said to be {a} \emph{feedback vertex set} if $V(G) \setminus X$ is does not contain a cycle.
We denote by $\fvs(G)$ the size of {a} minimum feedback vertex set of {a graph} $G$.

A \emph{path decomposition} of a graph $G$ is a sequence $\calP = (X_{1}, X_{2}, \ldots, X_{r})$ of \emph{bags}, where $X_{i} \subseteq V(G)$ for each $i \in [r]$, such that the following conditions hold:
\begin{itemize}
    \item $\bigcup_{i = 1}^{r} X_{i} = V(G)$.
    In other words, every vertex of $G$ is in at least one bag.
    \item For every $uv \in E(G)$, there exists $\ell \in [r]$ such that the bag $X_{\ell}$ contains both $u$ and $v$.
    \item For every $u \in V(G)$, if $u \in X_{i} \cap X_{k}$ for some $i \le k$, then $u \in X_{j}$ also for each $j$ such that $i \le j \le k$.
    In other words, the indices of the bags containing $u$ form an interval in $[r]$.
\end{itemize}
The \emph{width} of a path decomposition $(X_{1}, X_{2}, \ldots, X_{r})$ is $max_{1 \le i \le r} |X_{i}| - 1$.
The \emph{pathwidth} of a graph $G$, denoted by $\pw(G)$, is the minimum possible width of a path decomposition of $G$.

A \emph{tree decomposition} of a graph $G$ is a pair $\calT = (T, \{X_{t}\}_{t \in V(T)})$, where $T$ is a tree whose every node $t$ is assigned a vertex subset $X_{t} \subseteq V(G)$, called a \emph{bag}, such that the following conditions hold:
\begin{itemize}
    \item $\bigcup_{t \in V(T)} X_{t} = V(G)$.
    In other words, every vertex of $G$ is in at least one bag.
    \item For every $uv \in E(G)$, there exists a node $t$ of $T$ such that bag $X_{t}$ contains both $u$ and $v$.
    \item For every $u \in V(G)$, the set $T_{u} = \{t \in V(T) : u \in X_{t}\}$, i.e., the set of nodes whose corresponding bags contains $u$, induces a connected subtree of $T$.
\end{itemize}
The \emph{width} of a tree decomposition $\calT = (T, \{X_{t}\}_{t \in V(T)})$ is $max_{t \in V(T)} |X_{t}| - 1$.
The \emph{treewidth} of a graph $G$, denoted by $\tw(G)$, is the minimum possible width of a tree decomposition of $G$.

A $M \times N$ \emph{grid} is the graph $G$ of the form $V(G) = \{(i, j) : i \in [M], j \in [N]\}$, and $E(G) = \{(i, j)(i^{\prime}, j^{\prime}) : |i - i^{\prime}| + |j - j^{\prime}| = 1, i, i^{\prime} \in [M], j, j^{\prime} \in [N]\}$.

{Let $X$ and $Y$ be multisets of vertices of a graph $G$ (i.e., $X$ and $Y$ can contain several copies of the same vertex).}
{We say that $X$ and $Y$ of the same size are \emph{adjacent} if there is a bijective mapping $\alpha: X \rightarrow Y$ such that for $x \in X$, either $x = \alpha(x)$ or $x$ and $\alpha(x)$ are adjacent in $G$.}

\paragraph*{Parameterized complexity}
\label{prelim:pc}

An instance of a parameterized problem $\Pi$ {consists} of an input $I$, which is an input of the non-parameterized version of the problem, and an integer $k$, which is called the \emph{parameter}.
A problem $\Pi$ is said to be \emph{fixed-parameter tractable}, or \FPT, if given an instance $(I,k)$ of $\Pi$, we can decide whether  $(I,k)$ is a \yes-instance of $\Pi$ in  time $f(k)\cdot |I|^{\calO(1)}$.
Here, {$f: \mathbb{N} \mapsto \mathbb{N}$} is some computable function {depending} only on $k$.
Parameterized complexity theory  provides tools to {rule out} the existence of \FPT\ algorithms under plausible complexity-theoretic assumptions.
For this, a hierarchy of parameterized complexity classes $\FPT \subseteq \W[1]\subseteq \W[2] \cdots \subseteq \XP$ was introduced, and it was conjectured that the inclusions are proper.
The most common way to show that it is unlikely that a parameterized problem {admits} an \FPT\ algorithm is to show that it is $\W[1]$ or $\W[2]$-\hard.
It is possible to use reductions analogous to the polynomial-time reductions employed in classical complexity.
Here, the concept of \W$[1]$-\hard ness replaces the one of \NP-\hard ness, and we need not only {to} construct an equivalent instance  \FPT\ time, but also {to} ensure that the size of the parameter in the new instance depends only on the size of the parameter in the original instance.
These types of reductions are called \emph{parameter preserving reductions}.
For {a} detailed introduction to parameterized complexity and related terminologies, we refer the reader to the recent book by Cygan et al.~\cite{DBLP:books/sp/CyganFKLMPPS15}.

A \emph{reduction rule} is a polynomial-time  algorithm that takes as input an instance of a problem and outputs another, usually reduced, instance.
A reduction rule said to be \emph{applicable} on an instance if the output {instance} and input instance are different.
A reduction rule is \emph{safe} if the input instance is a \yes-instance if and only if the output instance is a \yes-instance.

A \emph{kernelization} of a parameterized problem $\Pi_1$ is a polynomial algorithm that maps each instance $(I_1, k_1)$ of $\Pi_1$ to an instance $I$ of $\Pi_2$ such that
$(1)$ $(I, k)$ is a \yes-instance of $\Pi_1$ if and only if $I_2$ is a \yes-instance of $\Pi_2$, and
$(2)$ the size of  $I_2$ is bounded by $g(k)$ for a computable function $g(\cdot)$.
We say a compression is a \emph{polynomial compression}
If $g(\cdot)$ is a polynomial function, then we call it a \emph{polynomial kernel}.
It is known that a problem is \FPT\ if and only if it admits a kernel (See, for example, \cite[Lemma 2.2]{DBLP:books/sp/CyganFKLMPPS15}).

\paragraph*{Rendezvous Games with Adversaries}
\label{prelim:rendezvous}

{Recall that the game is played on a connected graph $G$, and $s$ and $t$ are initial positions of the agents of Facilitator.}
{Let also $k$ be the number of agents of Divider.}

{Notice that a placement of the agents of Facilitator is defined by a multiset of two vertices, as $R$ and $J$ can occupy the same vertex.}
{We denote by $\calF_G$ the family of all multisets of two vertices.}
{Similarly, a placement of $k$ agents of Divider is defined by a multiset of $k$ vertices, because several agents can occupy the same vertex.}
{Let $\calD_G^k$ be the family of all multisets of $k$ vertices.}
{We say that $F \in \calF_G$ and $D \in \calD_G^k$ are \emph{compatible} if $F \cap D = \phi$.}
{Notice that the number of pairs of compatible $F \in \calF_G$ and $D \in \calD_G^k$ is $n{n+k-2 \choose k} + {n \choose 2}{n+k-3 \choose k}$.}
{We denote by}
$$
{\calP_G^k = \{(F, D)|F \in \calF_G, D \in \calD_G^k \text{ s.t. } F \text{ and } D \text{ are compatible}\}}
$$
{the set of \emph{positions} in the game.}

{Formally, a \emph{strategy} of Facilitator for \textsc{Rendezvous} is a function $f: \calP_G^k \rightarrow \calF_G$ that maps $(F, D) \in \calP_G^k$ to $F^{\prime} \in \calF_G$ such that $F$ and $F^{\prime}$ are adjacent and $F^{\prime}$ is compatible with $D$.}
{In words, given a position $(F, D)$, Facilitator moves $R$ and $J$ from $F$ to $F^{\prime}$ if this is her turn to move.}
{Similarly, a \emph{strategy} of Divider is a function $d: \calP_G^k \rightarrow \calD_G^k$ that maps $(F, D) \in \calP_G^k$ to $D^{\prime} \in \calF_G^k$ such that $D$ and $D^{\prime}$ are adjacent and $D^{\prime}$ is compatible with $F$, that is, Divider moves his agents from $D$ to $D^{\prime}$ if this is his turm to move.}
{To accommodate the initial placement, we extend the definition of $d$ for the pair $(\{s, t\}, \phi)$ and let $d(\{s, t\}, \phi) = D^{\prime}$, where $D^{\prime} \in \calD_G^k$ is compatible with $\{s, t\}$.}

Another variant of the game is when the number of moves of the players is at most some parameter $\tau$.
Then Facilitator wins if $R$ and $J$ meet within the first $\tau$ moves, and Divider wins otherwise.
Thus the problem is:

\defproblemboxed{Rendezvous in Time}{A graph $G$ with two given vertices $s$ and $t$, and positive integers $k$ and $\tau$.}{Can Facilitator win on $G$ starting from $s$ and $t$ in at most $\tau$ steps against Divider with $k$ agents?}

Notice that, in the above problem, $\tau$ is part of the input.
When $\tau$ is a fixed constant, this generates a family of problems, one for each different value of $\tau$ referred as $\tau$-\textsc{Rendezvous in Time} problem. {The definitions of strategies for \textsc{Rendezvous in Time} are more complicated, because the decisions of the players also depend on the number of the current step.}
{A \emph{strategy} of Facilitator for \textsc{Rendezvous} is a family of functions $f_i: \calP_G^k \rightarrow \calF_G$ for $i \in \{1, \ldots, \tau\}$ such that $f_i$ maps $(F, D) \in \calP_G^k$ to $F^{\prime} \in \calF_G$, where $F$ and $F^{\prime}$ are adjacent and $F^{\prime}$ is compatible with $D$.}
{Facilitator uses $f_i$ for the move in the $i$-th step of the game.}
{A \emph{strategy} of Divider is a family of functions $d_i: \calP_G^k \rightarrow \calD_G^k$ for $i \in \{0, \ldots, \tau - 1\}$ such that for $i \in \{1, \ldots, \tau - 1\}$, $d_i$ maps $(F, D) \in \calP_G^k$ to $D^{\prime} \in \calF_G^k$, where $D$ and $D^{\prime}$ are adjancent and $D^{\prime}$ is compatible with $F$, and $d_0$ maps $(\{s, t\}, \phi)$ to $D^{\prime} \in \calD_G^k$ compatible with $\{s, t\}$ (slightly abusing notation we do not define $d_0$ for the elements of $\calP_G^k$).}
\section{co-para-NP-hardness Parameterized by FVS and Pathwidth}
\label{sec:tree-width}

In this section, we prove that \textsc{Rendezvous} is \paraNP-\hard\ when parameterized by the feedback vertex set number and the pathwidth of the input graph.
To do that, we present a parameter preserving reduction from the \textsc{3-Dimensional Matching} problem, which is known to be \NP-\hard~\cite[SP~1]{DBLP:books/fm/GareyJ79}.
For notational convenience, we work with the following definition of the problem.
An input consists of a universe $\calU = \{\alpha, \beta, \gamma\} \times [n]$, a family $\calF = \{A_1, A_2, \ldots, A_m\}$ of subsets of $\calU$ such that for every $j \in [m]$, set $A_j = \{(\alpha, a_1), (\beta, b_1), (\gamma, c_1)\}$ for some $a_1, b_1, c_1 \in [n]$.
The goal is to find a subset $\calF^{\prime} \subseteq \calF$ that covers $\calU$ (and contains exactly $n$ sets).

\paragraph*{Reduction}
The reduction takes as input an instance $(\calU, \calF)$ of \textsc{3-Dimensional Matching} and returns an instance $(G, s, t, k)$ of \textsc{Rendezvous}.
It defines $M = n^2 + m^2$ where $n = |\calU|/3$ and $m = |\calF|$. We construct the graph $G$ as follows: (c.f. \Cref{fig:tw-paraNP-hard-critical-vertices,fig:tw-paraNP-hard-force-gaurd-pos,fig:tw-paraNP-hard-encoding-sets,fig:tw-paraNP-hard-encoding-elements}).

\subparagraph*{The Base Gadget}
It starts by adding special vertices $s$ and $t$ and two more vertices $g_1$ and $g_2$, and makes them  common neighbours of $s$ and $t$.
We use $P[u, v, d]$ to denote a simple path from $u$ to $v$ that contains $d$ many internal vertices.
\begin{itemize}
\item For every $i \in [n]$\footnote{We use $i$ as well as  $a_1, b_1, c_1$ as running variables in set $[n]$.
We reserve later types of variables for the integer part of elements in sets $\calF$.}, it adds the following simple paths:
\begin{itemize}
\item $P[u_{i}^{0}, u_{i}^{m+1}, m]$,
\item $P[s, u_{i}^{0}, m]$, $P[s, u_{i}^{m+1}, m]$, $P[t, u_{i}^{0}, m]$, and $P[t, u_{i}^{m+1}, m]$.
\end{itemize}
\end{itemize}
See Figure~\ref{fig:tw-paraNP-hard-force-gaurd-pos} for an illustration.

\begin{figure}[h]
\begin{center}
\includegraphics[trim=0.5cm 4cm 0.0cm 2cm, scale=0.25]{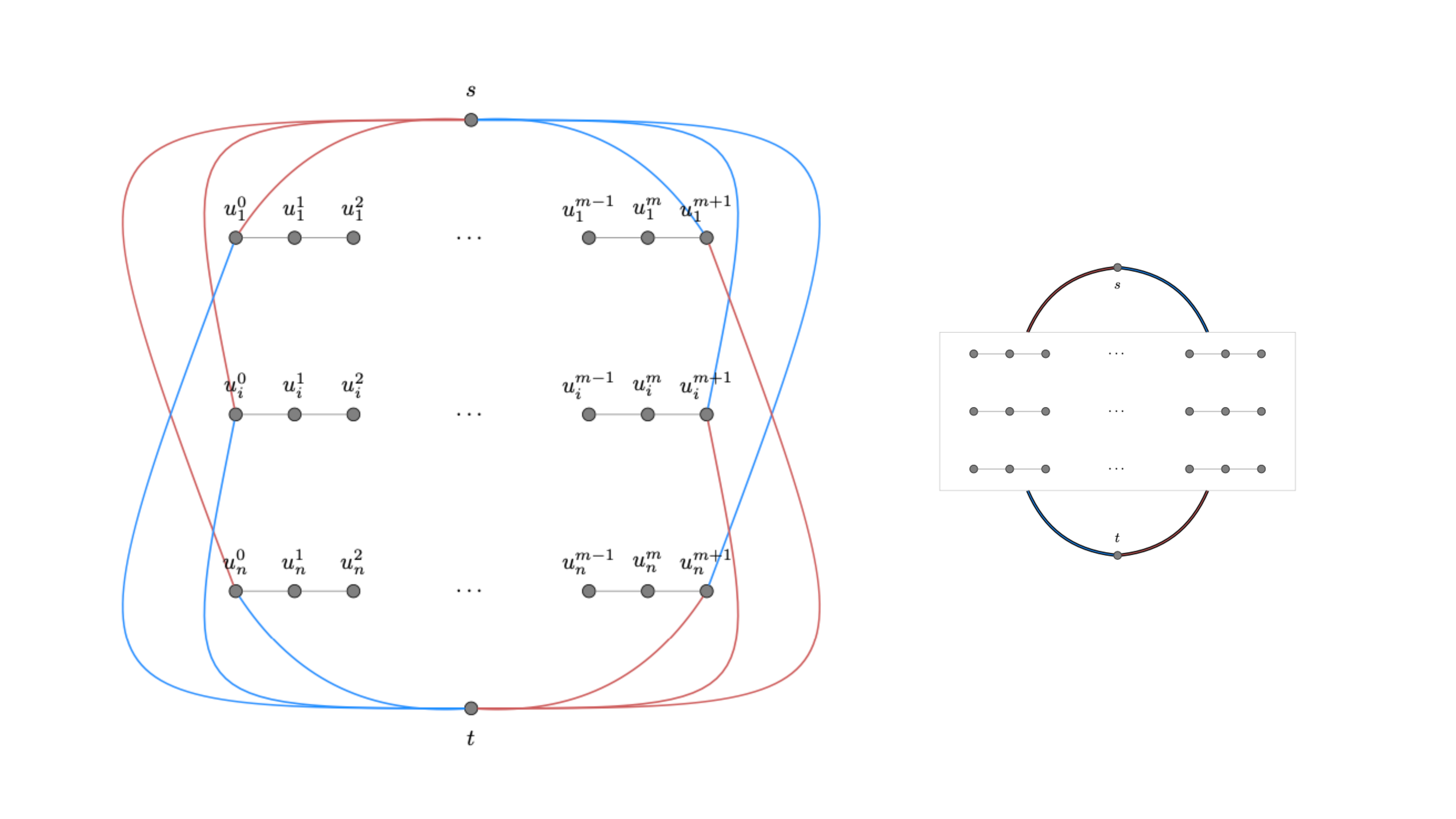}
\end{center}
\caption{
    (Left) The base gadget except the guard vertices $g_1$ and $g_2$ (which are not shown for clarity).
    Each red and blue path has $m$ internal vertices.
    (Right) Schematic representation.
    \label{fig:tw-paraNP-hard-force-gaurd-pos}}
\end{figure}

\subparagraph*{Encoding Elements}
The reduction constructs a symmetric graph to encode elements in $\calU$ and has `left-side' and `right-side.'
It starts by adding vertices $\{\alpha^{\ell}, \beta^{\ell}, \gamma^{\ell}\}$ and $\{\alpha^{r}, \beta^{r}, \gamma^{r}\}$.

\begin{itemize}
\item For every $i \in [n]$, it adds six vertices in $\{\alpha_{i}^{\ell}, \beta_{i}^{\ell}, \gamma_{i}^{\ell}\} \cup \{\alpha_i^{r}, \beta_{i}^{r}, \gamma_{i}^{r}\}$, and the following simple paths:
\begin{itemize}
\item $P[\alpha^{\ell}, \alpha_{i}^{\ell}, M^2 - M \cdot i]$,
$P[\beta^{\ell}, \beta^{\ell}_{i}, M^2 - M \cdot i]$,
$P[\gamma^{\ell}, \gamma_{i}^{\ell}, M^2 - M \cdot i]$,
\item $P[\alpha^{r}, \alpha_{i}^{r}, M^2 + M \cdot i]$,
$P[\beta^{r}, \beta^{r}_{i}, M^2 + M \cdot i]$, and $P[\gamma^{r}, \gamma^{r}_{i}, M^2 + M \cdot i]$.
\end{itemize}
Note that the number of internal vertices in paths from $\alpha^{\ell}$ to $\alpha^{\ell}_i$ and from $\alpha^{r}$ to $\alpha^{r}_i$, and similar such pairs,
are different and depend on $i$.
\item For every $i \in [n]$, it adds six vertices
$\{x_{i}^{\ell}, y_{i}^{\ell}, z_i^{\ell} \} \cup \{x_i^{r}, y_i^{r}, z_i^{r}\}$, and the following simple paths:
\begin{itemize}
\item
$P[x_i^{\ell}, \alpha_i^{\ell}, 2M^2 - 1]$, $P[y_i^{\ell}, \beta_i^{\ell}, 2M^2 - 1]$, $P[z_i^{\ell}, \gamma_i^{\ell}, 2M^2 - 1]$,
\item
$P[x_i^{r}, \alpha_i^{r}, 2M^2 - 1]$, $P[y_i^{r}, \beta_i^{r}, 2M^2 - 1]$, and $P[z_i^{r}, \gamma_i^{r}, 2M^2 - 1]$.
\end{itemize}
\end{itemize}
See Figure~\ref{fig:tw-paraNP-hard-encoding-elements} for an illustration.

\begin{figure}[h]
\begin{center}
\includegraphics[trim=3cm 4cm 0.5cm 3.5cm, scale=0.25]{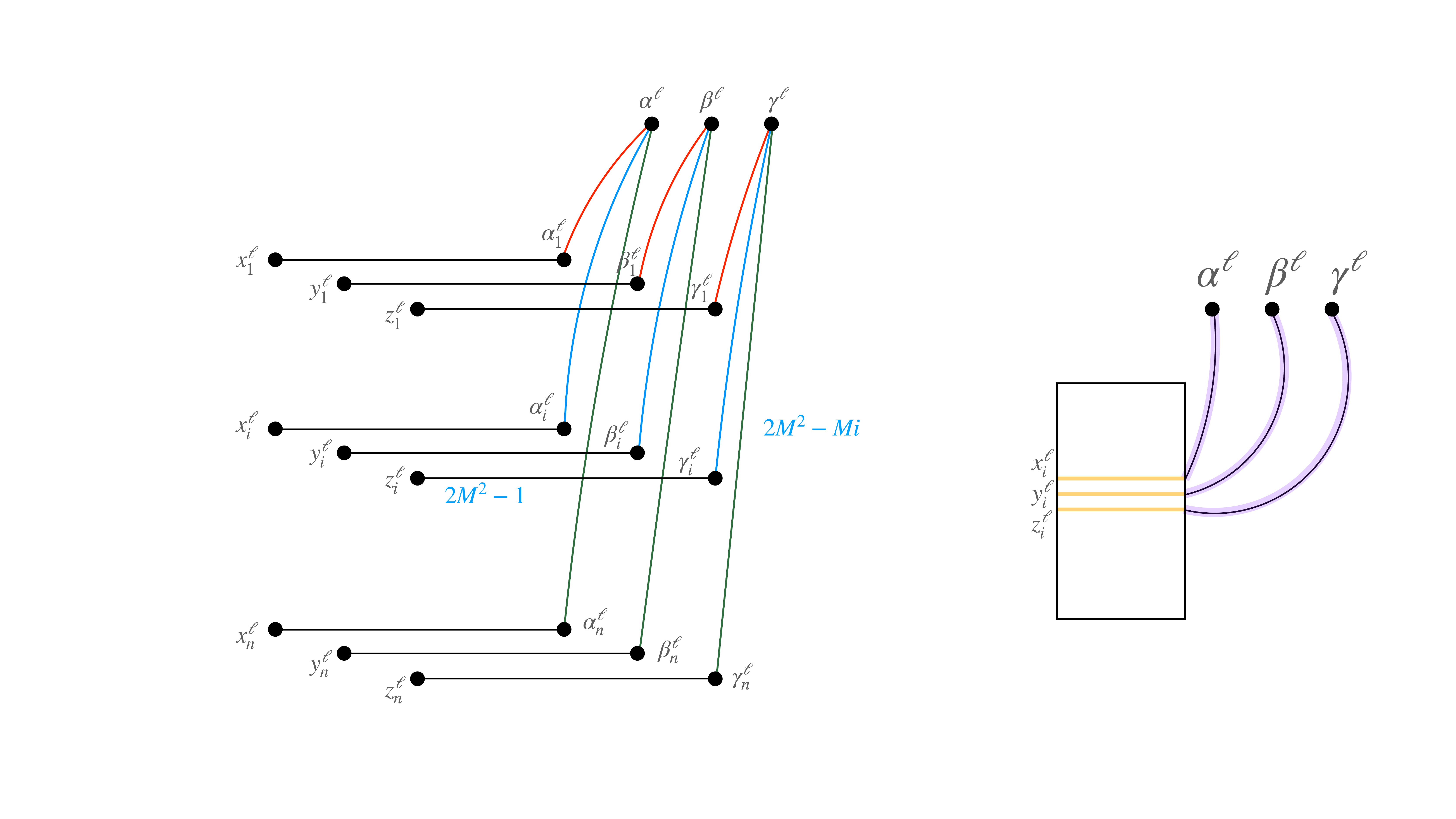}
\end{center}
\caption{
(Left) The left side of the gadget is added to encode elements in $\calU$.
The number of internal vertices in each red, blue, and green path depends on $i$.
The number of internal vertices in each yellow shaded path is $2M^2 - 1$.
(Right) Schematic representation of the gadget.
\label{fig:tw-paraNP-hard-encoding-elements}}
\end{figure}

\subparagraph*{Encoding sets}
The reduction adds simple paths to encode sets.
Consider set $A_j$ for some $j \in [m]$.
Suppose the internal vertices of $P[u_{i}^{0}, u_{i}^{m+1}, m]$ are denoted by $u_{i}^{j}$ for every $j \in [m]$, and $u_{i}^{0}$ is adjacent with $u_{i}^{1}$ and $u_{i}^{m+1}$ is adjacent with $u_{i}^{m}$.
All the vertices in $j^{th}$ 'column' corresponds to set $A_j$.
This, however, is not an encoding of set $A_j$ as it does not provide any information about its elements.
By the definition of the problem, set $A_j$ has an element of the form $(\alpha, a_1)$.
To encode this element, it adds $2n$ many paths connecting $j^{th}$ column to $\alpha^{\ell}$ and to $\alpha^{r}$.
The number of internal vertices in these paths depends on $a_1$.
We encode the remaining two elements in $A_j$ similarly.
We formalise this construction as follows:
\begin{itemize}
\item For every $j \in [m]$, suppose $A_{j} = \{(\alpha, a_1), (\beta, b_1), (\gamma, c_1)\}$.
Then, for every $i \in [n]$, the reduction adds the following six simple paths:
\begin{itemize}
\item $P[\alpha^{\ell}, u_{i}^{j},  M^2 + M \cdot a_1]$,
$P[\beta^{\ell},  u_{i}^{j}, M^2 + M \cdot b_1]$,
 $P[\gamma^{\ell}, u_{i}^{j}, M^2 + M \cdot c_1]$,
\item
$P[\alpha^{r}, u_{i}^{j}, M^2 - M \cdot a_1]$, $P[\beta^{r}, u_{i}^{j}, M^2 - M \cdot b_1]$, and $P[\gamma^{r}, u_{i}^{j}, M^2 - M \cdot c_1]$.
\end{itemize}
\end{itemize}
See Figure~\ref{fig:tw-paraNP-hard-encoding-sets} for an illustration.
\begin{figure}[h]
\begin{center}
\includegraphics[trim=0.5cm 3cm 0.5cm 2.5cm, scale=0.25]{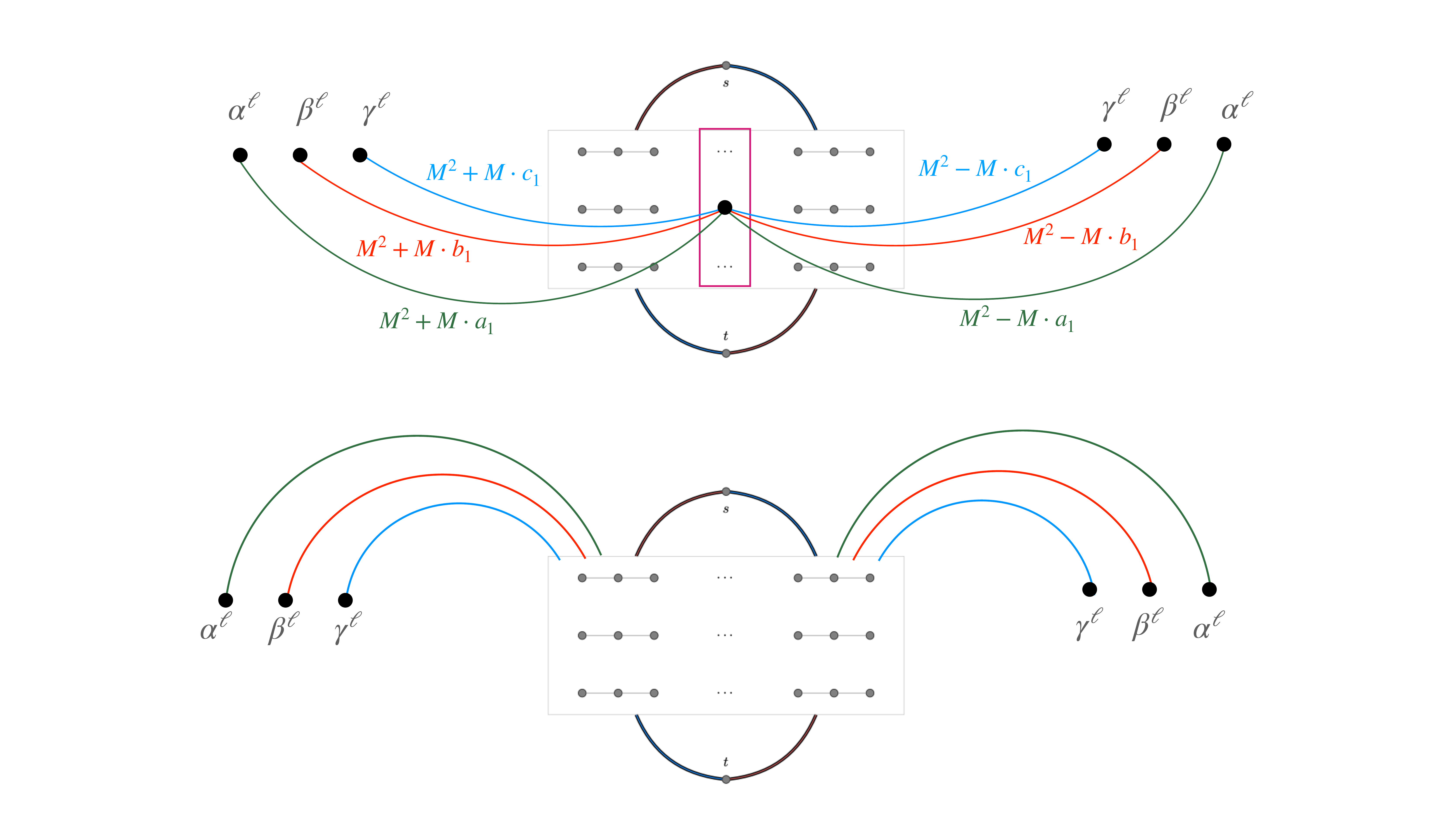}
\end{center}
\caption{
(Top) Vertices added to encode sets in $\calF$.
The number of internal vertices in the paths depend on elements in $A_j$ and are denoted next to it.
(Bottom) Schematic representation of the gadget used in subsequent figures.
\label{fig:tw-paraNP-hard-encoding-sets}}
\end{figure}

\subparagraph*{Critical vertices and connecting paths}
In the last phase of the reduction, it adds critical vertices and connect them to $\{s, t\}$, and also to $x$-type, $y$-type, and $z$-type ends of paths added while encoding elements in $\calU$.
\begin{itemize}
\item For the special vertex $s$, it adds critical vertices, say $s_{\alpha}^{\ell}$, $s_{\beta}^{\ell}$ and $s_{\gamma}^{\ell}$, on the left side.
\begin{itemize}
\item It adds $P[s, s_{\alpha}^{\ell}, 2M^2 + 1]$, $P[s, s_{\beta}^{\ell}, 2M^2 + 1]$, and $P[s, s_{\gamma}^{\ell}, 2M^2 + 1]$.
\item  For every $i \in [n]$, it adds $P[s_{\alpha}^{\ell}, x_{i}^{\ell}, 2M^2]$, $P[s_{\beta}^{\ell}, y_{i}^{\ell}, 2M^2]$, and $P[s_{\gamma}^{\ell}, z_{i}^{\ell}, 2M^2]$.
\end{itemize}
It adds the other critical vertices and paths symmetrically.
We present them for the sake of completeness.

For the special vertex $s$, it adds critical vertices, say $s_{\alpha}^{r}$, $s_{\beta}^{r}$, and $s_{\gamma}^{r}$, on the right side.
\begin{itemize}
\item It adds $P[s, s_{\alpha}^{r}, 2M^2 + 1]$, $P[s, s_{\beta}^{r}, 2M^2 + 1]$, and $P[s, s_{\gamma}^{r}, 2M^2 + 1]$.
\item For every $i \in [n]$, it adds $P[s_{\alpha}^{r}, x_{i}^{r}, 2M^2]$, $P[s_{\beta}^{r}, y_{i}^{r}, 2M^2]$, and $P[s_{\gamma}^{r}, z_{i}^{r}, 2M^2]$.
\end{itemize}

For the special vertex $t$, it adds critical vertices, say $t_{\alpha}^{\ell}$, $t_{\beta}^{\ell}$, $t_{\gamma}^{\ell}$, on the left side.
\begin{itemize}
\item It adds $P[t, t_{\alpha}^{\ell}, 2M^2 + 1]$, $P[t, t_{\beta}^{\ell}, 2M^2 + 1]$, and $P[t + t_{\gamma}^{\ell}, 2M^2 + 1]$.
\item For every $i \in [n]$, it adds $P[t_{\alpha}^{\ell}, x_{i}^{\ell}, 2M^2]$, $P[t_{\beta}^{\ell}, y_{i}^{\ell}, 2M^2]$, $P[t_{\gamma}^{\ell}, z_{i}^{\ell}, 2M^2]$,
\end{itemize}

For the special vertex $t$, it adds critical vertices, say $t_{\alpha}^{r}$, $t_{\beta}^{r}$, and $t_{\gamma}^{r}$, on the right side.
\begin{itemize}
\item It adds $P[t, t_{\alpha}^{r}, 2M^2 + 1]$, $P[t, t_{\beta}^{r}, 2M^2 + 1]$, and $P[t, t_{\gamma}^{r}, 2M^2 + 1]$.
\item For every $i \in [n]$, it adds $P[t_{\alpha}^{r}, x_{i}^{r}, 2M^2]$, $P[t_{\beta}^{r}, y_{i}^{r}, 2M^2]$, and $P[t_{\gamma}^{r}, z_{i}^{r}, 2M^2]$.
\end{itemize}
\end{itemize}

\begin{figure}[h]
\begin{center}
\includegraphics[trim=2.5cm 3cm 0.5cm 2.5cm, scale=0.25]{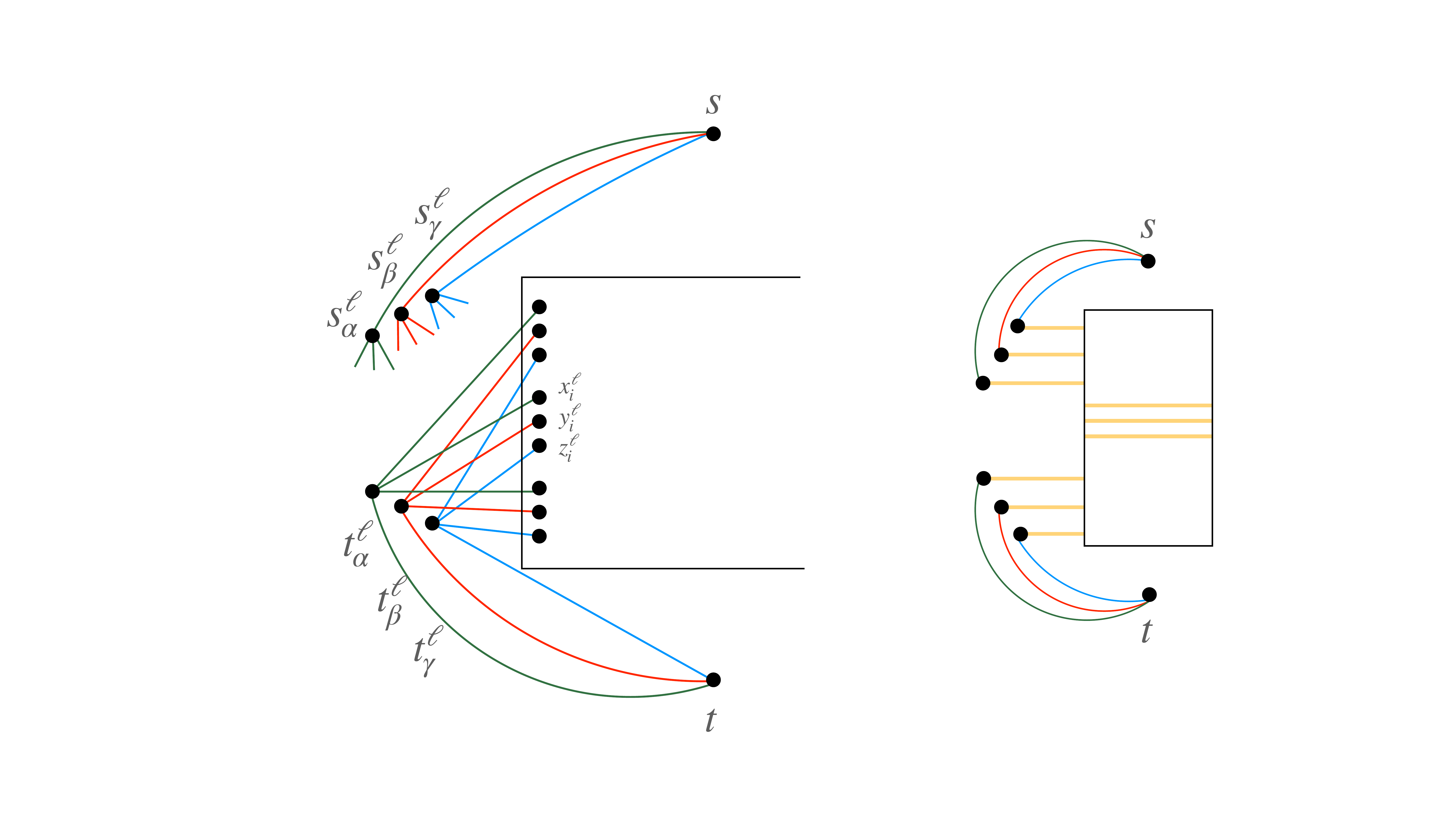}
\end{center}
\caption{
(Left)
Critical vertices added by the reduction.
The number of internal vertices in the paths are fixed ($2M^2 + 1$ or $2M^2$).
(Right) Schematic representation of the gadget.
\label{fig:tw-paraNP-hard-critical-vertices}}
\end{figure}

This completes the construction of the graph $G$.
See Figure~\ref{fig:tw-paraNP-hard-overview} for the overview of the constructed graph.
The reduction sets $k = n + 2$ and returns $(G, s, t, k)$ as the reduced instance of \textsc{Rendezvous}.

\paragraph*{Intuition for the correctness}
We present an intuition for the correctness of the reduction in the reverse direction.
In other words, we state how the initial positions of the Divider's agent correspond to sets and the elements they can cover.
We start with determining the possible initial positions.

As $g_1, g_2$ are common neighbors of  $s$ and $t$, Divider needs to put two agents on $g_1$ and $g_2$.
For the remaining $n$ agents, consider the paths $P[s, u_{i}^{0}, m]$ and $P[u_{i}^{0}, t, m]$ or
$P[s, u_{i}^{m + 1}, m]$ and $P[u_{i}^{m + 1}, t, m]$
for every $i \in [n]$.
Facilitator can move both Romeo and Juliet to $u_i^{0}$ or $u_i^{m+1}$ in $m$ steps.
Hence, Divider needs to place remaining $n$ agents at the positions that are at distance at most  $m$ simultaneously from $u_i^0$ and $u_i^{m+1}$.
We ensure that he needs to place an agent on an internal vertex of $P[u^0_i, u^{m+1}_i, m]$ for every $i \in [n]$.
This will correspond to selecting a set in $\calF$ in a solution.
Formally, an agent at $u^{j}_i$ for some $j \in [m]$ corresponds to selecting $A_j$ in the cover of $\calU$.
As there are $n$ 'rows', this will correspond to selecting $n$ (different) sets from $\calF$.
Hence, the initial position of the Divider's agent will correspond to a collection of sets in $\calF$.

Suppose for every $i \in [n]$, vertices in $\{x^{\ell}_i, x^r_i\}$ correspond to element $(\alpha, i) \in \calU$.
Similarly, vertices in $\{y^{\ell}_i, y^{r}_i\}$ correspond to $(\beta, i)$, and vertices in $\{z^{\ell}_i, z^{r}_{i}\}$ correspond to $(\gamma, i)$.
We say $(\alpha, i)$ is covered if Divider can prevent Facilitator from moving both Romeo and Juliet at $x^{\ell}_i$ as well as at $x^r_i$.

From the Facilitator's preservative, she has $6n$ possible meeting points of the above form.
She can make these choices in two phases.
In the first phase, she can decide to move Romeo towards one of the six vertices in  $\{s^{\ell}_{\alpha}, s^{\ell}_{\beta}, s^{\ell}_{\gamma} \} \cup \{s^{r}_{\alpha}, s^{r}_{\beta}, s^{r}_{\gamma} \}$.
To win, she will have to move Juliet towards the corresponding vertices with respect to $t$.
Suppose she moves Romeo towards $s^{\ell}_{\alpha}$ and Juliet towards $t^{\ell}_{\alpha}$, i.e., Romeo along $P[s, s_{\alpha}^{\ell}, 2M^2 + 1]$ and Juliet along $P[t, t_{\alpha}^{\ell}, 2M^2 + 1]$.
She can move Romeo at $s^{\ell}_{\alpha}$ and Juliet at $t^{\ell}_{\alpha}$ in $2M^2 + 2$ steps.
At this point, she can make one of the $n$ choices and decide to move both Romeo and Juliet towards $x^{\ell}_i$ for some $i \in [n]$, i.e., Romeo along $P[s_{\alpha}^{\ell}, x_{i}^{\ell}, 2M^2]$ and Juliet along $P[t_{\alpha}^{\ell}, x_i^{\ell}, 2M^2]$ for some $i \in [n]$.
See Figure~\ref{fig:tw-paraNP-hard-intuition} for relevant vertices.

\begin{figure}[h]
\begin{center}
\includegraphics[trim=3cm 6.5cm 0.5cm 3cm, scale=0.25]{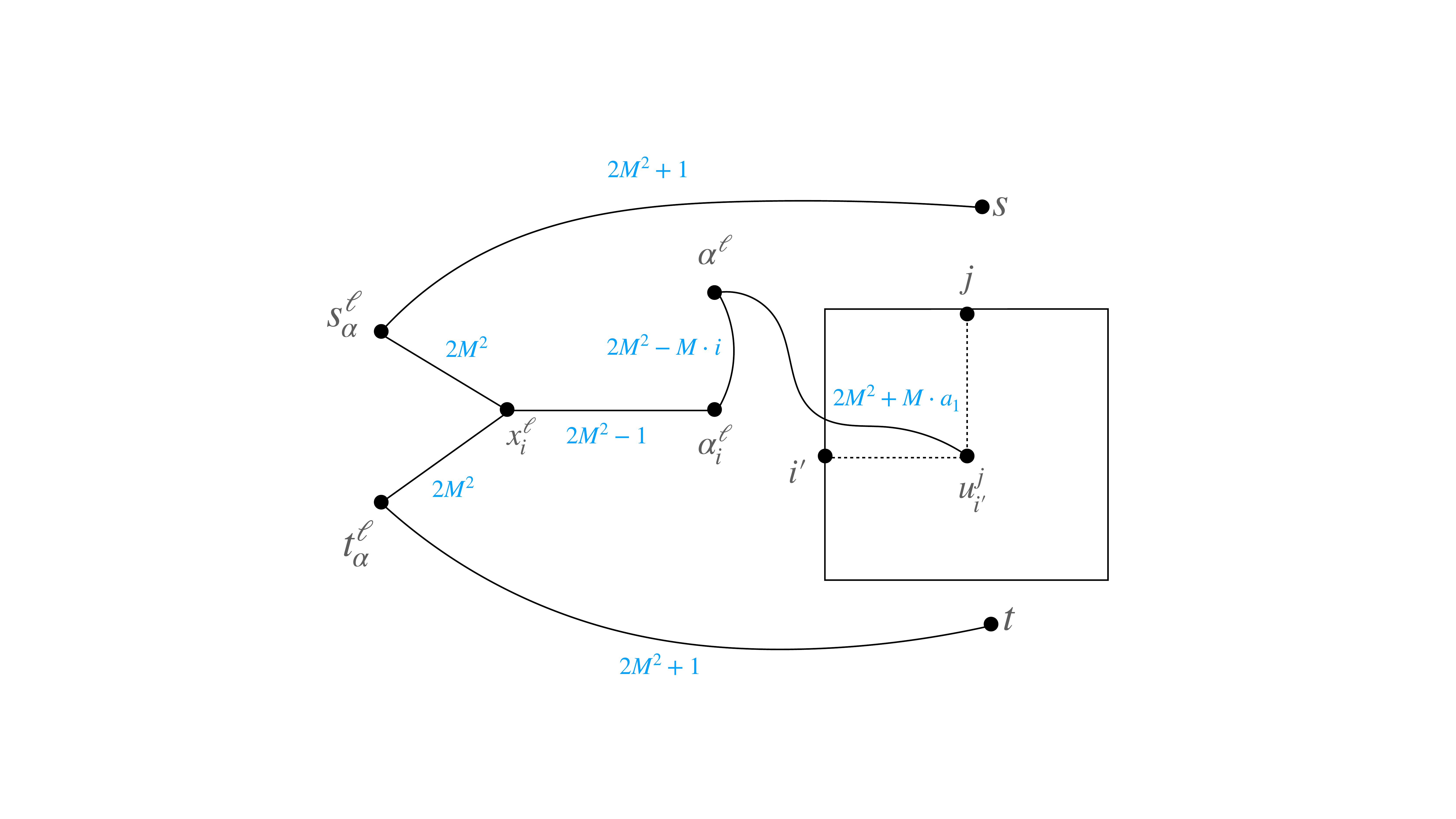}
\end{center}
\caption{
Vertices mentioned while presenting the intuition.
The vertices near the paths indicate the number of internal vertices.
Set $A_j$ contains $(\alpha, a_1)$.
\label{fig:tw-paraNP-hard-intuition}}
\end{figure}

From the Divider's perspective, he can see the first choice made by Facilitator.
However, he has no information about her second choice until next $2M + 2$ steps, i.e., until she moves Romeo at $s^{\ell}_{\alpha}$ and Juliet at $t^{\ell}_{\alpha}$.
Note that Facilitator can move Romeo from $s_{\alpha}^{\ell}$ to $x^{\ell}_i$ and Juliet from $t_{\alpha}^{\ell}$ to $x^{\ell}_i$ in $2M^2 + 1$ steps.
Divider can move an agent from $\alpha^{\ell}_i$ to $x^{\ell}_i$ in $2M^2$ steps.
Considering the initial positions of agents, he needs to ensure that one of its agents is present on $\alpha^{\ell}_{i}$ for \emph{every} $i \in [n]$ in $2M^2 + 2$ steps.
For every $i \in [n]$, he needs to place an agent at $u^{j}_{i'}$ for some $j \in [m]$ and $i' \in [n]$ that he can move it to $\alpha^{\ell}_i$ in $2M^2 + 2$ steps.
We remark that $i$ may not be equal to $i'$.

The only feasible way to do so is by moving the agent from $u^{j}_{i'}$ to $\alpha^{\ell}$ and then move it from $\alpha^{\ell}$ to $\alpha^{\ell}_i$.
Suppose $(\alpha, a_1) \in A_j$ for some $a_1 \in [n]$.
Recall that the number of internal vertices of path from $u^{j}_{i'}$ to $\alpha^{\ell}$ is $M^2 + M \cdot a_1$ where as that of the path from $\alpha^{\ell}$ to $\alpha^{\ell}_i$ is $M^2 - M \cdot i$.
Formally, the number of internal vertices in the path $P[u^j_{i'}, \alpha^{\ell}, M^2 + M\cdot a_1] \circ P[\alpha^{\ell}, \alpha^{\ell}_i, M^2 - M \cdot i]$ is $(M^2 + M \cdot a_1) + 1 + (M^2 - M \cdot i) = 2M^2 + 1 + M^2 \cdot (a_1 - i)$.
This implies that Divider can move an agent from $u^{j}_{i'}$ to $\alpha^{\ell}_i$ in $2M^2 + 2 + M^2 \cdot (a_1 - i)$ steps.
Hence, for \emph{every} $i \in [n]$, Divider should place an agent at $u^{j}_{i'}$ for some $j \in [m]$ and $i' \in [n]$ such that for $(\alpha, a_1) \in A_j$, we have $a_1 \le i$.
Using identical arguments and considering the number of internal vertices on the right side, we prove that for \emph{every} $i \in [n]$, he needs to place an agent at $u^{j'}_{i''}$ for some $j' \in [m]$ and $i'' \in [n]$ such that for $(\alpha, a_2) \in A_j$, we have $a_2 \ge i$.
Combining these two arguments, Divider needs to place an agent at $u^{j}_{i'}$ such that for $(\alpha, a_1) \in A_j$, we have $a_1 = i$.
Moving the agent from this position will prevent Facilitator from moving both Romeo and Juliet at $x^{\ell}_i$ and $x^{r}_i$.
This corresponds to selecting a set from $\calF$ to covers element $(\alpha, i) \in \calU$.
This concludes the intuition for the correctness of the reduction.

Consider set $S := \{s, t\} \cup \{s_{\alpha}^{\ell}, s_{\beta}^{\ell}, s_{\gamma}^{\ell}\} \cup \{\alpha^{\ell}, \beta^{\ell}, \gamma^{\ell}\} \cup \{\alpha^{r}, \beta^{r}, \gamma^{r}\}$ in $G$.
It is easy to verify that $G - S$ is a forest, i.e., the feedback vertex set number of $G$ is at most $14$.
Moreover, every connected component of $G - S$ is either a path or a subdivided caterpillar.
The paths correspond to the paths added while encoding elements in $\calU$ or while adding the critical paths.
The subdivided caterpillars correspond to the base gadgets, and the path added while encoding sets in $\calF^{\prime}$.
Note that the \emph{spine} of the caterpillar is the path $P[u^0_i, u^{m+1}_i, m]$ for some $i \in [n]$ added as a part of base gadget.
This implies that the pathwidth of the resulting graph is at most $16$.

We now present arguments formalizing the ideas described above.

\begin{figure}[h!]
\begin{center}
\includegraphics[scale=0.25]{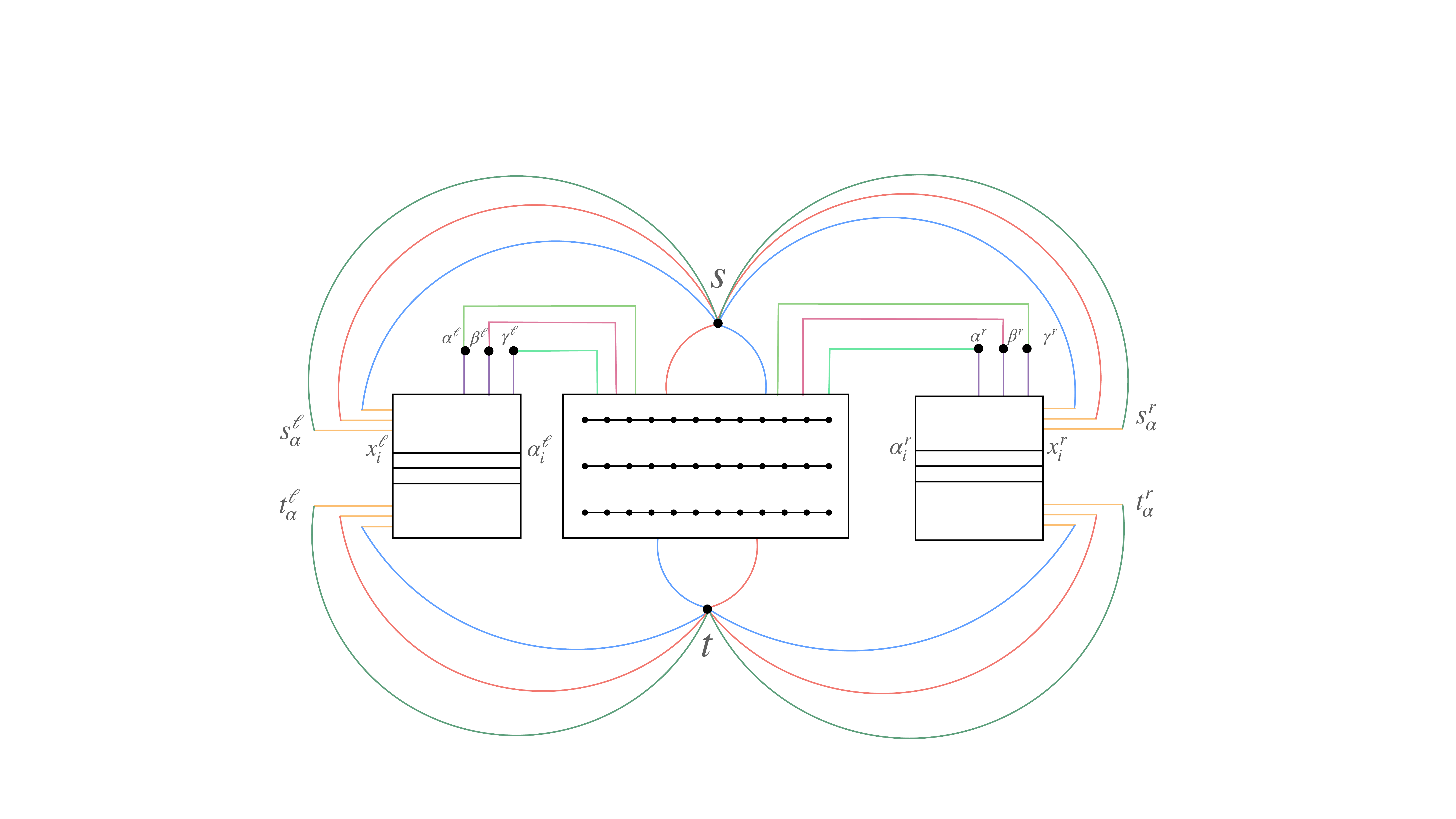}
\end{center}
\caption{
Overview of the reduction in Section~\ref{sec:tree-width}.
\label{fig:tw-paraNP-hard-overview}}
\end{figure}

\begin{lemma}
    \label{lemma:tw-forward-correct}
    If $(\calU, \calF)$ is a \yes-instance of {\sc 3-Dimensional Matching}, then $(G, s, t, n + 2)$ is a \no-instance of {\sc Rendezvous}.
    \end{lemma}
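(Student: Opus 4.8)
The plan is to convert a perfect matching $\calF' \subseteq \calF$ into a winning strategy for Divider with $k = n+2$ agents, thereby certifying that $(G,s,t,n+2)$ is a \no-instance. The combinatorial fact that drives everything is this: since $\calF'$ consists of exactly $n$ sets partitioning $\calU = \{\alpha,\beta,\gamma\}\times[n]$, the $\alpha$-values of the sets in $\calF'$ form a permutation of $[n]$, and likewise for the $\beta$- and $\gamma$-values. Divider's opening is dictated by $\calF'$: place one agent on each of $g_1$ and $g_2$, fix an arbitrary bijection between $\calF'$ and the rows $[n]$, and for the set $A_j\in\calF'$ assigned to row $i'$ place one agent on the internal vertex $u_{i'}^{j}$ of $P[u_{i'}^0,u_{i'}^{m+1},m]$. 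Thus every row carries exactly one agent and the multiset of columns used is $\{\,j : A_j\in\calF'\,\}$.

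First I would record the cheap blocks. The agents on $g_1,g_2$ never move; since $g_1,g_2$ are adjacent only to $s$ and $t$, this permanently forbids a meeting on a guard vertex. While Facilitator keeps her agents inside the base gadget, I would argue that the single agent on row $i'$ blocks both $u_{i'}^0$ and $u_{i'}^{m+1}$ (and every internal row vertex): Romeo and Juliet each need $m+1$ moves to reach either endpoint from $s$ and $t$, whereas the row agent sits at distance at most $m$ from each endpoint along $P[u_{i'}^0,u_{i'}^{m+1},m]$, so — crucially, since Divider moves second — he can always occupy whichever endpoint Facilitator's two agents are converging on, mirroring her commitment a full move early.

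The heart of the proof is the response to the $6n$ genuine meeting points $x_i^{\ell/r},y_i^{\ell/r},z_i^{\ell/r}$. A meeting-seeking Facilitator must, in round one, send \emph{both} Romeo and Juliet toward the same one of the six critical vertices $s_\bullet^{\bullet}$, $t_\bullet^{\bullet}$ (otherwise the two agents head for disjoint regions and can never meet), so Divider learns the \emph{type} — which of $\alpha,\beta,\gamma$ and which side $\ell/r$ — at once, though not the index $i$. The point is that he does not need it: on seeing type $(\alpha,\ell)$, he sends the agent whose set has $\alpha$-value $a$ toward $\alpha_a^\ell$ through the hub $\alpha^\ell$ and on to $x_a^\ell$; because the $\alpha$-values realize all of $[n]$, these natural targets range over the whole family $\{\alpha_1^\ell,\dots,\alpha_n^\ell\}$, so whatever $i$ Facilitator eventually chooses, the agent covering $\alpha_i^\ell$ is already en route, and the same $n$ agents are simply repurposed for whichever of the six types a given play commits to. The distance bookkeeping closes it: the path $P[u_{i'}^j,\alpha^\ell,M^2+M a]\circ P[\alpha^\ell,\alpha_i^\ell,M^2-Mi]$ has length $2M^2+2+M(a-i)$, which for $a=i$ is exactly $2M^2+2$, so this agent reaches $\alpha_i^\ell$ by round $2M^2+2$ and hence $x_i^\ell$ by round $4M^2+2$, one round before Facilitator can bring both agents onto $x_i^\ell$; the right side is symmetric, using the inequality $a\ge i$ on the right-hand paths.

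I expect the main obstacle to be two intertwined issues. First, the timing is razor-thin: the agent assigned to $\alpha_i^\ell$ has zero slack, so I must argue that Divider starts toward the shared hub already in round one and never wastes a move — here the fact that the row-to-hub segment is a common prefix for all targets of a fixed type, together with Divider's second-move advantage, is exactly what makes the budget of $2M^2+2$ achievable. Second, I must rule out that Facilitator profits from \emph{feints} — committing to a type, luring the row agents off their rows, then retreating to strike a $u$-endpoint or switch types. This is where the scale separation $M = n^2+m^2 \gg m$ enters: a feint of length $c$ costs Facilitator at least $2c$ extra moves to redirect to a far region, whereas Divider (moving second) can retrace any agent's $c$ steps back onto its row in time to restore the $u$-block, and can reroute through a hub well within the $\Theta(M^2)$ budget. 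I would finish by verifying that no other vertex can serve as a meeting point, which follows from the same bookkeeping: for every remaining vertex the $s$- and $t$-distances are either unequal or large enough to leave Divider time to interpose one of his $n+2$ agents.
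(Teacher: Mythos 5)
Your proposal follows the paper's proof in all of its essential components: the initial placement derived from the perfect matching (one agent per row $i'$ on $u_{i'}^{j}$ for the set $A_j \in \calF'$ assigned to that row, plus two agents on $g_1,g_2$), the observation that the $\alpha$-, $\beta$-, $\gamma$-values of $\calF'$ each form a permutation of $[n]$ so that the $n$ row agents can simultaneously cover all $n$ targets $\alpha_1^{\ell},\dots,\alpha_n^{\ell}$ of whichever type Facilitator commits to, and the exact distance bookkeeping ($2M^2+2+M(a-i)$, equal to $2M^2+2$ when $a=i$, followed by the $2M^2$ versus $2M^2+1$ race to $x_i^{\ell}$). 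The one place you genuinely diverge is precisely the point you flag as your main obstacle: you keep the two guard agents parked on $g_1,g_2$ forever and propose to defeat feints by having the row agents retrace their steps, which forces you into a delicate invariant argument (a single row agent must reactively defend both $u_{i'}^{0}$ and $u_{i'}^{m+1}$ with zero slack, and you must rule out one Facilitator agent slipping into the gadget interior while the other is elsewhere). The paper resolves this differently and more cleanly: $D_{n+1}$ and $D_{n+2}$ leave $g_1,g_2$ on the very first move, occupy $s$ and $t$, and chase Romeo and Juliet down whichever simple path each entered. This makes every Facilitator advance irreversible, so each of her agents is confined to a single path after its first move, feints and retreats are impossible by construction, and the case analysis collapses to checking the race at the one convergence point of the two chosen paths. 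Your retracing argument can likely be completed, but you should either carry out the potential-function bookkeeping it requires or adopt the chasing mechanism, which eliminates the obstacle outright.
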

    \begin{proof}
    We show that if $(\calU, \calF)$ is a \yes-instance of \textsc{3-Dimensional Matching}, then Divider with $n + 2$ agents can win in Rendezvous Game with Adversaries.
    Recall that $\calU = \{\alpha, \beta, \gamma\} \times [n]$, and $\calF = \{A_1, A_2, \ldots, A_m\}$ such that $A_j = \{(\alpha, a_1), (\beta, b_1), (\gamma, c_1)\}$ for some $a_1, b_1, c_1 \in [n]$.
    Let $\calF^{\prime} = \{A_{j_1}, A_{j_2}, \ldots, A_{j_n}\} \subseteq \calF$ be the solution for \textsc{3-Dimensional Matching} such that $j_1 < j_2 < \cdots < j_n$.
    Since $\calF^{\prime}$ covers every element of $\calU$, each element of $\calU$ appears in exactly one of the set in $\calF^{\prime}$.

    We describe a {winning strategy} for Divider with the agents $D_1, D_2, \ldots, D_{n+2}$.
    Initially, he puts $D_i$ in the vertex $u_{i}^{j_i}$, for every $i \in [n]$, and $D_{n + 1}$ and $D_{n + 2}$ in $g_1$ and $g_2$ respectively.
    He does not move agents $D_{1}, \ldots, D_{n + 2}$, until Facilitator moves Romeo or Juliet from $s$ or $t$, respectively.
    Suppose Facilitator moves Romeo from $s$ (she may or may not move Juliet from $t$).
    By the construction, she can move Romeo either
    on the paths $P[s, u_{i}^{0}, m]$, $P[s, u_{i}^{m + 1}, m]$ for some $i \in [n]$ or
    on one of the following paths: $P[s, s_{\alpha}^{\ell}, 2M^2 + 1]$, $P[s, s_{\beta}^{\ell}, 2M^2 + 1]$, $P[s, s_{\gamma}^{\ell}, 2M^2 + 1]$, $P[s, s_{\alpha}^{r}, 2M^2 + 1]$, $P[s, s_{\beta}^{r}, 2M^2 + 1]$, or $P[s, s_{\gamma}^{r}, 2M^2 + 1]$.

    Suppose Facilitator moves Romeo from $s$ to a vertex on the path $P[s, u_{i}^{0}, m]$ for some $i \in [n]$.
    Divider moves $D_{n + 1}$ from $g_1$ to $s$ and then towards $u_{i}^{0}$ as she moves Romeo towards $u_{i}^{0}$.
    He also moves $D_i$ to $u_{i}^{0}$ in at most $j_{i}$ steps along the path $P[u_{i}^{0}, u_{i}^{m + 1}, m]$.
    Facilitator needs at least $m + 1$ steps to move
    both Romeo and Juliet in $u_{i}^{0}$ starting from $s$ and $t$ respectively.
    As $j_{i} \le m$, Divider can move $D_{i}$ to $u_{i}^{0}$ before Facilitator can move both Romeo and Juliet to $u_{i}^{0}$.
    Hence, he can block Romeo by $D_i$ and $D_{n + 1}$ on the path $P[s, u_{i}^{0}, m]$.
    Divider keeps moving $D_{i}$ and $D_{n + 1}$ towards Romeo's position and in at most $j_{i} + m - 1$ steps Facilitator can not move Romeo.
    This implies Divider wins by keeping Romeo in its current position with its neighbors occupied by $D_{i}$ and $D_{n + 1}$.
    The argument also follows when Facilitator moves Romeo from $s$ to a vertex on the path $P[s, u_{i}^{m + 1}, m]$ for some $i \in [n]$ since Divider can move $D_{i}$ to $u_{i}^{m + 1}$ in at most $m - j_{i} + 1$ ($\le m$) steps.

    Suppose Facilitator moves Romeo from $s$ to a vertex on the path $P[s, s_{\alpha}^{\ell}, 2M^2 + 1]$.
    In this case, the remaining strategy for Divider is guided by the $\alpha$-type elements in the universe and in sets.
    Recall that each element of $\calU$ appears in exactly one of the set in $\calF^{\prime}$.
    This implies that for every $a_1 \in [n]$, there is a unique set in $\calF'$ that contains element $(\alpha, a_1)$.
    We define function $\psi_{\alpha}: [n] \mapsto [n]$ with respect to $\alpha$.
    More formally, $\psi_{\alpha}(a_1) = i$ if $(\alpha, a_1)$ is contained in $A_{j_i}$ in $\calF'$.
    By the construction, Divider can move agent $D_i$ from $u^{j_i}_i$ to $\alpha_{a_1}^{\ell}$
    in at most $(M^2 - M \cdot a_{1}) + 1 + (M^2 + M \cdot a_{1}) + 1$ steps through the path $P[u_{i}^{j_{i}}, \alpha^{\ell}, M^2 - M \cdot a_{1}] \circ P[\alpha^{\ell}, \alpha_{a_{1}}^{\ell}, M^2 + M \cdot a_{1}]$.
    Hence, Divider can move $D_{i}$ to $\alpha_{a_{1}}^{\ell}$ in at most $2M^2 + 2$ steps.
    Hence, for every $a_1 \in [n]$, Divider can move the agent at $u^{j_i}_i$ to $\alpha_{a_1}^{\ell}$ in $2M^2 + 2$ steps where
    $i = \psi_{\alpha}(a_1)$.
    For notational convenience, we re-write the previous statement while changing the running variable from $a_1$ to $i$.
    Divider can move agents $D_1, D_2, \dots, D_n$ in $2M^2 + 2$ steps such that for every $i \in [n]$, one of its agent is present in $\alpha^{\ell}_i$.
    As in the previous case, he can move $D_{n + 1}$ from $g_1$ to $s$ and then keep moving towards $s_{\alpha}^{\ell}$ as Facilitator moves Romeo towards $s_{\alpha}^{\ell}$.
    He can move $D_{n+2}$ in a similar manner with respect to Juliet.

    After the first move, Facilitator can not move back Romeo and Juliet towards $s$, $t$ respectively, because of the agents $D_{n + 1}$ and $D_{n + 2}$.
    However, she will need at least $2M^2 + 2$ steps to move Romeo to $s_{\alpha}^{\ell}$ and Juliet to $t_{\alpha}^{\ell}$ starting from $s$ and $t$.
    If she moves Romeo to $s_{\alpha}^{\ell}$ and Juliet to $t_{\alpha}^{\ell}$, then she can only move Romeo and Juliet towards $x_{i}^{\ell}$ for some $i \in [n]$.
    However, she will need at least $2M^2 + 1$ steps to move them along the path $P[s_{\alpha}^{\ell}, x_{i}^{\ell}, 2M^2]$ and $P[t_{\alpha}^{\ell}, x_{i}^{\ell}, 2M^2]$, respectively.

    Divider can move his agent from $\alpha^{\ell}_r$ to $x_{i}^{\ell}$ in at most $2M^2$ ($< 2M^2 + 1$) steps along the path $P[\alpha_{i}^{\ell}, x_{i}^{\ell}, 2M^2 - 1]$ for every $i \in [n]$.
    Hence, he can place his agent at $x_{i}^{\ell}$ before Facilitator can move Romeo and Juliet to that place.
    He can keep moving $D_{n + 1}$ towards the position of Romeo and similarly $D_{n + 2}$ towards the position of Juliet.
    Hence, he can block Romeo by the agents at $x^{\ell}_{i}$ for every $i \in [n]$ and $D_{n + 1}$ either on the path $P[s, s_{\alpha}^{\ell}, 2M^2 + 1]$ if Facilitator never moves Romeo at $s^{\ell}_{\alpha}$ or otherwise on the path $P[s_{\alpha}^{\ell}, x_{i}^{\ell}, 2M^2]$ for some $i \in [n]$.

    This implies if Facilitator moves Romeo from $s$ to a vertex on the path $P[s, s^{\ell}_{\alpha}, 2M^2 + 1]$, then Divider has a winning strategy.
    It is easy to see that the similar arguments follows if Facilitator moves Romeo from $s$ to a vertex on the path $P[s, s_{\beta}^{\ell}, 2M^2 + 1]$, $P[s, s_{\gamma}^{\ell}, 2M^2 + 1]$, $P[s, s_{\alpha}^{r}, 2M^2 + 1]$, $P[s, s_{\beta}^{r}, 2M^2 + 1]$, or $P[s, s_{\gamma}^{r}, 2M^2 + 1]$.
    Hence, if $(\calU, \calF)$ is a \yes-instance of \textsc{3-Dimensional Matching}, then Divider with $n + 2$ agents can win in Rendezvous Game with Adversaries, i.e., $(G, s, t, n + 2)$ is a \no-instance of \textsc{Rendezvous}.
    \end{proof}

    \begin{lemma}
    \label{lemma:tw-backward-correct}
    If $(\calU, \calF)$ is a \no-instance of \textsc{3-Dimensional Matching}, then $(G, s, t, n + 2)$ is a \yes-instance of \textsc{Rendezvous}.
    \end{lemma}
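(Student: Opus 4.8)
The plan is to establish the contrapositive in spirit: assuming $(\calU,\calF)$ has no $3$-dimensional matching, I will exhibit a winning strategy for Facilitator against \emph{every} initial placement of Divider's $n+2$ agents. I would first pin down which placements Divider can even hope to survive with, and then show that no surviving placement works when there is no matching. Since $g_1,g_2$ are common neighbours of $s$ and $t$ and Facilitator moves first, if either $g_1$ or $g_2$ is left unoccupied she wins in one move by sending both Romeo and Juliet there; hence at least two agents sit on $\{g_1,g_2\}$. For each row $i\in[n]$, Facilitator can bring both agents to $u_i^0$ (or to $u_i^{m+1}$) in $m+1$ steps along $P[s,u_i^0,m]$ and $P[t,u_i^0,m]$, while only an agent placed on an \emph{internal} vertex $u_i^{j}$ of the spine $P[u_i^0,u_i^{m+1},m]$ can reach both $u_i^0$ and $u_i^{m+1}$ within $m$ steps; any other vertex is too far from one of the endpoints, and no agent from another row or from a guard can arrive in time because the connecting paths have $m$ internal vertices and pass through the occupied region around $s,t$. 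Thus, to survive, Divider must place exactly one agent on an internal spine vertex $u_i^{j_i}$ for each $i\in[n]$ and one agent on each of $g_1,g_2$.

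Reading $u_i^{j_i}$ as the choice of the set $A_{j_i}$, this identifies the surviving placements with a multiset of $n$ sets; since each set has size $3$ and $|\calU|=3n$, these sets cover $\calU$ iff they form a $3$-dimensional matching. As none exists, for some colour, say $\alpha$, the multiset $S_\alpha=\{\,a : (\alpha,a)\in A_{j_i},\ i\in[n]\,\}\subseteq[n]$ is \emph{not} a permutation of $[n]$. The heart of the argument is then a reachability computation identifying who can defend the meeting vertices $x_i^\ell$. Facilitator commits at her first move to a colour and a side, say $(\alpha,\ell)$, driving Romeo along $P[s,s_\alpha^\ell,2M^2+1]$ and Juliet along $P[t,t_\alpha^\ell,2M^2+1]$, reaching $s_\alpha^\ell,t_\alpha^\ell$ after $2M^2+2$ steps, and only \emph{then} revealing a target index $\hat\imath$ by stepping onto $P[s_\alpha^\ell,x_{\hat\imath}^\ell,2M^2]$; both agents then reach $x_{\hat\imath}^\ell$ at time $4M^2+3$. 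For Divider, the only route to $x_i^\ell$ is through $\alpha_i^\ell$, and an agent starting at $u_{i'}^{j_{i'}}$ with $(\alpha,a_1)\in A_{j_{i'}}$ reaches $\alpha_i^\ell$ through $\alpha^\ell$ in $2M^2+2+M(a_1-i)$ steps, and then $x_i^\ell$ via the $2M^2$ edges of $P[\alpha_i^\ell,x_i^\ell,2M^2-1]$ by time $4M^2+2$ exactly when $a_1\le i$.

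Because $M=n^2+m^2$ is large, the balls of radius $2M^2$ around the distinct vertices $x_i^\ell$ stay strictly inside their own branches (they do not reach the junction $\alpha^\ell$), so they are pairwise disjoint and are entered from Divider's side only through $\alpha_i^\ell$. Hence defending \emph{all} left targets forces Divider to place, by the revelation time, one agent in each region, i.e.\ to injectively assign to every $i\in[n]$ a distinct spine agent of $\alpha$-value at most $i$. By Hall's theorem this is possible iff $|\{\,i' : a_1^{(i')}\le i\,\}|\ge i$ for all $i$; the symmetric computation on the right side (using $P[\alpha^r,u^j,M^2-Ma_1]$ and $P[\alpha^r,\alpha_i^r,M^2+Mi]$) yields the reverse inequalities, and both families hold simultaneously precisely when $S_\alpha$ is a permutation of $[n]$. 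Since $S_\alpha$ is not a permutation, Hall's condition fails on one side, which Facilitator can determine from Divider's visible initial placement and commit to at her first move.

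Finally, because Facilitator reveals $\hat\imath$ only after Divider's staging move at time $2M^2+2$, and the failing Hall condition guarantees that no admissible staging fills every region, she chooses $\hat\imath$ to be an index whose region is empty and walks both agents to $x_{\hat\imath}^\ell$: no Divider agent can reach it before $4M^2+3$, and the guards can only chase from behind along the $s$- and $t$-paths without overtaking, so Romeo and Juliet meet and Facilitator wins. The main obstacle I anticipate is the careful turn-by-turn bookkeeping needed to make the one-step timing advantages rigorous, to rule out every alternative Divider route into the target branch (in particular via $s,t$ or the guard agents), and to cleanly reduce simultaneous defendability of all targets to the permutation condition through Hall's theorem.
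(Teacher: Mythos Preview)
Your overall strategy matches the paper's: pin down Divider's initial placement via the $g_1,g_2$ and $u_i^0/u_i^{m+1}$ threats, read off a family of $n$ sets from the row positions, and exploit that this family cannot cover $\calU$ to find an undefendable meeting vertex among the $x_i^{\ell}$ or $x_i^{r}$. Your use of Hall's theorem to phrase the defendability condition is a clean repackaging of the paper's counting argument via $N(\alpha,\le i)$ and $N(\alpha,\ge i)$; the two are equivalent.

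There is one genuine gap. Your claim that ``only an agent placed on an internal vertex $u_i^{j}$ of the spine $P[u_i^0,u_i^{m+1},m]$ can reach both $u_i^0$ and $u_i^{m+1}$ within $m$ steps'' is false. An agent placed on one of the six encoding paths emanating from $u_i^{j}$ (towards $\alpha^{\ell},\beta^{\ell},\gamma^{\ell},\alpha^{r},\beta^{r},\gamma^{r}$), at distance $p$ from $u_i^{j}$ with $p\le\min(j-1,m-j)$, is also within distance $m$ of both endpoints. The paper explicitly allows for this: it defines $\phi(i)$ by the spine vertex $u_i^{\phi(i)}$ that the agent hangs off of, records the offset $p_{i'}\le m/2$, and then observes that the reachability time to $\alpha_i^{\ell}$ becomes $2M^2+2+M(a_1-i)\pm p_{i'}$; since $p_{i'}\le m/2\ll M$, the conclusion ``reachable in $\le 2M^2+2$ steps iff $a_1\le i$'' is unaffected. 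Your argument needs this correction; otherwise Divider has legal placements you have not analysed. Once you allow the offset and note it is absorbed by the $M$-scale gaps, the rest of your Hall-based analysis goes through unchanged.
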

\begin{proof}
We show that if $(\calU, \calF)$ is a \no-instance of \textsc{3-Dimensional Matching}, then Facilitator has a winning strategy in at most $4M^2 + 3$ steps against Divider with $n + 2$ agents.

We first consider two simple cases where Facilitator has an easy winning strategy.
First, consider the case when Divider does not place his agents at $g_1$ or $g_2$.
Then, she can move Romeo and Juliet there and win in one step.
Second, consider the case when there is $i \in [n]$ such that none of Divider's agents is within distance $m$ from $u_{i}^{0}$ or from $u_{i}^{m+1}$.
In the first sub-case,
she can move Romeo and Juliet to $u_{i}^{0}$ in $m + 1$ steps through the paths $P[s, u_{i}^{0}, m]$ and $P[t, u_{i}^{0}, m]$, respectively, and win.
Similarly, in the second sub-case she can move Romeo and Juliet to $u_{i}^{m + 1}$ in $m + 1$ steps through the paths $P[s, u_{i}^{m + 1}, m]$ and $P[t, u_{i}^{m + 1}, m]$, respectively, and win.

In the remaining proof, we suppose that Divider places $D_{n + 1}$ at $g_1$ and $D_{n + 2}$ at $g_2$.
Moreover, for every $i \in [n]$, there is a Divider's agent within distance $m$ from $u_{i}^{0}$ and within distance $m$ from $u_{i}^{m + 1}$.
By the construction, the choice of $M$, and the fact that Divider can not place an agent at $s$ or $t$, a single Divider's agent cannot be within distance $m$ from both $u_{i}^{0}$ and $u_{j}^{0}$, or from $u_{i}^{m + 1}$ and $u_{j}^{m + 1}$, or  from $u_{i}^{0}$ and $u_{j}^{m + 1}$, for $i \neq j \in [n]$.
As Divider has $n$ remaining agents, for every $i \in [n]$, there must be an agent, say $D_i$, within distance $m$ from both $u_{i}^{0}$ and $u_{i}^{m + 1}$.
This is possible only when for every $i \in [n]$, $D_i$ is on one of the internal vertices of the path $P[u_{i}^{0}, u_{i}^{m + 1}, m]$ or on one of the paths joining $u_{i}^{j}$ to vertex in $\{\alpha^{\ell}, \beta^{\ell}, \gamma^{\ell}\} \cup \{\alpha^{r}, \beta^{r}, \gamma^{r}\}$, for some $j \in [m]$.
Suppose $\phi: [n] \mapsto [m]$ is the mapping corresponding to the initial position of the Divider's agents.
Formally, for every $i \in [n]$, Divider places agent $D_i$ either on the internal vertex $u_i^{\phi(i)}$ or on the path joining $u_i^{\phi(i)}$ to vertex in $\{\alpha^{\ell}, \beta^{\ell}, \gamma^{\ell}\} \cup \{\alpha^{r}, \beta^{r}, \gamma^{r}\}$.
While defining the mapping, the first condition is prioritized.

We now define the Facilitator's strategy.
Considering $\calU = \{\alpha, \beta, \gamma\} \times [n]$ as the universe, she constructs a collection $\calF$ of subsets of $\calU$ such that for every $j \in [m]$, set $A_j = \{(\alpha, a_1), (\beta, b_1), (\gamma, c_1)\}$ for some $a_1, b_1, c_1 \in [n]$.
Alternately, she reverse-engineers the process used by the reductions to encode sets.
She also constructs a subset $\calF'$ of $\calF$ by considering the initial positions of agents $D_1, D_2, \dots, D_n$.
Formally, she includes $A_{\phi(i)}$ in $\calF^{\prime}$ for every $i \in [n]$, i.e.
$\calF^{\prime} = \{A_{\phi(1)}, A_{\phi(2)}, \ldots, A_{\phi(n)}\}$.
Note that $\calF^{\prime}$ contains exactly $n$ elements\footnote{It is tempting to imagine that if $(\alpha, a_1)$  is not covered by $\calF^{\prime}$ than Facilitator can move Romeo and Juliet at $x^{\ell}_{a_1}$ or at $x^{r}_{a_1}$.
However, we argue that Facilitator can move Romeo and Juliet at $x^{\ell}_{a_2}$ for some $a_2 \le a_1$ or at $x^{r}_{a_3}$ for some $a_3 \ge a_1$.}.

For every $i \in [n]$, define $N({\alpha}, {\le i})$ as the number of sets  $A_{\phi(i')}$ in $\calF^{\prime}$ such that the $\alpha$-element $(\alpha, a_1) \in A_{\phi(i')}$, $a_1 \le i$.
Also, define $N({\alpha}, {\ge i})$ as the number of sets $A_{\phi(i')}$ in $\calF^{\prime}$ such that for the $\alpha$-element $(\alpha, a_1) \in A_{\phi(i')}$, $a_1 \ge i$.
It similarly defines and computes $N(\beta, \le i)$, $N(\beta, \ge i)$, $N(\gamma, \le i)$, and $N(\gamma, \ge i)$.
It then determines whether the following statements are \true.
\begin{enumerate}
\item For all $i \in [n]$, $N(\alpha, \le i) \le i$ and $N(\alpha, \ge i) \le n + 1 - i$.
\item For all $i \in [n]$, $N(\beta, \le i) \le i$ and $N(\beta, \ge i) \le n + 1 - i$.
\item For all $i \in [n]$, $N(\gamma, \le i) \le i$ and $N(\gamma, \ge i) \le n + 1 - i$.
\end{enumerate}
Facilitator has to make two critical choices.
Her first critical choice is at the first step where she has to decide about moving Romeo towards $\{s_{\alpha}^{\ell}, s_{\alpha}^{r}\}$, $\{s_{\beta}^{\ell}, s_{\beta}^{r}\}$, or $\{s_{\gamma}^{\ell}, s_{\gamma}^{r}\}$.
This choice depends on which of the above statement is \emph{false}.
If the first statement is false, she narrows down her choice of moving Romeo either to $s^{\ell}_{\alpha}$ or $s^{\ell}_{\beta}$.
Suppose the first statement is false because of the first inequality for some $i$. In that case, she moves Romeo towards the right side, i.e., towards $s^{r}_{\alpha}$; otherwise, she moves Romeo towards the left side, i.e., towards $s^{\ell}_{\alpha}$.
She moves Juliet towards the corresponding vertex with respect to $t$, i.e., towards $t^{r}_{\alpha}$ and $t^{\ell}_{\alpha}$ in the first and the second case, respectively.

To explain her second choice, suppose, without loss of generality, that the first statement is false because of its first inequality.
She then moves Romeo from $s$ to $s^{r}_{\alpha}$ and Juliet from $t$ to $t^{r}_{\alpha}$ in $2M^2 + 2$ steps.
For her second choice, she finds $i \in [n]$ such that
Divider's agent has not been across $\alpha^{\ell}_i$ since the game started.
She then moves Romeo and Juliet at $x_{i}^{\ell}$ in $2M^2 + 1$ additional moves and wins the game.
She uses a similar strategy in other cases.
We remark that the initial positions of Divider's agents are `close' to the base gadget.
For Divider to move his agent from their initial positions to say vertices like $\alpha^{\ell}_i$ or $\alpha^{r}_i$, he needs to move them via $\alpha^{\ell}$ or $\alpha^{r}$, respectively.

To argue that this is indeed a winning strategy for Facilitator, we first argue that for any initial positions of the Divider's agents, at least one of the three statements above is false.
Suppose $a_1 \in [n]$ is the integer such that $(\alpha, a_1)$ does not appear in any sets in $\calF^{\prime}$.
This implies for every $a^{\prime}_1 \in [a_1]$, element $(\alpha, a^{\prime}_1)$ appears in at least one set in $\calF^{\prime}$.
Suppose every $(\alpha, a^{\prime}_1)$ appears in exactly one set in $\calF^{\prime}$.
As $\calF^{\prime}$ contains $n$ sets,  each set contains an $\alpha$-element, and there $(a_1 - 1)$ sets that contains $\alpha$-element $(\alpha, i)$ such that $i < a_1$, we can conclude the following.
 There are $n - (a_1 - 1)$ sets that contains $\alpha$-element $(\alpha, i)$ such that $i \ge a_1 + 1$.
This implies that the second inequality in the first statement is false for $i = a_1 + 1$.
Consider the case when there is $a^{\prime}_1 \in [a_1]$ such that $(\alpha, a^{\prime}_1)$ appears in at least two sets in $\calF'$.
Suppose $a^{\prime}_1$ is the smallest such integer.
As $a^{\prime}_1 < a_1$, there are at least $a^{\prime}_1 + 1$ many sets in $\calF^{\prime}$ that contains $\alpha$-element $(\alpha, i)$ such that $i \le a^{\prime}_1$.
Hence, in either case, the first statement is false.
Conversely, if the first statement is \true, then $(\alpha, a_1)$ is present in at least one set in $\calF^{\prime}$ for every $a_1 \in [n]$.
This implies if none of the three statements is false, every element in $\calU$ appears in some set in $\calF^{\prime}$.
This, however, contradicts the fact that $(\calU, \calF')$ is a \no-instance.
Hence, for any initial positions of Divider's agents, at least one of the three sentences is false.

This allows Facilitator to make her first choice.
It remains to argue that there exists $i \in [n]$ with desired properties for her second choice.
Towards that, we first identify the conditions in which Divider can move the agent $D_{i^{\prime}}$ to $\alpha^{\ell}_{i}$ in $2M^2 + 2$ steps for two indices $i, i' \in [n]$.
Suppose Divider initially places the agent $D_{i^{\prime}}$ at distance $p_{i^{\prime}}$ from the vertex $u_{i^{\prime}}^{\phi(i^{\prime})}$.
As $D_{i^{\prime}}$ must be within distance $m$ from both $u_{i^{\prime}}^{0}$ and $u_{i^{\prime}}^{m+1}$, we can conclude that $p_{i^{\prime}}$ is at most $m/2$.
Note that $p_{i^{\prime}}$ can be zero.
Suppose $(\alpha, a_1) \in A_{\phi(i^{\prime})}$ for some $a_1 \in [n]$.
Recall that the number of internal vertices of path from $u_{i^{\prime}}^{\phi(i^{\prime})}$ to $\alpha^{\ell}$ is $M^2 + M \cdot a_1$ where as that of the path from $\alpha^{\ell}$ to $\alpha_{i}^{\ell}$ is $M^2 - M \cdot i$.
Hence, Divider can move the agent $D_{i^{\prime}}$ to $\alpha_{i}^{\ell}$ in $2M^2 + 2 + M \cdot (a_1 - i) - p_{i^{\prime}}$ steps if $D_{i^{\prime}}$ is on the path $P[u_{i^{\prime}}^{\phi(i^{\prime})}, \alpha_{\ell}, M^2 + M \cdot a_1]$, and in $2M^2 + 2 + M \cdot (a_1 - i) + p_{i^{\prime}}$ steps otherwise.
Hence, $D_{i^{\prime}}$ can only reach to $\alpha_{i}^{\ell}$ within $2M^2 + 2$ steps if $a_1 \le i$, for any $i \in [n]$, since $p_{i^{\prime}} \ll M$.
Using similar arguments, $D_{i^{\prime}}$ can only reach to $\alpha_{i}^{r}$ within $2M^2 + 2$ steps if $a_1 \ge i$, for any $i \in [n]$.
Note that if $a_1 \le i$ then Divider can not move $D_{i^{\prime}}$ to $\alpha^{r}_{i + 1}$ within $2M^2 + 2$ steps as {$2M^2 + 2 + M \cdot (i + 1 - a_1) \pm p_{i^{\prime}} \ge 2M^2 + 2 + M - m/2 > 2M^2 + 2$}.
Moreover, as the number of internal points between the paths from $\alpha^{r}$ to $\alpha^{r}_i$ is $M^2 + M \cdot i$, if Divider can not move $D_{i^{\prime}}$ to $\alpha^{r}_{i + 1}$ within $2M^2 + 2$ then it can not move it to $\alpha^{r}_{i^{\circ} + 1}$ for any $i^{\circ} \ge i$.

We now argue about the second critical choice of Facilitator.
Suppose the facilitator moves Romeo from $s$ to $s^{r}_{\alpha}$ and moves Juliet from $t$ to $t^{r}_{\alpha}$ according to the strategy.
This implies that the first inequality in the first statement is false for some $i \in [n]$, i.e. $N(\alpha,  \le i) = q > i$.
By the definition of $N(\alpha, \le i)$,  there are $q$ sets in $\calF^{\prime}$ that contains $\alpha$-element $(\alpha, a_1)$ such that $a_1 \le i$.
As discussed in the previous paragraphs,  if $(\alpha, a_1) \in A_{\phi(i^{\prime})}$ for some $i^{\prime} \in [n]$,
then Divider can not move $D_{i^{\prime}}$ from {its initial position} to {$\alpha^{r}_{i+1}$} in $2M^2 + 2$ steps.
This statement is \true\ for $q > i$ many agents.
Consider the set $\{\alpha^{r}_{i + 1}, \alpha^{r}_{i + 2},  \dots,  \alpha^{r}_n\}$ of $n - i$ vertices.
Divider can move at most $n - q\ (< n - i)$ agents to these vertices in $2M^2 + 2$ steps.
Hence, there exists $i^{\circ} \in [n] \setminus [i]$ such that none of the Divider's agent has reached $\alpha^{r}_{i^{\circ}}$.

Suppose for some $i \in [n]$, none of the Divider's agents can reach $\alpha_{i}^{r}$ or any vertex on path $P[x_{i}^{r}, \alpha_{i}^{r}, 2M^2 - 1]$ after $2M^2 + 2$ steps from the start.
Then, every Divider's agent will be at distance at least $2M^2 + 1$ from $x_{i}^{r}$ after $2M^2 + 2$ steps from start.
Note that Facilitator can move Romeo from $s$ to $s_{\alpha}^{r}$ and Juliet from $t$ to $t_{\alpha}^{r}$ in $2M^2 + 2$ steps.
She can then move Romeo and Juliet to $x_{i}^{r}$ in $2M^2 + 1$ steps along the path $P[s_{\alpha}^{r}, x_{i}^{r}, 2M^2]$ and $P[t_{\alpha}^{r}, x_{i}^{r}, 2M^2]$, respectively.
Since Facilitator takes the first turn, she can move Romeo and Juliet to $x_{i}^{r}$ before the Divider's agents and win in $4M^2 + 3$ steps.

A similar argument follows when the second inequality of the first statement is false, then Romeo and Juliet can meet at $x_{i}^{\ell}$ for some $i \in [n]$.
Similarly, when second or third statement is false, then also Romeo and Juliet will be able to meet at $\{y_{i}^{\ell}, y_{i}^{r}\}$ or $\{z_{i}^{\ell}, z_{i}^{r}\}$ for some $i \in [n]$, respectively.
As mentioned earlier, Facilitator will decide the `left' or `right' vertex based on which inequality of the statement is false.

This implies that if $(\calU, \calF)$ is a \no-instance of \textsc{3-Dimensional Matching}, then Facilitator wins in at most $4M^2 + 3$ steps against Divider with $n + 2$ agents, i.e., $(G, s, t, n + 2)$ is a \yes-instance of \textsc{Rendezvous}.
\end{proof}

Consider set $S := \{s, t\} \cup \{s_{\alpha}^{\ell}, s_{\beta}^{\ell}, s_{\gamma}^{\ell}\} \cup \{\alpha^{\ell}, \beta^{\ell}, \gamma^{\ell}\} \cup \{\alpha^{r}, \beta^{r}, \gamma^{r}\}$ in $G$.
It is easy to verify that $G - S$ is a forest, i.e., the feedback vertex set number of $G$ is at most $14$.
Moreover, every connected component of $G - S$ is either a path or a subdivided caterpillar.
The paths correspond to the paths added while encoding elements in $\calU$ or while adding the critical paths.
The subdivided caterpillars correspond to the base gadgets, and the path added while encoding sets in $\calF^{\prime}$.
Note that the \emph{spine} of the caterpillar is the path $P[u^0_i, u^{m+1}_i, m]$ for some $i \in [n]$ added as a part of base gadget.
This implies that the pathwidth of the resulting graph is at most $16$.
Lemma~\ref{lemma:tw-forward-correct}, Lemma~\ref{lemma:tw-forward-correct}, and the fact that the reduction can be completed in the time polynomial in the size of input imply Theorem~\ref{thm:fvs-conp-hard} which we restate here.
\fvsconphard*

\section{co-W[1]-hardness Parameterized by FVS, Pathwidth, and the Solution Size}
\label{sec:feedback-vertex-set}

In this section, we prove Theorem~\ref{thm:fvs-w-hard} that states \textsc{Rendezvous} is \co-\W[1]-\hard\ when parameterized by the feedback vertex set number or pathwidth and the solution size.
To do that, we present a parameter preserving reduction from the \textsc{(Monotone) NAE-Integer-3-Sat} problem.
For notational convenience, we work with the following definition of the problem.
An input consists of variables $\calX = \{x_{1}, \ldots, x_{n}\}$ that each take a value in the domain $\calD = \{1, \ldots, d^{\star}\}$ and clauses $\calC = \{C_{1}, \ldots, C_{m}\}$ of the form
$$
\operatorname{NAE}\left(x_{i_{1}} \leq d_{1}, x_{i_{2}} \leq d_{2}, x_{i_{3}} \leq d_{3}\right),
$$
where $d_{1}, d_{2}, d_{3} \in [d^{\star}]$.
Such a clause is satisfied if not all three inequalities are \true\ and not all are \false\ (i.e., they are ``not all equal'').
The goal is to find an assignment of the variables that satisfies all given clauses.
Bringmann et. al. \cite{DBLP:journals/jcss/BringmannHML16} proved that \textsc{(Monotone) NAE-Integer-3-Sat} is \W[1]-\hard\ when parameterized by the number of variables.

\begin{figure}[t]
\begin{center}
\includegraphics[scale=0.25]{./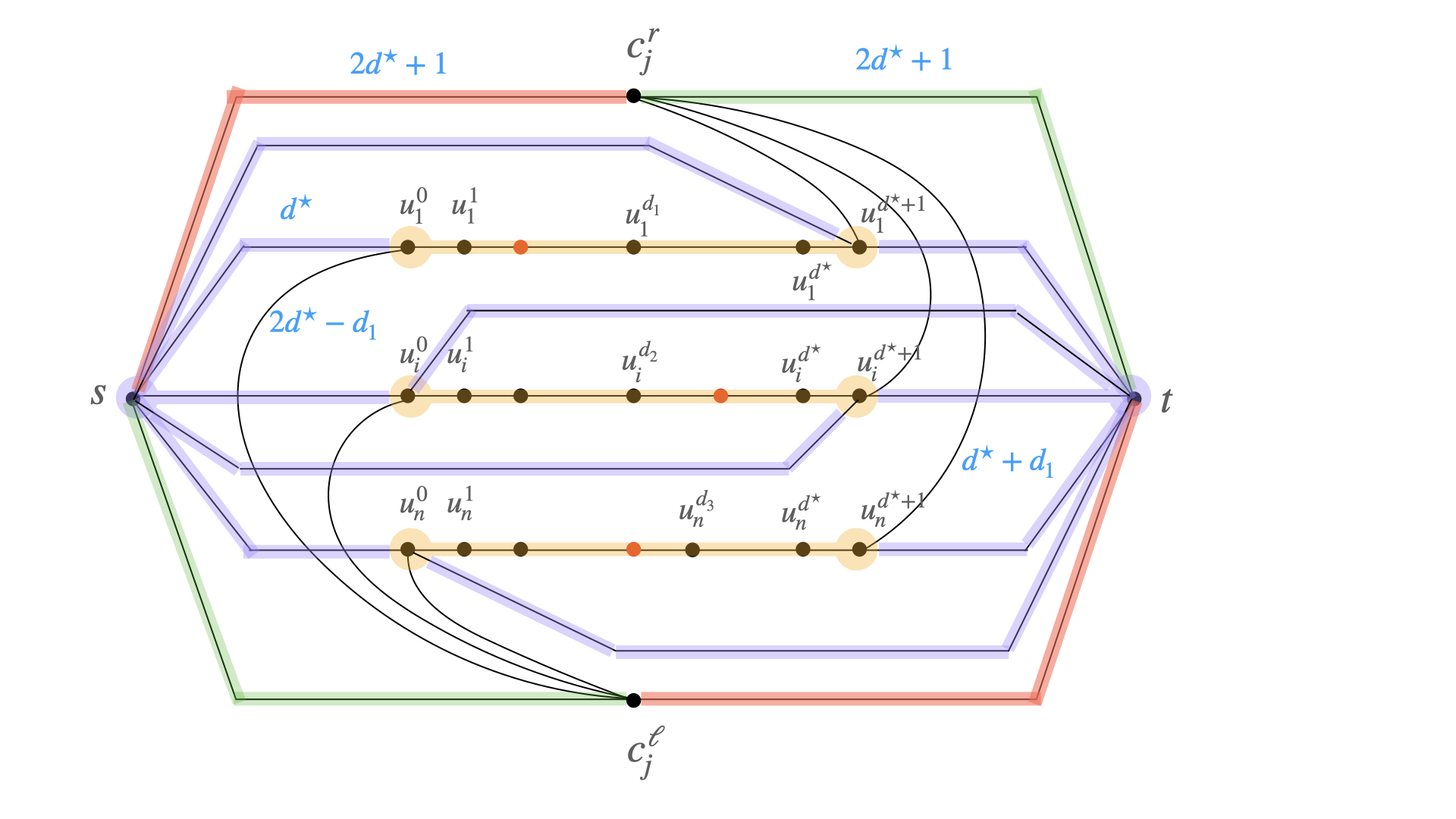}
\end{center}
\caption{
The reduction adds a yellow shaded path for each variable.
Each yellow, purple, or blue shaded path has $d^{\star}$ many internal vertices.
The green and red shaded paths have $2d^{\star} + 1$ many internal vertices.
The number of internal vertices in the remaining path in the figure depends on constants in the clause they are encoding.
Note that vertices $g_1, g_2$ and paths $P[s, u^{d^{\star} + 1}_n, d^{\star}], P[t, u^{0}_1, d^{\star}]$ are not shown in the figure for clarity.
The red vertices denote the positions of the agents.
\label{fig:fvs-w-hard}}
\end{figure}

\paragraph*{Reduction}
The reduction takes as input an instance $(\calX, \calD, \calC)$ of \textsc{(Monotone) NAE-Integer-3-Sat} and returns an instance $(G, s, t, k)$ of \textsc{Rendezvous}.
We construct the graph $G$ as follows: (See Figure~\ref{fig:fvs-w-hard} for the overview of the constructed graph.)

\subparagraph*{The Variable Gadget}
Recall that we use $P[u, v, d]$ to denote a simple path from $u$ to $v$ that contains $d$ many internal vertices.
For every $i \in [n]$, it adds a simple path $P[u_{i}^{0}, u_{i}^{d^{\star} + 1}, d^{\star}]$.
Suppose the internal vertices of $P[u_{i}^{0}, u_{i}^{d^{\star} + 1}, d^{\star}]$ are denoted by $u_{i}^{d}$ for every $d \in [d^{\star}]$, and $u_{i}^{0}$ is adjacent with $u_{i}^{1}$ and $u_{i}^{d^{\star} + 1}$ is adjacent with $u_{i}^{d^{\star}}$.

\subparagraph*{The Clause Gadget}
For every $j \in [m]$, the reduction adds two vertices $c_{j}^{\ell}$ and $c_{j}^{r}$.
Suppose $C_{j} =  \operatorname{NAE}\left(x_{i_{1}} \leq d_{1}, x_{i_{2}} \leq d_{2}, x_{i_{3}} \leq d_{3}\right)$ for some $j \in [m]$.
To encode the inequality $x_{i_1} \leq d_1$, the reduction adds simple paths $P[c_{j}^{\ell}, u_{i_1}^{0}, 2d^{\star} - d_1]$ and $P[c_{j}^{r}, u_{i_1}^{d^{\star} + 1}, d^{\star} + d_1]$.
It encodes the other two inequalities similarly.
We highlight that the number of internal vertices in these simple paths depends on the constant in the inequalities they encode.

\subparagraph*{Critical vertices and connecting paths}
The reduction adds special vertices $s$ and $t$ and two more vertices $g_1$ and $g_2$, and makes them  common neighbours of $s$ and $t$.
\begin{itemize}
\item For every $i \in [n]$, it adds the following simple paths:
\begin{itemize}
\item $P[s, u_{i}^{0}, d^{\star}]$,
 $P[s, u_{i}^{d^{\star} + 1}, d^{\star}]$,
\item $P[t, u_{i}^{0}, d^{\star}]$,
 $P[t, u_{i}^{d^{\star} + 1}, d^{\star}]$.
\end{itemize}
\item For every $j \in [m]$, it adds the following simple paths:
\begin{itemize}
\item $P[s, c_{j}^{\ell}, 2d^{\star} + 1]$,
 $P[s, c_{j}^{r}, 2d^{\star} + 1]$,
\item $P[t, c_{j}^{\ell}, 2d^{\star} + 1]$,
 $P[t, c_{j}^{r}, 2d^{\star} + 1]$.
\end{itemize}
\end{itemize}

This completes the construction of the graph $G$.
The reduction sets $k = n + 2$ and returns $(G, s, t, k)$ as the reduced instance of \textsc{Rendezvous}.

\paragraph*{Intuition for the correctness}
We present an intuition for the correctness of the reduction.
Recall that we use $P[u, v, d_1] \circ P[v, w, d_2]$ to denote the unique path from $u$ to $w$ that contains $v$.
Consider the paths $P[s, u_{i}^{0}, d^{\star}] \circ P[u_{i}^{0}, t, d^{\star}]$ and $P[s, u_{i}^{d^{\star} + 1}, d^{\star}] \circ P[u_{i}^{d^{\star} + 1}, t, d^{\star}]$ for every $i \in [n]$ and paths $P[s, c_{j}^{\ell}, 2d^{\star} + 1] \circ P[c_{j}^{\ell}, t, 2d^{\star} + 1]$ and $P[s, c_{j}^{r}, 2d^{\star} + 1] \circ P[c_{j}^{r}, t, 2d^{\star} + 1]$ for every $j \in [m]$.
As we will see, the only way Facilitator can win in Rendezvous Games with Adversaries is by moving Romeo and Juliet along with one of these $2n + 2m$ paths.
As $g_1, g_2$ are common neighbors of $s$ and $t$, Divider needs to put two of the $k = n + 2$ agents on $g_1$ and $g_2$.
Suppose he puts the remaining $n$ agents at some internal vertices of the paths added while encoding variables.
He places the agents such that each path contains one of them.
For example, the red vertices in Figure~\ref{fig:fvs-w-hard} corresponds to the positions of the agents on the paths added while encoding variables $x_1$, $x_2$, and $x_3$.

Suppose Divider places an agent at an internal vertex, say $u_{1}^{d}$, of $P[u_{1}^{0}, u_{1}^{d^{\star} + 1}, d^{\star}]$.
Facilitator can move Romeo and Juliet to either $u_{1}^{0}$ or $u_{1}^{d^{\star} + 1}$ in $d^{\star} + 1$ steps.
The length of the path from $u_{1}^{0}$ to $u_{1}^{d}$ is $d$ and the length of the path from $u_{1}^{d^{\star} + 1}$ to $u_{1}^{d}$ is $d^{\star} - d + 1$.
\begin{itemize}
\item[-] Divider can move the agent from $u_{1}^{d}$ to $u_{1}^{0}$ in at most $d^{\star}$ steps as $d \le d^{\star}$, and
\item[-] Divider can move the agent from $u_{1}^{d}$ to $u_{1}^{d^{\star} + 1}$ in at most $d^{\star}$ steps as $d^{\star} - d + 1 \le d^{\star}$.
\end{itemize}

Recall that the simple path $P[c_{j}^{\ell}, u_{1}^{0}, 2d^{\star} - d_1]$, as the notation suggests, has $2d^{\star} - d_1$ internal vertices.
Hence, the path $P[c_{j}^{\ell}, u_{1}^{0}, 2d^{\star} - d_1] \circ P[u_{1}^{0}, u_{1}^{d}, d - 1]$ has $(2d^{\star} - d_1) + 1 + (d - 1)$ many internal vertices.
Hence, the length of path from $c_{j}^{\ell}$ to $u_{1}^{d}$ is $2d^{\star} + 1 + d - d_1$.
\begin{itemize}
\item[-] Divider can move the agent from $u_{1}^{d}$ to $c_{j}^{\ell}$ in at most $2d^{\star} + 1$ steps only if $d \le d_1$.
\end{itemize}
Consider symmetric arguments for $c_{j}^{r}$.
The simple path $P[c_{j}^{r}, u_{1}^{d^{\star} + 1}, d^{\star} + d_1]$ has $d^{\star} + d_1$ many internal vertices.
Hence, the path $P[c_{j}^{r}, u_{1}^{d^{\star} + 1}, d^{\star} + d_1] \circ P[u_{1}^{d^{\star} + 1}, u_{1}^{d}, d^{\star} - d]$ has $(d^{\star} + d_1) + 1 + (d^{\star} - d)$ many internal vertices.
Hence, the length of the path from $u_{1}^{d}$ to $c_{j}^{r}$ is $2d^{\star} + 2 + d_1 - d$.
\begin{itemize}
\item[-] Divider can move the agent from $u_{1}^{d}$ to $c_{j}^{\ell}$ in at most $2d^{\star} + 1$ steps only if $d > d_1$.
\end{itemize}

Suppose there is a clause $C_{j} \in \calC$ such that $C_{j} = \text{NAE}(x_1 \le d_1, x_2 \le d_2, x_3 \le d_3)$.
Consider the two vertices $c_{j}^{\ell}$ and $c_{j}^{r}$ added while encoding $C_{j}$.
Note that Facilitator can move Romeo and Juliet to either $c_{j}^{\ell}$ or $c_{j}^{r}$ in $2d^{\star} + 2$ steps.
Moreover, apart from $s$ and $t$, the only branching points in paths $P[s, c_{J}^{\ell}, 2d^{\star} + 1] \circ P[c_{j}^{\ell}, t, 2d^{\star} + 1]$ and $P[s, c_{j}^{r}, 2d^{\star} + 1] \circ P[c_{j}^{r}, t, 2d^{\star} + 1]$ are $c_{j}^{\ell}$ and $c_{j}^{r}$, respectively.
Hence, Divider needs to place an agent that he can move to $c_{j}^{\ell}$ in at most $2d^{\star} + 1$ steps.
Similarly, he needs to place an agent that he can move to $c_{j}^{r}$ in at most $2d^{\star} + 1$ steps.
As we will see, Divider can only move the agents stationed at the paths corresponding to variables $x_1$, $x_2$, or $x_3$ to $c_{j}^{\ell}$ or $c_{j}^{r}$ in at most $2d^{\star} + 1$ steps.
Hence, he needs to place agents at the interior vertices, say $u_{1}^{c_1}$, $u_{2}^{c_2}$, $u_{3}^{c_3}$, of $P[u_{1}^{0}, u_{1}^{d^{\star} + 1}, d^{\star}]$, $P[u_{2}^{0}, u_{2}^{d^{\star} + 1}, d^{\star}]$ and $P[u_{3}^{0}, u_{3}^{d^{\star} + 1}, d^{\star}]$, respectively, such that
\begin{itemize}
\item[-] at least one of the inequalities in $\{c_1 \le d_1; c_2 \le d_2; c_3 \le d_3\}$ is \true, and
\item[-] \emph{simultaneously} at least one of the inequalities in $\{c_1 > d_1; c_2 > d_2; c_3 > d_3\}$ is \true.
\end{itemize}
This position of agents corresponds to the value of variables $x_1, x_2, x_3$ in $[d^{\star}]$ that satisfy the clause $C_{j} = \text{NAE}(x_1 \le d_1, x_2 \le d_2, x_3 \le d_3)$.
In the following two lemmas, we formalize these intuitions.

\begin{lemma}
\label{lemma:fvs-forward-correct}
If $(\calX, \calD, \calC)$ is a \yes-instance of {\sc (Monotone) NAE-Integer-$3$-Sat}, then $(G, s, t, n + 2)$ is a \no-instance of {\sc Rendezvous}.
\end{lemma}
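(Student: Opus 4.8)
The plan is to convert a satisfying assignment $\sigma \colon \calX \to \calD$ of the \textsc{(Monotone) NAE-Integer-3-Sat} instance into a winning placement and strategy for Divider, thereby certifying that $(G,s,t,n+2)$ is a \no-instance. First I would place, for every $i \in [n]$, one agent $D_i$ on the internal vertex $u_i^{\sigma(i)}$ of the variable path $P[u_i^0, u_i^{d^\star+1}, d^\star]$, and the two remaining agents $D_{n+1}, D_{n+2}$ on the common neighbours $g_1$ and $g_2$ of $s$ and $t$; this uses exactly $n+2$ agents. Divider keeps every agent still until Facilitator first moves Romeo off $s$ or Juliet off $t$. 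Since the only branching points of $G$ are $s$, $t$, the vertices $\{u_i^0, u_i^{d^\star+1}\}_i$, and $\{c_j^\ell, c_j^r\}_j$, once Facilitator pushes Romeo into the interior of one of the $2n+2m$ degree-two paths leaving $s$, Romeo is committed: his only non-retreating destination is the branching vertex at the far end. Divider then plays a pursuit strategy, rushing a defending agent to that far branching vertex while using a guard to seal the retreat.

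The defense rests on the distance inequalities already recorded in the intuition preceding this lemma. For a variable target, $D_i$ reaches $u_i^0$ in $\sigma(i) \le d^\star$ steps and $u_i^{d^\star+1}$ in $d^\star+1-\sigma(i) \le d^\star$ steps, whereas Facilitator needs $d^\star+1$ steps to bring \emph{both} Romeo and Juliet there; so, even accounting for the fact that Divider moves second, the defender arrives one round earlier. For a clause target, the agent $D_{i_k}$ at $u_{i_k}^{\sigma(i_k)}$ reaches $c_j^\ell$ in $2d^\star+1+\sigma(i_k)-d_k$ steps, which is at most $2d^\star+1$ exactly when $\sigma(i_k) \le d_k$, and reaches $c_j^r$ in at most $2d^\star+1$ steps exactly when $\sigma(i_k) > d_k$; Facilitator needs $2d^\star+2$ steps to reach either with both agents. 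Here the not-all-equal condition is decisive: since $\sigma$ satisfies $C_j = \operatorname{NAE}(x_{i_1}\le d_1, x_{i_2}\le d_2, x_{i_3}\le d_3)$, not all three inequalities hold, so some $\sigma(i_k) > d_k$ and $c_j^r$ has a defender, and not all fail, so some $\sigma(i_k) \le d_k$ and $c_j^\ell$ has a defender. Thus \emph{every} potential meeting vertex has an agent that can occupy it strictly ahead of Facilitator.

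It then remains to convert a timely arrival into a permanent block. As soon as Facilitator commits Romeo to a path, Divider moves a guard from $g_1$ onto $s$ and, once Juliet leaves $t$, the other guard onto $t$, permanently occupying the only two gateways between the otherwise internally disjoint degree-two paths; note that $g_1, g_2$ thereby become vacant but unreachable, as each is adjacent only to $s$ and $t$. Meanwhile the defender identified above occupies the far branching vertex before Romeo and Juliet can meet there. I would then argue, by a short case analysis over which of the $2n+2m$ paths Facilitator enters, that Romeo is confined to a sub-path pinned between the guard at $s$ and the defender, Juliet is symmetrically pinned between the guard at $t$ and the \emph{same} defender, and the defender separates them; hence Romeo and Juliet lie in distinct components of $G$ minus Divider's agents forever and never meet.

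The main obstacle is not the arithmetic but showing that this reactive strategy is well defined and never over-commits an agent. A single variable agent $D_i$ can be the unique defender of $u_i^0$, of $u_i^{d^\star+1}$, and of a clause vertex for every clause containing $x_i$, yet it can chase only one of these. The resolution is that Facilitator can force a rendezvous at \emph{one} vertex only: to meet, Romeo and Juliet must converge on a single branching vertex, and the degree-two structure together with the guards sealing $s$ and $t$ prevents either agent from switching paths once committed. Hence at any moment Divider faces exactly one live threat and may dedicate the corresponding defender to it; if instead Facilitator routes Romeo and Juliet toward different vertices, they simply cannot coincide. Making this ``one live threat'' statement precise, and verifying that the single-step timing margin survives Divider's one-round reaction delay in every case, is where the real work lies.
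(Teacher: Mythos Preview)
Your proposal is correct and follows essentially the same approach as the paper's proof: the same initial placement $D_i \mapsto u_i^{\sigma(i)}$, $D_{n+1},D_{n+2} \mapsto g_1,g_2$, the same wait-then-react policy, and the same distance inequalities showing that for every potential meeting vertex $u_i^0$, $u_i^{d^\star+1}$, $c_j^\ell$, $c_j^r$ some agent arrives at least one step before Facilitator can bring both Romeo and Juliet there. The only cosmetic difference is that the paper has $D_{n+1}$ (resp.\ $D_{n+2}$) \emph{follow} Romeo (resp.\ Juliet) along the chosen path and eventually immobilise them, whereas you park the guards at $s$ and $t$ and argue separation by components; both yield a valid winning strategy for Divider and the underlying case analysis is identical.
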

\begin{proof}
We show that if $(\calX, \calD, \calC)$ is a \yes-instance of \textsc{(Monotone) NAE-Integer-$3$-Sat}, then Divider with $n + 2$ agents can win in Rendezvous Game with Adversaries.
Recall that $n = |\calX|$, and $m = |\calC|$.
Suppose $\psi: \calX \rightarrow [d^{\star}]$ be a satisfying assignment, and $\psi(x_i) = d_i$ for every $i \in [n]$.

We describe a {winning strategy} for Dividers with the agents $D_1, D_2, \ldots, D_{n+2}$.
Initially, he puts $D_i$ in the vertex $u_{i}^{d_i}$, for every $i \in [n]$, and $D_{n + 1}$ and $D_{n + 2}$ in $g_1$ and $g_2$ respectively.
He does not move agents $D_{1}, \ldots, D_{n + 2}$, until Facilitator moves Romeo or Juliet from $s$ or $t$, respectively.
Suppose without loss of generality Facilitator first moves Romeo from $s$ (she may or may not move Juliet from $t$).
By the construction, she can move Romeo either
on the paths $P[s, u_{i}^{0}, d^{\star}]$, $P[s, u_{i}^{d^{\star} + 1}, d^{\star}]$ for some $i \in [n]$ or
on the paths $P[s, c_{j}^{\ell}, 2d^{\star} + 1]$, $P[s, c_{j}^{r}, 2d^{\star} + 1]$ for some $j \in [m]$.


Suppose Facilitator moves Romeo from $s$ to a vertex on the path $P[s, u_{i}^{0}, d^{\star}]$ for some $i \in [n]$.
Divider moves $D_{n + 1}$ from $g_1$ to $s$ and then towards $u_{i}^{0}$ as she moves Romeo towards $u_{i}^{0}$.
He also moves $D_i$ to $u_{i}^{0}$ in at most $\psi(x_i)$ steps along the path $P[u_{i}^{0}, u_{i}^{d^{\star} + 1}, d^{\star}]$.
Facilitator needs at least $d^{\star} + 1$ steps to move
both Romeo and Juliet in $u_{i}^{0}$ starting from $s$ and $t$ respectively.
As $\psi(x_i) \le d^{\star}$, Divider can move $D_{i}$ to $u_{i}^{0}$ before Facilitator can move both Romeo and Juliet to $u_{i}^{0}$.
Hence, he can block Romeo by $D_i$ and $D_{n + 1}$ on the path $P[s, u_{i}^{0}, d^{\star}]$.
Divider keeps moving $D_{i}$ and $D_{n + 1}$ towards Romeo's position and in at most $\psi(x_i) + d^{\star} - 1$ steps Facilitator can not move Romeo.
This implies Divider wins by keeping Romeo in its current position with its neighbors occupied by $D_{i}$ and $D_{n + 1}$.
The argument also follows when Facilitator moves Romeo from $s$ to a vertex on the path $P[s, u_{i}^{d^{\star} + 1}, d^{\star}]$ for some $i \in [n]$ since Divider can move $D_{i}$ to $u_{i}^{d^{\star} + 1}$ in at most $d^{\star} - \psi(x_i) + 1$ ($\le d^{\star}$) steps.

Suppose Facilitator moves Romeo from $s$ to a vertex on the path $P[s, c_{j}^{\ell}, 2d^{\star} + 1]$ for some $j \in [m]$.
Let $C_{j} = \text{NAE}(x_{i_1} \le d_1, x_{i_2} \le d_2, x_{i_3} \le d_3)$.
Since $\psi$ is a satisfying assignment, it sets the values of variables such that at least one of the inequalities will be \true\ and at least one of the inequalities will be \false.
We assume without loss of generality that $\psi(x_{i_1}) \le d_1$ and $\psi(x_{i_2}) > d_2$.
Divider moves $D_{i_1}$ to $c_{j}^{\ell}$ in at most $2d^{\star} - d_1 + 1 + \psi(x_{i_1})$ steps through the path $P[c_{j}^{\ell}, u_{i_1}^{0}, 2d^{\star} - d_1] \circ P[u_{i_1}^{0}, u_{i_1}^{d^{\star} + 1}, d^{\star}]$.
As in the previous case, he can move $D_{n+1}$ from $g_1$ to $s$ and then keep moving towards $c_{j}^{\ell}$ as Facilitator moves Romeo towards $c_{j}^{\ell}$.
He can move $D_{n+2}$ in a similar manner with respect to Juliet.

Facilitator can move both Romeo and Juliet to $c_{j}^{\ell}$ in at least $2d^{\star} + 2$ steps starting from $s$ and $t$ respectively.
Divider can move $D_{i_1}$ to $c_{j}^{\ell}$ before Romeo and Juliet as $2d^{\star} - d_1 + 1 + \psi(x_{i_1}) \le 2d^{\star} + 1$.
Hence, Romeo is blocked by $D_{i_1}$ and $D_{n+1}$ on the path $P[s, c_{j}^{\ell}, 2d^{\star} + 1]$ and Juliet cannot reach Romeo.
Divider keeps moving $D_{i_1}$ and $D_{n+1}$ towards Romeo and in at most $4d^{\star} - d_1 + 1 + \psi(x_{i_1})$ steps Romeo cannot move.
This implies Divider wins.
The argument also follows when Facilitator moves Romeo from $s$ to a vertex on the path $P[s, c_{j}^{r}, 2d^{\star} + 1]$ for some $j \in [m]$ since Divider can move $D_{i_2}$ to $c_{j}^{r}$ in at most $2d^{\star} + 2 + d_2 - \psi(x_{i_2})$ ($ < 2d^{\star} + 2$) steps.

This implies that if $(\calX, \calD, \calC)$ is a \yes-instance of \textsc{(Monotone) NAE-Integer-$3$-Sat}, then Divider with $n + 2$ agents can win in Rendezvous Game with Adversaries, i.e., $(G, s, t, n + 2)$ is a \no-instance of \textsc{Rendezvous}.
\end{proof}

\begin{lemma}
\label{lemma:fvs-backward-correct}
If $(\calX, \calD, \calC)$ is a \no-instance of {\sc (Monotone) NAE-Integer-$3$-Sat}, then $(G, s, t, n + 2)$ is a \yes-instance of {\sc Rendezvous}.
\end{lemma}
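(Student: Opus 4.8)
The plan is to prove the contrapositive at the level of strategies: for \emph{every} placement of Divider's $n+2$ agents, Facilitator can force a meeting. Mirroring the backward direction of the previous reduction (\Cref{lemma:tw-backward-correct}), I would first dispose of the trivial cases. If either $g_1$ or $g_2$ is left unoccupied, Facilitator moves both Romeo and Juliet onto it and wins in a single step; so I may assume two of the agents, say $D_{n+1}$ and $D_{n+2}$, sit on $g_1$ and $g_2$, leaving the $n$ agents $D_1,\dots,D_n$ to guard the rest of the graph.

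The second step is a forcing argument that pins these $n$ agents to the variable paths. For each $i\in[n]$ the vertices $u_i^0$ and $u_i^{d^\star+1}$ are candidate meeting points: Facilitator needs $d^\star+1$ moves to gather Romeo and Juliet at either of them along $P[s,u_i^0,d^\star]\circ P[u_i^0,t,d^\star]$ and $P[s,u_i^{d^\star+1},d^\star]\circ P[u_i^{d^\star+1},t,d^\star]$. If no Divider agent can reach $u_i^0$ (respectively $u_i^{d^\star+1}$) within $d^\star$ steps, Facilitator simply wins there, using $D_{n+1},D_{n+2}$ only to chase harmlessly from behind. Hence I may assume that for every $i$ some agent reaches both $u_i^0$ and $u_i^{d^\star+1}$ in at most $d^\star$ steps. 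Using the distance computations already recorded in the intuition---an agent at an internal vertex $u_i^d$ reaches $u_i^0$ in $d\le d^\star$ steps and $u_i^{d^\star+1}$ in $d^\star+1-d\le d^\star$ steps, whereas any vertex off the internal part of the $i$-th path, or either endpoint itself, is more than $d^\star$ steps from one of the two ends---this forces each $D_i$ to occupy a distinct internal vertex $u_i^{c_i}$ with $c_i\in[d^\star]$, since the separation between the variable gadgets guarantees that a single agent cannot guard two different variable paths. I then read off the assignment $\psi(x_i)=c_i$.

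The third step exploits the \no-instance hypothesis. Since $\psi$ cannot satisfy all clauses, some clause $C_j=\operatorname{NAE}(x_{i_1}\le d_1,x_{i_2}\le d_2,x_{i_3}\le d_3)$ is violated, i.e.\ all three inequalities are \true\ or all three are \false. In the first case I invoke the threshold from the intuition: an agent at $u_{i_k}^{c_{i_k}}$ can reach $c_j^{r}$ within $2d^\star+1$ steps only if $c_{i_k}>d_k$, which now fails for every $k$; moreover the agents on the paths of variables not occurring in $C_j$, as well as those on $g_1,g_2$, are strictly farther than $2d^\star+1$ from $c_j^{r}$. Consequently no Divider agent can occupy $c_j^{r}$ in time. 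Facilitator therefore drives Romeo along $P[s,c_j^{r},2d^\star+1]$ and Juliet along $P[t,c_j^{r},2d^\star+1]$, meeting at $c_j^{r}$ in $2d^\star+2$ steps; since the only branching points on these two internally disjoint approach paths are $s$, $t$, and $c_j^{r}$, Divider can interfere only at $c_j^{r}$ (excluded) or from behind (harmless), so the meeting succeeds. The all-\false\ case is symmetric, with $c_j^{\ell}$ in place of $c_j^{r}$.

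The main obstacle, and the step that needs the most care, is the forcing in the second paragraph: I must argue rigorously that blocking both ends of every variable path with only $n$ agents leaves no slack, so that each agent is pinned to an internal vertex of its own path and hence encodes a bona fide value $c_i\in[d^\star]$. This rests entirely on the length bookkeeping of the construction (the $d^\star$-internal-vertex connectors to $s,t$ and the $2d^\star\pm d_k$ clause connectors) and on checking the turn order carefully, so that Facilitator, who moves first, indeed reaches any unblocked meeting point before Divider can close it. Once this structural dichotomy is in hand, the clause-violation argument and the explicit winning route to $c_j^{\ell}$ or $c_j^{r}$ follow directly from the per-agent reachability thresholds already established.
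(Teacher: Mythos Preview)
Your proposal is correct and follows essentially the same approach as the paper's proof: dispose of the trivial cases at $g_1,g_2$ and the variable endpoints, use a pigeonhole/distance argument to pin the remaining $n$ agents to distinct internal vertices $u_i^{c_i}$ of the variable paths, read off the assignment $\psi(x_i)=c_i$, and then exploit an unsatisfied clause $C_j$ to identify a meeting point $c_j^\ell$ or $c_j^r$ that no agent can reach within $2d^\star+1$ steps, so that Facilitator wins in $2d^\star+2$ steps. The paper organizes the last part by explicitly listing the two ``all \true/all \false'' statements and arguing one must hold, but the content is identical to yours.
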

\begin{proof}
We show that if $(\calX, \calD, \calC)$ is a \no-instance of \textsc{(Monotone) NAE-Integer-$3$-Sat}, then Facilitator wins in at most $2d^{\star} + 2$ steps against Divider with $n + 2$ agents.

We first consider two simple cases where Facilitator has an easy winning strategy.
First, consider the case when Divider does not place his agents at $g_1$ or $g_2$.
Then, she can move Romeo and Juliet there and win in one step.
Second, consider the case when there is $i \in [n]$ such that none of Divider's agents is within distance $d^{\star}$ from $u_{i}^{0}$ or from $u_{i}^{d^{\star}+1}$.
In the first sub-case,
she can move Romeo and Juliet to $u_{i}^{0}$ in $d^{\star} + 1$ steps through the paths $P[s, u_{i}^{0}, d^{\star}]$ and $P[t, u_{i}^{0}, d^{\star}]$, respectively, and win.
Similarly, in the second sub-case she can move Romeo and Juliet to $u_{i}^{d^{\star} + 1}$ in $d^{\star} + 1$ steps through the paths $P[s, u_{i}^{d^{\star} + 1}, d^{\star}]$ and $P[t, u_{i}^{d^{\star} + 1}, d^{\star}]$, respectively, and win.

In the remaining proof, we suppose that Divider places $D_{n + 1}$ at $g_1$ and $D_{n + 2}$ at $g_2$.
Moreover, for every $i \in [n]$, there is a Divider's agent within distance $d^{\star}$ from $u_{i}^{0}$ and within distance $d^{\star}$ from $u_{i}^{d^{\star} + 1}$.
Suppose from now that for every $i \in [n]$, there exists a Divider's agent within distance $d^{\star}$ from $u_{i}^{0}$ and within distance $d^{\star}$ from $u_{i}^{d^{\star} + 1}$.
By the construction and the fact that Divider can not place an agent at $s$ or $t$, a single Divider's agent cannot be within distance $d^{\star}$ from both $u_{i}^{0}$ and $u_{j}^{0}$, or $u_{i}^{d^{\star} + 1}$ and $u_{j}^{d^{\star} + 1}$, or $u_{i}^{0}$ and $u_{j}^{d^{\star} + 1}$, for $i \neq j \in [n]$.
As Divider has $n$ remaining agents, for every $i \in [n]$, there must be an agent, say $D_i$, within distance $d^{\star}$ from both $u_{i}^{0}$ and $u_{i}^{d^{\star} + 1}$.
This is possible only when for every $i \in [n]$, $D_i$ is on one of the internal vertices of the path $P[u_{i}^{0}, u_{i}^{d^{\star} + 1}, d^{\star}]$.
Suppose $\phi: [n] \rightarrow [d^{\star}]$ is the mapping corresponding to the initial position of the Divider's agents.
Formally, for every $i \in [n]$, Divider places agent $D_i$ on $u_{i}^{\phi(i)}$.
For every $i \in [n]$, the initial position of $D_i$ also represents a possible assignment of variable $x_i$ in $(\calX, \calD, \calC)$.

We now define the Facilitator's strategy.
Considering $\calX = \{x_1, \ldots, x_n\}$ as the variables that each take a value in the domain $\calD = \{1, \ldots, d^{\star}\}$, she constructs a collection $\calC$ of clauses such that for every $j \in [m]$,
clause $C_{j} = \text{NAE}(x_{i_1} \le d_1, x_{i_2} \le d_2, x_{i_3} \le d_3)$, where $x_{i_1}, x_{i_2}, x_{i_3} \in \calX$ for some $d_1, d_2, d_3 \in [d^{\star}]$.
Alternately, she reverse-engineers the process used by the reductions to encode clauses.
She also constructs an assignment $\psi: \calX \rightarrow \calD = [d^{\star}]$ by considering the initial positions of agents $D_1, D_2, \dots, D_n$.
Formally, $\psi(x_i) = \phi(i)$ for every $i \in [n]$.
It then determines whether the following statements are \true.
\begin{enumerate}
\item For some clause $C_{j} = \text{NAE}(x_{i_1} \le d_1, x_{i_2} \le d_2, x_{i_3} \le d_3)$, all of the inequalities in $\{\psi(x_{i_1}) \le d_1; \psi(x_{i_2}) \le d_2; \psi(x_{i_3}) \le d_3\}$ are \true, where $j \in [m]$.
\item For some clause $C_{j} = \text{NAE}(x_{i_1} \le d_1, x_{i_2} \le d_2, x_{i_3} \le d_3)$, all of the inequalities in $\{\psi(x_{i_1}) \le d_1; \psi(x_{i_2}) \le d_2; \psi(x_{i_3}) \le d_3\}$ are \false, where $j \in [m]$.
\end{enumerate}
Facilitator has to make a critical choice in the first step where she has to decide about moving Romeo towards $c_{1}^{\ell}, \ldots, c_{m}^{\ell}, c_{1}^{r}, \ldots,$ or $c_{m}^{r}$.
This choice depends on which of the above statement is \true\ and for which clause it is \true.
If the first statement is \true\ for the clause $C_{j} \in \calC$, then she moves Romeo and Juliet towards $c_{j}^{r}$.
Similarly, if the second statement is \true\ for the clause $C_{j} \in \calC$, then she moves Romeo and Juliet towards $c_{j}^{\ell}$.

To argue that this is indeed a winning strategy for Facilitator, we first argue that for any initial positions of Divider's agents, at least one of the two statements above is \true.
Assume the above two statements are \false\ for all $j \in [m]$, which implies in all the clauses $C_{j} \in \calC$, not all three inequalities are \true\ and not all are \false\ .
Hence, all the clauses are satisfied by the assignment $\psi$.
This, however, contradicts the fact that $(\calX, \calD, \calC)$ is a \no-instance.
Hence, for any initial positions of Divider's agents, at least one of the two sentences is \true.

This allows Facilitator to make her choice.
It remains to argue that Romeo and Juliet can meet at the vertex $c_{j}^{\ell}$ or $c_{j}^{r}$ which Facilitator has chosen.
Suppose, one of the statements is \true\ for the clause $C_{j}$.
For notational convenience, suppose $C_{j} = \text{NAE}(x_{1} \le d_1, x_{2} \le d_2, x_{3} \le d_3)$.

Suppose $\psi(x_1) \le d_1, \psi(x_2) \le d_2, \psi(x_3) \le d_3$ (i.e. First statement is \true).
Then, as mentioned in the Facilitator's strategy, her choice will be to move Romeo and Juliet towards $c_{j}^{r}$.
For $i \in \{1, 2, 3\}$, Divider needs at least $d^{\star} - \psi(x_i) + 1 + d^{\star} + d_i + 1 \ge 2d^{\star} + 2$ steps to move $D_{i}$ from $u_{i}^{\psi(x_i)}$ to $c_{j}^{r}$ via the shortest path $P[u_{i}^{\psi(x_i)}, u_{i}^{d^{\star} + 1}, d^{\star} - \psi(x_i)] \circ P[u_{i}^{d^{\star} + 1}, c_{j}^{r}, d^{\star} + d_i]$.
Note that, by the construction, the Divider's agents that are at distance less than or equal to $2d^{\star} + 2$ from $c_{j}^{r}$ are $D_{1}$, $D_{2}$ and $D_{3}$, only.
Facilitator can move Romeo and Juliet to $c_{j}^{r}$ in $2d^{\star} + 2$ steps through the paths $P[s, c_{j}^{r}, 2d^{\star} + 1]$ and $P[t, c_{j}^{r}, 2d^{\star} + 1]$, respectively.
Since Facilitator takes the first turn, she can move Romeo and Juliet to $c_{j}^{r}$ before Divider's agents.
Hence, Facilitator wins in $2d^{\star} + 2$ steps.

Suppose $\psi(x_1) > d_1, \psi(x_2) > d_2, \psi(x_3) > d_3$  (i.e. Second statement is \true).
Then, as mentioned in the Facilitator's strategy, her choice will be to move Romeo and Juliet towards $c_{j}^{\ell}$.
For $i \in \{1, 2, 3\}$, Divider needs at least $\psi(x_i) - 1 + 1 + 2d^{\star} - d_i + 1 > 2d^{\star} + 1$ steps to move $D_i$ from $u_i^{\psi(x_i)}$ to $c_{j}^{\ell}$ via the shortest path $P[u_{i}^{\psi(x_i)}, u_{i}^{0}, \psi(x_i) - 1] \circ P[u_{i}^{0}, c_{j}^{\ell}, 2d^{\star} - d_i]$.
Once again, by the construction, the Divider's agents that are at distance less than or equal to $2d^{\star} + 2$ from $c_{j}^{\ell}$ are $D_{1}$, $D_{2}$ and $D_{3}$.
Facilitator moves Romeo and Juliet to $c_{j}^{\ell}$ in $2d^{\star} + 2$ steps through the paths $P[s, c_{j}^{\ell}, 2d^{\star} + 1]$ and $P[t, c_{j}^{\ell}, 2d^{\star} + 1]$ respectively.
Since Facilitator takes the first turn, Romeo and Juliet is moved to $c_{j}^{\ell}$ before Divider agents and Facilitator wins in $2d^{\star} + 2$ steps.

This implies that if $(\calX, \calD, \calC)$ is a \no-instance of \textsc{(Monotone) NAE-Integer-$3$-Sat}, then Facilitator wins in at most $2d^{\star} + 2$ steps against Divider with $n + 2$ agents, i.e., $(G, s, t, n + 2)$ is a \yes-instance of \textsc{Rendezvous}.
\end{proof}

By the construction, the number of agents is upper bounded by the number of variables in \textsc{(Monotone) NAE-Integer-$3$-Sat} plus two.
Consider the set $S := \bigcup_{i \in [n]}\{u_{i}^{0},u_{i}^{d^{\star} + 1}\} \cup \{s,t\}$ of $2n + 2$ vertices in $G$.
It is easy to verify that $G - S$ is a collection of paths (corresponding to variable gadgets) and subdivided stars (centered at the vertices added while encoding the clauses).
It is easy to verify that the pathwidth of a subdivided star is at most two.
Hence, the feedback vertex set number and the pathwidth of the resulting graph are bounded by the linear function in the number of variables.
Lemma~\ref{lemma:fvs-forward-correct}, Lemma~\ref{lemma:fvs-backward-correct} and the fact that the reduction can be completed in the polynomial time in the size of input imply Theorem~\ref{thm:fvs-w-hard} which we restate here.
\fvswhard*
\section{Parameterizing by Vertex Cover}
\label{sec:vertex-cover}

In this section we focus on~\Cref{thm:vc-fpt-no-poly}:

\vcfptnpk*

Throughout this section, we assume that a vertex cover $X$ of size $\vc(G)$ is given as a part of the input. We first discuss the \FPT{} result.

\begin{reduction rule}
\label{rr:trivial-yes}
Consider an instance $(G,X,s,t, k)$ of \textsc{Rendezvous}.
If $s = t$, $st \in E(G)$, $|N(s) \cap N(t)| > k$, then return a trivial \yes-instance.
\end{reduction rule}

For the rest of this discussion, we will assume that any instance $(G, s, t, k)$ of \textsc{Rendezous} under consideration does \emph{not} satisfy the premise of \Cref{rr:trivial-yes}, i.e, we assume that we are not dealing with trivial \yes{} instances.
Also, since the vertices $s$ and $t$ can always be added to the vertex cover and this only increases the parameter by two, we assume for simplicity --- and without loss of generality --- that $s,t \in X$.

We now introduce some notation.
For a subset $Y \subseteq X$, let $I_Y \subseteq G \setminus X$ denote the set of vertices in $G \setminus X$ whose neighborhood is exactly $Y$.
Note that $\{I_Y\}_{Y \subseteq X}$ is a partition of $G \setminus X$ into at most $2^{\vc(G)}$ many parts.
For a vertex $v \in G \setminus X$, we use $\mathcal{E}_{G,X}(v)$ to denote the part that $v$ belongs to, in other words, $\mathcal{E}_{G,X}(v) = I_{N(v)}$.
We now apply the following reduction rule.

\begin{reduction rule}
\label{rr:eqclass}
Consider an instance $(G,X,s,t,k)$ of \textsc{Rendezvous}.
Repeat the following for each $v \in G \setminus X$.
If $|\mathcal{E}_{G,X}(v)| > k+1$, then choose any subset of exactly $k+1$ vertices from $\mathcal{E}_{G,X}(v)$ and delete rest of the vertices from $\mathcal{E}_{G,X}(v)$.
\end{reduction rule}

\begin{lemma}
    \label{lemma:rr-eq-safe}
    \Cref{rr:eqclass} is safe.
\end{lemma}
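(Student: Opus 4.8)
The plan is to show that $(G,X,s,t,k)$ is a \yes-instance if and only if the reduced instance $(G',X,s,t,k)$ produced by \Cref{rr:eqclass} is one, where $G'$ is obtained from $G$ by deleting $I_Y \setminus W$ for a chosen $W \subseteq I_Y = \mathcal{E}_{G,X}(v)$ with $|W| = k+1$ (it suffices to treat one over-full class and then iterate). Two structural facts drive the argument. First, every vertex of $I_Y$ is a \emph{false twin}: its neighbourhood is exactly $Y \subseteq X$ and $I_Y$ is independent, so any two vertices of $I_Y$ are interchangeable both for movement and for blocking. Second, a pigeonhole bound: Divider has only $k$ agents, so at every position at most $k$ of the $k+1$ vertices of $W$ are occupied by Divider, hence \emph{at least one vertex of $W$ is always free of Divider's agents}. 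These combine into a lever I would isolate as a sub-claim: if on some Facilitator turn both Romeo and Juliet sit on vertices of $Y$ — so that they are adjacent to \emph{every} twin — she wins at once by moving both onto a common Divider-free twin of $W$. Thus the retained $k+1$ twins already realise the full ``meeting power'' of the class, and the only question is whether the \emph{deleted} twins grant either player anything more.

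To prove both implications I would argue by strategy simulation through the retraction $r \colon V(G) \to V(G')$ that fixes $V(G')$ pointwise and sends each deleted twin of $I_Y \setminus W$ to a vertex of $W$. Since $r$ is a graph homomorphism that is the identity on $V(G')$, it maps every legal one-step move (adjacent-or-stay) to a legal one-step move and it preserves the meeting condition: if Romeo and Juliet coincide, so do their images. For the direction ``$(G')$ is \yes{} $\Rightarrow$ $(G)$ is \yes'', Facilitator plays her winning $G'$-strategy against the projected Divider position $r(D)$; because her responses $F'$ lie in $V(G')$ where $r$ is the identity, a conflict $F' \cap D \neq \emptyset$ in $G$ would force $F' \cap r(D) \neq \emptyset$ in $G'$, so legality transfers, and a meeting in $G$ projects to a meeting in the simulated $G'$-play. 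The opposite direction ``$(G)$ is \yes{} $\Rightarrow$ $(G')$ is \yes'' is handled symmetrically by projecting Facilitator's $G$-position and running Divider's winning $G'$-strategy (equivalently, Divider wins $G'$ implies Divider wins $G$).

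The main obstacle is keeping the simulation \emph{collision-free} and \emph{time-consistent}, and this is precisely where the two structural facts are spent. A fixed choice of image for the deleted twins breaks whenever the opponent occupies that image, so I would instead route each deleted-twin image to a \emph{currently Divider-free} vertex of $W$, which exists by the pigeonhole bound; then no simulated move ever steps onto a Divider agent. The one remaining delicacy is a Facilitator agent \emph{parked} on a deleted twin while the designated free vertex of $W$ shifts, which could make its image jump between non-adjacent twins. Here the lever resolves matters: in the Facilitator-win direction, any forced coincidence of the two simulated Facilitator agents is a premature \emph{meeting}, which only helps Facilitator and so cannot invalidate a simulation witnessing her win; in the Divider-win direction, a Facilitator agent on a twin means the corresponding agent is within one step of $Y$, so the situation again reduces to the lever, and the at most two Facilitator agents can be re-indexed to two reserved vertices of $W$ to avoid the jump. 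Once these finitely many cases are discharged, the homomorphism guarantees that every play in $G$ projects to a faithful play in $G'$ with the same winner, which establishes that \Cref{rr:eqclass} is safe.
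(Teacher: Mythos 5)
Your proposal is correct and follows essentially the same route as the paper's proof: both rest on the interchangeability of the false twins in $\mathcal{E}_{G,X}(v)$, the pigeonhole observation that among $k+1$ retained twins at least one is always free of Divider's $k$ agents, and the remark that if both of Facilitator's agents would need vertices of the class simultaneously then they can simply meet on a free twin. Your retraction/homomorphism framing and the explicit treatment of collision-freeness and time-consistency are a more formal packaging of the same simulation argument the paper sketches.
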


\begin{proof}
Let $(G,X,s,t,k)$ denote the input instance, and let $v \in G \setminus X$ be arbitrary but fixed.
Further, let $(H,X,s,t,k)$ denote the instance obtained by applying~\Cref{rr:eqclass} with respect to $v$.
If $|\mathcal{E}(v)| \leq k+1$ in $G$ then $G = H$ and there is nothing to prove.
Otherwise, let $Q_v \subseteq \mathcal{E}_{G,X}(v)$ denote the set of vertices deleted by the application of the reduction rule with respect to $v$.
Note that $H = G \setminus Q_v$.
Also observe that $|\mathcal{E}_{H,X}(v)| = k+1$.

To begin with, suppose the Facilitator has a winning strategy in $G$.
Observe that the Facilitator can employ the same strategy in $H$ as well, except when the strategy involves moving to a vertex $u \in Q_v$.
However, since $|\mathcal{E}_{H,X}(v)| = k+1$, we have that there is at least one vertex $w$ in $H \setminus X$ that has the same neighborhood as $u$ and is not occupied by an agent of the Divider, since the Divider has only $k$ agents at their disposal.
The strategy, at this point, would remain valid if we were to replace $u$ with $w$.
If the strategy involved using two distinct vertices from $Q_v$ in the same step, then note that we can modify the strategy and have the Faciliator's agents meet immediately at the vertex $w$.

On the other hand, if the Facilitator had a winning strategy in $H$, then it is easy to check that the Facilitator can win in $G$ by mimicing the strategy directly.
Another way to see this is the following.
Suppose that the Divider had a winning strategy in $G$.
Then observe that in any step, without loss of generality, if the Divider's agents occupy some vertices of $\mathcal{E}_{G,X}(v)$, we can replace this configuration with all of these agents on a single vertex of $\mathcal{E}_{G,X}(v)$ outside $Q_v$.
Thus any winning strategy for the divider in $G$ can be adapted to a valid winning strategy in $H$.
This concludes the argument for the equivlance of the two instances.
\end{proof}

\begin{lemma}
\label{lemma:vc-fpt}
{\sc Rendezvous} is \FPT\ when parameterized by the vertex cover number and the solution size.
\end{lemma}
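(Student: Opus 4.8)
The plan is to observe that the two reduction rules already established hand us an exponential kernel in the combined parameter, after which the known XP algorithm suffices to decide the reduced instance. First I would assume, as already justified, that $s,t \in X$ (which inflates the vertex cover by at most two) and that \Cref{rr:trivial-yes} is inapplicable, so that we are not dealing with a trivial \yes-instance. These are safe preprocessing steps that cost only polynomial time.

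Next I would apply \Cref{rr:eqclass} exhaustively; each application strictly decreases the number of vertices, so the process terminates, and identifying the part $\mathcal{E}_{G,X}(v) = I_{N(v)}$ of any $v \in G \setminus X$ is immediate once $X$ is known. By \Cref{lemma:rr-eq-safe} every application is safe, so the resulting instance is equivalent to the input. Once no further application is possible, every part $I_Y$ in the partition $\{I_Y\}_{Y \subseteq X}$ of $V(G) \setminus X$ satisfies $|I_Y| \le k+1$.

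The key counting step is then the following. Since there are at most $2^{|X|} = 2^{\vc(G)}$ distinct subsets $Y \subseteq X$, the independent set $V(G) \setminus X$ contains at most $(k+1)\cdot 2^{\vc(G)}$ vertices after reduction. Hence the reduced graph has at most $\vc(G) + (k+1)\cdot 2^{\vc(G)}$ vertices, an equivalent instance whose total size is bounded by a function of $\vc(G)$ and $k$ alone. This is precisely an exponential kernel in the combined parameter $(\vc(G), k)$.

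Finally, I would run the $n^{\calO(k)}$-time backtracking algorithm of Fomin, Golovach, and Thilikos on this kernel. Writing $n'$ for the number of vertices of the kernel, we have $n' \le \vc(G) + (k+1)\cdot 2^{\vc(G)}$, so the running time $(n')^{\calO(k)}$ is bounded by a function of the two parameters; together with the polynomial time spent applying the reduction rules this yields an \FPT\ algorithm, proving the lemma. The genuinely delicate ingredient is the safety of \Cref{rr:eqclass}, which I expect to be the main obstacle, but that is already discharged by \Cref{lemma:rr-eq-safe}; the remaining work is just the bounding of $|V(G)|$ and the invocation of the XP algorithm, both of which are routine.
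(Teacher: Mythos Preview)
Your proposal is correct and follows essentially the same approach as the paper: apply \Cref{rr:eqclass} exhaustively to bound $|V(G)| \le \vc(G) + (k+1)\cdot 2^{\vc(G)}$, giving an exponential kernel in the combined parameter, from which the \FPT\ claim follows. The paper's proof is terser (it does not spell out the final decision step), but your explicit invocation of the $n^{\calO(k)}$ algorithm is a perfectly fine way to conclude.
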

\begin{proof}
    Observe that repeated applications of~\Cref{rr:eqclass} ensures that $|V(G)| = |X| + |G \setminus X| \leq \vc(G) + 2^{\vc(G)} \cdot (k+1)$. Thus we have an exponential kernel in $\vc(G)$, and the claim follows.
\end{proof}

Now, we establish the lower bound claimed in~\Cref{thm:vc-fpt-no-poly} by showing the following.

\begin{lemma}
\label{lemma:vc-no-poly}
{\sc Rendezvous} does not admit a polynomial kernel when parameterized by the vertex cover number and the solution size unless \NP $\subseteq$ \co-\NP/poly.
\end{lemma}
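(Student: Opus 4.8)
The plan is to establish the lower bound through a polynomial parameter transformation from a source problem that is known to be incompressible, reusing the \co-\W[2]-hardness reduction of Fomin, Golovach, and Thilikos~\cite{DBLP:conf/wg/FominGT21} rather than building a fresh composition. Concretely, I would take \textsc{Set Cover} parameterized by the universe size $n$ as the source: it is \NP-hard and, by the incompressibility results collected in~\cite{DBLP:books/sp/CyganFKLMPPS15}, admits no polynomial compression unless \NP $\subseteq$ \co-\NP/poly. The reduction of~\cite{DBLP:conf/wg/FominGT21} sends a \textsc{Set Cover} instance to a \textsc{Rendezvous} instance $(G,s,t,k)$ in which Divider wins if and only if a cover within the prescribed budget exists; equivalently, it is an ordinary many-one reduction from \textsc{Set Cover} to the complement problem (the language of instances on which Divider wins), with the cover budget matching $k$ up to an additive constant.

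The first technical step is to inspect the graph produced by that reduction and read off the two relevant parameters as functions of $n$. I expect that the vertices encoding the \emph{sets} form an independent set whose neighbours all lie in a small core made up of the element vertices together with $s$, $t$, and the guard vertices, so that this core is a vertex cover of size $\vc(G) = \calO(n)$, independent of the number of sets. Since the number of Divider's agents is the cover budget plus a constant and one may assume the budget is at most $n$, we also get $k = \calO(n)$. Hence the reduction maps a \textsc{Set Cover} instance with universe size $n$ to a (complement) \textsc{Rendezvous} instance whose combined parameter satisfies $\vc(G) + k = \calO(n)$, in polynomial time, which is exactly a polynomial parameter transformation for the combined parameterization.

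Given the transformation, the remainder is the standard transitivity of compressibility. First, a polynomial kernelization for \textsc{Rendezvous} immediately yields one for its complement, because a kernel reduces an instance to an equivalent instance of bounded size and that same reduced instance certifies membership in the complement problem as well. Composing such a kernelization with the transformation then produces a polynomial compression for \textsc{Set Cover} parameterized by universe size: an instance with universe size $n$ is first mapped to a complement-\textsc{Rendezvous} instance of parameter $\calO(n)$, then compressed to size polynomial in $\calO(n)$ and hence in $n$, all while preserving the answer. This contradicts the incompressibility of \textsc{Set Cover} unless \NP $\subseteq$ \co-\NP/poly, which is precisely the claimed conclusion.

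The main obstacle I anticipate lies in the parameter bounds of the middle step, namely verifying that $\vc(G)$ is genuinely $\calO(n)$ and not proportional to the number of sets. If the reduction of~\cite{DBLP:conf/wg/FominGT21} introduces long subdivided paths or per-set branching vertices that must enter every vertex cover --- as the treewidth-oriented reductions of \Cref{sec:tree-width,sec:feedback-vertex-set} deliberately do --- then $\vc(G)$ would scale with the number of sets and the map would fail to be parameter-preserving for the universe-size parameterization. In that case I would argue that the \cite{DBLP:conf/wg/FominGT21} construction can be made vertex-cover-compact, replacing such paths by direct adjacencies to the core (which is harmless since that reduction, unlike ours, does not require bounded treewidth), or otherwise switch to whichever parameterization of \textsc{Set Cover} the construction does keep polynomially small, provided it remains known to be incompressible. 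Pinning these bounds down against the precise gadgets of the source reduction is the crux of the argument.
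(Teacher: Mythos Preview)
Your proposal is correct and is precisely the paper's approach: the paper reproduces the Fomin--Golovach--Thilikos \textsc{Set Cover} reduction and observes that $U \cup \{s,t\} \cup \{w_i : i \in [k]\}$ is a vertex cover of the reduced graph of size $n + k + 2$, giving a polynomial parameter transformation from \textsc{Set Cover} parameterized by universe size and budget. The concern in your final paragraph does not materialize, since that construction uses only length-two $s$--$u_h$--$t$ and $s$--$w_i$--$t$ paths and the set vertices $s_j^{(i)}$ have all their neighbours in $U \cup \{w_i\}$, so the $\calO(n+k)$ vertex cover bound holds without any modification.
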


The proof is based on observing that the instance in {the} reduction {used} in~\cite{DBLP:conf/wg/FominGT21} --- {to prove that problem is \co-\W[2]-\hard when parameterized by the solution size} --- has bounded vertex cover number. In particular, the reduction is from \textsc{Set Cover}, which does not admit a poylnomial kernel parameterized by the solution size and the size of the universe unless \NP $\subseteq$ \co-\NP/poly~\cite{kernelbook}. We reproduce the construction here for completeness.

\begin{proof}
\begin{figure}
\includegraphics[scale=0.25]{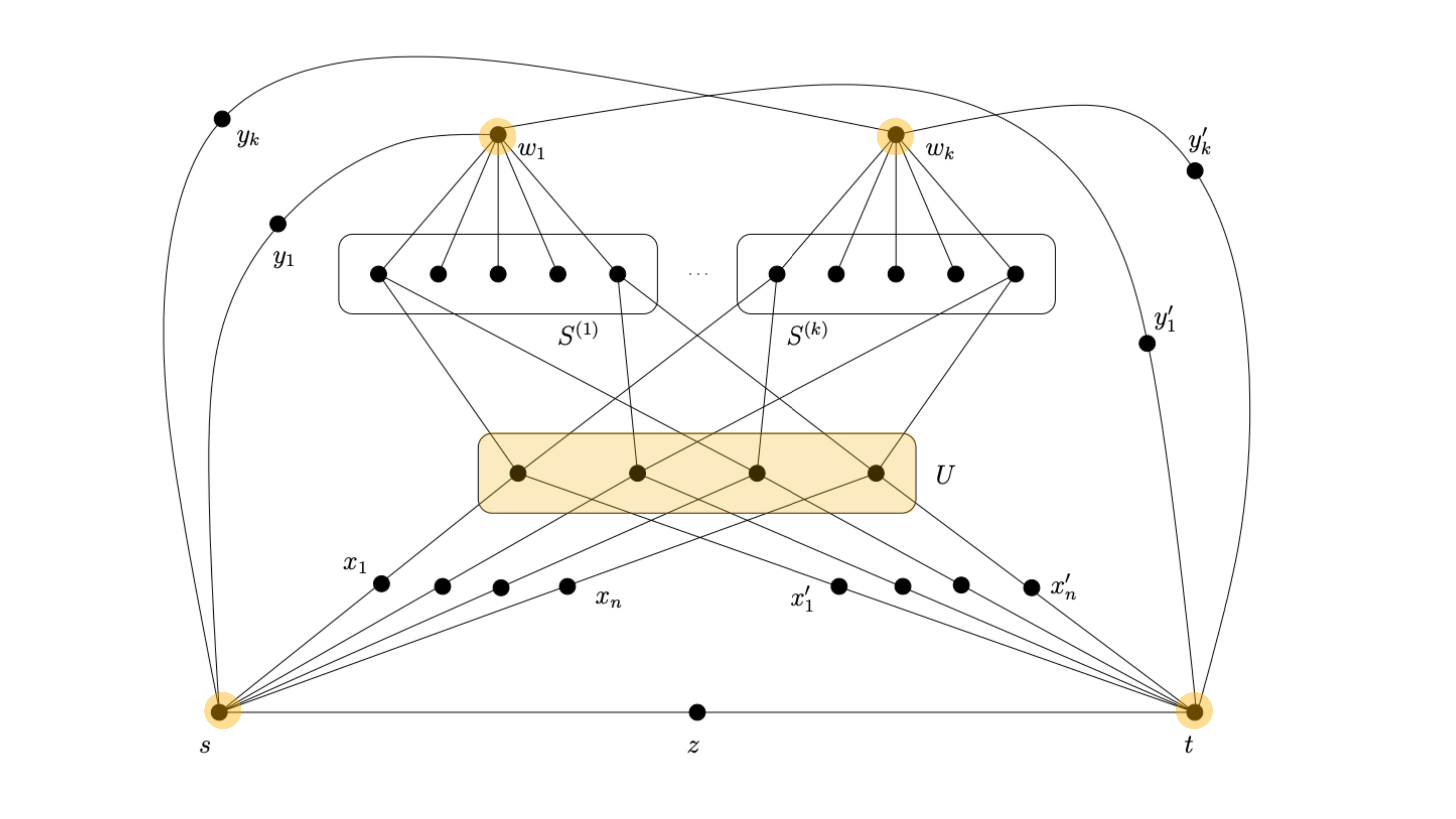}
\caption{A schematic of the reduction with the vertex cover vertices highlighted.}
\end{figure}

Recall that an instance of \textsc{Set Cover} consists of a universe $U$ of size $n$, a family over $U$ of size $m$, and a budget of $k$; and the question is if there exists a collection of at most $k$ sets from the given family whose union is $U$. Let $(U, \mathcal{S}, k)$ be an instance of \textsc{Set Cover}. Let $U=\left\{u_{1}, \ldots, u_{n}\right\}$ and $\mathcal{S}=\left\{S_{1}, \ldots, S_{m}\right\}$.

\begin{itemize}
\item Construct a set of $n$ vertices $U=\left\{u_{1}, \ldots, u_{n}\right\}$ corresponding to the universe.
\item For every $i \in\{1, \ldots, k\}$, construct a set of $m$ vertices $S^{(i)}=\left\{s_{1}^{(i)}, \ldots, s_{m}^{(i)}\right\} ;$ each $S^{(i)}$ corresponds to a copy of $\mathcal{S}$.
\item For every $i \in\{1, \ldots, k\}, h \in\{1, \ldots, m\}$ and $h \in\{1, \ldots, n\}$, make $s_{j}^{(i)}$ and $u_{h}$ adjacent if the element of the universe $u_{h}$ is in $S_{j} \in \mathcal{S}$.
\item For every $i \in\{1, \ldots, k\}$, construct a vertex $w_{i}$ and make it adjacent to $s_{1}^{(i)}, \ldots, s_{m}^{(i)}$.
\item Construct two vertices $s$ and $t$.
\item For every $h \in\{1, \ldots, n\}$, join $s$ and $u_{h}$ by a path $s x_{h} u_{h}$ and joint $u_{h}$ and $t$ by a path $u_{h} x_{h}^{\prime} t$.
\item For every $i \in\{1, \ldots, k\}$, join $s$ and $w_{i}$ by a path $s y_{i} w_{i}$ and join $w_{i}$ and $t$ by a path $w_{i} y_{i}^{\prime} t$.
\item Construct a vertex $z$ and make it adjacent to $s$ and $t$.
\end{itemize}

It is shown in~\cite{DBLP:conf/wg/FominGT21} that $(U, \mathcal{S}, k)$ is a yes-instance of SET Cover if and only if Divider with $k+1$ agents can win in the Rendezvous game. It is straightforward to check that $U \cup \{s,t\} \cup \{w_i ~|~ i \in [k]\}$ is a vertex cover for the reduced instance of size at most $m + k + 2$. The claim follows from the hardness of obtaining a polynomial kernel for \textsc{Set Cover} parameterized by $m+k$, since the equivalence of the instances is already known.
\end{proof}
\section{Some Polynomial Cases}
\label{sec:polycases}

In this section, we focus on a few tractable scenarios.

\polytimecases*

We first discuss grids.

\begin{proposition}
    \label{prop:grid-graph-dynamic-sep}
    For a grid $G$ and two non-adjacent vertices $s, t \in V(G)$, $d_G(s, t) = 2$.
\end{proposition}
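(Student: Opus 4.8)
The plan is to prove the two inequalities $d_G(s,t)\ge 2$ and $d_G(s,t)\le 2$ separately; throughout I assume the grid is genuinely two-dimensional, i.e.\ $M,N\ge 2$ (if $\min(M,N)=1$ then $G$ is a path, every two non-adjacent vertices have a cut vertex between them, and $d_G(s,t)=\lambda_G(s,t)=1$). For the lower bound I would invoke the characterization recalled earlier that $d_G(s,t)=1$ if and only if $\lambda_G(s,t)=1$. An $M\times N$ grid with $M,N\ge 2$ is $2$-connected --- deleting any single vertex leaves it connected --- so $\lambda_G(s,t)\ge 2$ for every pair of non-adjacent vertices, and hence $d_G(s,t)\ne 1$. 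Since $d_G(s,t)$ is a positive integer (it is finite because $s\neq t$ and non-adjacent, and Divider needs at least one agent on a connected graph) that cannot equal $1$, it is at least $2$.

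The substance is the upper bound: a winning strategy for Divider using exactly two agents. Write $R$ and $J$ for the current positions of Romeo and Juliet. The key observation is that, on her turn, Facilitator can force a meeting in a single move precisely when some vertex $v\in N[R]\cap N[J]$ is free of Divider's agents; in particular, if $R$ and $J$ are adjacent she wins at once (Juliet steps onto $R$). I would therefore have Divider maintain, at the start of every Facilitator turn, the invariant that $\dist(R,J)\ge 2$ and that his two agents occupy a \emph{danger set} $\mathcal{B}(R,J)$: when $\dist(R,J)=2$ this is the set $N(R)\cap N(J)$ of common neighbours, which in a grid has size at most $2$ (one vertex in the collinear case, two in the diagonal case); when $\dist(R,J)=3$ it is a set of at most two vertices whose removal prevents any pair $R'\in N[R]$, $J'\in N[J]$ from being adjacent; and when $\dist(R,J)\ge 4$ no vertex is immediately dangerous. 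A short check shows this invariant suffices: when $\dist(R,J)=2$ with the common neighbours blocked, Facilitator can neither meet nor bring the agents to distance $1$; when $\dist(R,J)=3$ the blocked danger set rules out every move reaching distance $1$, so she can never reach a losing adjacent configuration; and from distance $\ge 4$ no single move is dangerous.

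It then remains to show the invariant is \emph{maintainable}: after any legal Facilitator move to $(R',J')$ --- which by the previous paragraph satisfies $\dist(R',J')\ge 2$ --- Divider can shift his two agents, each by at most one step and without stepping onto $R'$ or $J'$, so as to reoccupy $\mathcal{B}(R',J')$. Since $R'$ and $J'$ each lie within distance $1$ of $R$ and $J$, the new danger set is local to the old one, and I would verify by a finite case analysis --- organized by the displacement vector $R-J$, namely $(2,0)$ and $(1,1)$ for distance $2$ and $(3,0)$, $(2,1)$, $(1,2)$ for distance $3$, up to the symmetries of the grid --- that each vertex of $\mathcal{B}(R',J')$ is within distance $1$ of some vertex of $\mathcal{B}(R,J)$ and that the two agents can be reassigned simultaneously. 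The delicate point is \emph{readiness} across the distance-$4$/distance-$3$ boundary: when $\dist(R,J)\ge 4$ the danger set is empty and imposes no constraint, yet a single Facilitator move can drop the distance to $3$ and suddenly demand two specific blockers. I would handle this by strengthening the far-regime invariant to keep the two agents on a shortest $R$--$J$ path, sweeping them toward the endpoint Facilitator is approaching, so that they are already adjacent to the required choke vertices the moment the configuration enters the danger zone.

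I expect this readiness/transition argument, together with the finite case analysis for the shift, to be the main obstacle; the remainder is bookkeeping. Finally, Divider's opening move must be addressed: he first places his two agents to establish the invariant for the initial position $(s,t)$ --- blocking the $\le 2$ common neighbours if $\dist(s,t)=2$, or occupying the appropriate shortest-path vertices otherwise --- which is possible since he chooses his initial placement and $s,t$ are non-adjacent. Combining the two bounds yields $d_G(s,t)=2$.
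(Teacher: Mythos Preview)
Your lower bound matches the paper's. For the upper bound, however, the paper takes a dramatically simpler route that sidesteps your entire case analysis. Rather than maintaining danger sets, it exploits a monotone potential: assuming WLOG $s_x<t_x$, Divider places $D_1$ at $(s_x+1,s_y)$ and $D_2$ at $(t_x-1,t_y)$ and thereafter has each Divider agent \emph{mimic} the Facilitator agent it shadows --- if Romeo steps left, right, or down, $D_1$ does the same; Romeo can never step up because that cell is $D_1$. Symmetrically Juliet can never step down. Hence the row difference $J_x-R_x$ is nondecreasing and stays at least $t_x-s_x\ge 1$, so $R$ and $J$ are never on the same vertex. One global invariant replaces your whole hierarchy of danger sets, and the ``readiness'' problem you flag as the main obstacle never arises: the shadowing agents are in position by construction, regardless of $\dist(R,J)$.

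Your plan is not wrong in spirit, but as written it is a sketch rather than a proof: the finite case analysis and the far-regime readiness invariant are both deferred, and there is a slip in the latter --- from $\dist(R,J)\ge 4$ a single Facilitator move can drop the distance to $2$ (both agents step closer), not only to $3$, so the transition you worry about is sharper than you state and your ``shortest-path sweeping'' placement has to cope with a sudden demand for two specific common-neighbour blockers, not just a distance-$3$ danger set. One genuine point in your favour: the paper's mimic argument is loose when the chosen axis has difference exactly $1$ (e.g.\ $s=(0,0)$, $t=(1,2)$: after $R\to(0,1)$, $J\to(1,1)$ the prescribed move $D_1\to(1,1)$ collides with Juliet), and your common-neighbour blocking handles precisely that near-diagonal corner cleanly. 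So the ideal write-up is the paper's shadow strategy along an axis with difference $\ge 2$, plus your local-blocking idea for the residual distance-$2$ diagonal case.
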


    \begin{proof}
    Consider an instance $(G,s,t,k)$ of \textsc{Rendezvous} where $G$ is a $M \times N$ undirected grid.
    Without loss of generality, we assume that $s$ and $t$ are non-adjacent in $G$.
    It is known~\cite[Theorem 2]{DBLP:conf/wg/FominGT21} that $d_{G}(s, t) = 1$ if and only if $\lambda_{G}(s, t) = 1$.
    Since in grid graph any two vertices are part of at least one cycle, $\lambda_{G}(s, t) \ge 2$.
    Hence, for any non-adjacent pair of vertices $s$ and $t$, $d_{G}(s, t) \ge 2$.
    Therefore, it is sufficient to show that $d_{G}(s, t) \le 2$.
    We prove that Divider with $2$ agents has a winning strategy on $G$ against Facilitator starting from $s$ and $t$.

    We respresent vertex $v$ of $G$ that is in $i$th row and $j$th column as $(i, j)$, where $i \in [M]$ is the row number of vertex $v$ and $j \in [N]$ is the column number of vertex $v$.
    Let $s$ be $(s_{x}, s_{y})$ and $t$ be $(t_{x}, t_{y})$, where $s_{x}, t_{x} \in [M]$, and $s_{y}, t_{y} \in [N]$.
    For Facilitator to win, she must make the difference between the row number as well as column number of the vertices having Romeo and Juliet equal to $0$.
    Since $(s_{x}, s_{y})$ and $(t_{x}, t_{y})$ are two different and non-adjacent vertices either $|s_{x} - t_{x}| > 0$ or $|s_{y} - t_{y}| > 0$.
    We assume without loss of generality $|s_{x} - t_{x}| > 0$ and $s_{x} < t_{x}$; in other words, $s$ and $t$ are on different rows and $s$ is ``below'' $t$ in the grid.

    We describe a winning strategy for Divider with the agents $D_{1}$ and $D_{2}$.
    Intuitively, the agent $D_1$ starts off to the ``top'' of $s$ and the agent $D_2$ starts off at a location to the ``bottom'' of $t$.
    Their goal will be to maintain the initial separation between $s$ and $t$ by not allowing the agent on $s$ to advance upwards or the agent on $t$ to advance downwards.
    They do this by ``tracking'' the agent movements and mimicing them whenever there is a shift to an adjacent column, and staying put if the agents are moving along the same column, in which case they are drifting further apart.

    In particular, to begin with, Divider puts $D_{1}$ in the vertex $(s_{x} + 1, s_{y})$ and $D_{2}$ in the vertex $(t_{x} - 1, t_{y})$ (since $s_{x} < t_{x}$, $s_{x} < M$ and $t_{x} > 1$).
    Then the following strategy is used.
    The agents $D_{1}$ and $D_{2}$ are keeping their positions until Facilitator moves Romeo or Juliet from $(s_{x}, s_{y})$ or $(t_{x}, t_{y})$, respectively.
    Whenever Facilitator moves Romeo, the agent $D_{1}$ replicates her move and similarly, whenever Facilitator moves Juliet, the agent $D_{2}$ replicates her move.
    Facilitator can move Romeo to either $(s_{x} - 1, s_{y})$ (if $s_{x} > 1$) or $(s_{x}, s_{y} - 1)$ (if $s_{y} > 1$ and $D_{2}$ is not on this vertex) or $(s_{x}, s_{y} + 1)$ (if $s_{y} < N$ and $D_{2}$ is not on this vertex).
    The vertex $(s_{x} + 1, s_{y})$ is occupied by $D_{1}$.
    Let the new position of Romeo be $(s^{\prime}_{x}, s^{\prime}_{y})$, where $s^{\prime}_{x} \in [M]$ and $s^{\prime}_{y} \in [N]$.
    Divider moves the agent $D_{1}$ to $(s_{x}, s_{y})$ or $(s_{x} + 1, s_{y} - 1)$ or $(s_{x} + 1, s_{y} + 1)$ corrosponding to the above mentioned three posibile moves of the Facilitator for Romeo.
    Similarly, Facilitator can move Juliet to either $(t_{x} + 1, t_{y})$ (if $t_{x} < M$) or $(t_{x}, t_{y} - 1)$ (if $t_{y} > 1$ and $D_{1}$ is not on this vertex) or $(t_{x}, t_{y} + 1)$ (if $t_{y} < N$ and $D_{1}$ is not on this vertex).
    The vertex $(t_{x} - 1, t_{y})$ is occupied by $D_{2}$.
    Let the new position of Juliet be $(t^{\prime}_{x}, t^{\prime}_{y})$, where $t^{\prime}_{x} \in [M]$ and $t^{\prime}_{y} \in [N]$.
    Divider moves the agent $D_{2}$ to $(t_{x}, t_{y})$ or $(t_{x} - 1, t_{y} - 1)$ or $(t_{x} - 1, t_{y} + 1)$ corrosponding to the above mentioned three posibile moves of the Facilitator for Juliet.
    Observe that, the difference of the row number of Juliet and Romeo does not decrease after any of the possible moves, i.e. $t^{\prime}_{x} - s^{\prime}_{x} \ge t_{x} - s_{x}$.
    Divider follows the same strategy after every move of Facilitator for Romeo and Juliet, and the strategy ensures that the difference of the row number of Juliet and Romeo does not decrease after any possible move of the Facilitator.
    Hence, Divider prevents Romeo and Juliet from meeting by ensuring that the difference of their row number does not decrease after any number of moves.
    This implies Divider wins.
    The argument also follows when $|s_{y} - t_{y}| > 0$ since Divider can prevent Romeo and Juliet from meeting by ensuring that the difference of their column number does not decrease after any number of moves.

    We conclude that Divider with $2$ agents has a winning strategy on $G$ against Facilitator starting from $s$ and $t$, which implies $d_{G} \le 2$.
    Since $d_{G} \le 2$ as well as $d_{G} \ge 2$, $d_{G} = 2$.
    This implies that for a grid graph $G$ and two non-adjacent vertices $s, t \in V(G)$, $d_G(s, t) = 2$.
    \end{proof}

We now turn to graphs of treewidth at most two. In this case, we show that $d_G(s,t) = \lambda_G(s,t)$, which leads to~\textsc{Rendezvous} being polynomially solvable on this class of graphs based on standard algorithms for computing $\lambda_G(s,t)$.

\begin{proposition}
If $G$ is a connected graph of tree-width at most $2$, then for every $s, t \in V(G)$, $d_G(s,t) = \lambda_G(s,t)$.
\end{proposition}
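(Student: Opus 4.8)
The inequality $d_G(s,t)\le \lambda_G(s,t)$ is already available, so the plan is to establish the reverse inequality $d_G(s,t)\ge \lambda_G(s,t)$; equivalently, to exhibit a winning strategy for Facilitator whenever Divider is given only $k=\lambda_G(s,t)-1$ agents. First I would dispose of the degenerate cases: if $s=t$ or $st\in E(G)$ there is nothing to prove, and if $\lambda_G(s,t)=1$ the claim follows from the known equivalence $d_G(s,t)=1\iff\lambda_G(s,t)=1$. I would then reduce to the $2$-connected setting: any cut vertex lying on every $s$–$t$ path would force $\lambda_G(s,t)=1$, so once $\lambda_G(s,t)\ge 2$ the vertices $s,t$ lie in a common block $B$ with $\lambda_B(s,t)=\lambda_G(s,t)$, and the rest of $G$ can only help Facilitator (Divider agents placed outside $B$ are irrelevant, since to reach $B$ they must pass through $s$ or $t$).

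The structural heart of the argument is a description of $2$-connected graphs of treewidth at most two, i.e.\ $2$-connected series–parallel graphs with terminals $s,t$. Using the series/parallel decomposition, I would show that when $s,t$ are non-adjacent and $\lambda=\lambda_B(s,t)\ge 2$, the top-level composition must be \emph{parallel}: a series composition would expose a single cut vertex separating $s$ from $t$, contradicting $\lambda\ge 2$. Hence $B=H_1\parallel\cdots\parallel H_\lambda$, where each strand $H_i$ is series-type (a single edge being excluded by non-adjacency) and therefore contains a cut vertex $z_i$ separating $s$ from $t$ inside $H_i$; in particular $\lambda_{H_i}(s,t)=1$, and $\lambda=\sum_i \lambda_{H_i}(s,t)$ equals exactly the number of strands.

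Facilitator's strategy then exploits that she has $\lambda$ strands while Divider has only $\lambda-1$ agents. At the start some strand $H_{i^\star}$ carries no Divider agent; Facilitator commits to it, advancing Romeo from $s$ and Juliet from $t$ \emph{toward each other along shortest paths inside $H_{i^\star}$}. The crucial lemma I would prove is a \emph{no-overtaking} property: since $H_{i^\star}$ is attached to the remaining strands only at $s$ and $t$, every Divider agent that later enters $H_{i^\star}$ does so at $s$ (hence strictly behind Romeo) or at $t$ (hence strictly behind Juliet); and because each internal $2$-connected piece of the strand has a unique entry cut vertex and a unique exit cut vertex, an agent entering a piece after Romeo cannot reach its exit before Romeo does when Romeo traverses it by a shortest route. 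Consequently no Divider agent is ever strictly between Romeo and Juliet on the chosen converging path, the $r$–$j$ distance along $H_{i^\star}$ strictly decreases each round, and the agents meet. When Romeo and Juliet approach each other within a single $2$-connected piece, that piece is itself a parallel composition between two terminals, so the situation recurses and the argument is completed by induction on $|V(G)|$.

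The main obstacle is making this no-overtaking/accounting argument fully rigorous through the nested laminar structure: I must formalize ``behind'' inside the $2$-connected pieces and show that Divider cannot profitably shuttle agents across strands faster than Facilitator converges—in particular, that committing an agent to block a gluing vertex of $H_{i^\star}$ always frees some other strand in a way the induction can charge against. I expect the cleanest route is to state the inductive hypothesis for an arbitrary two-terminal series–parallel graph with Romeo and Juliet starting at the terminals, to allow Divider to inject reinforcements at the (momentarily vacated) terminals, and to prove that such terminal reinforcements are harmless; the base case of a single strand with no interior agent is immediate, and the parallel/series recursion then carries the bound $d_G(s,t)=\lambda_G(s,t)$ through.
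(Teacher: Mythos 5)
Your overall route is genuinely different from the paper's: the paper proceeds by induction on the series--parallel construction of $G$, with a four-way case analysis on whether $s$ and $t$ are terminals of the top-level composition or internal to one of the two constituent graphs, and in each case transfers static separators and Divider strategies between $G$ and $G_1, G_2$ to invoke the induction hypothesis. You instead try to normalize the instance (restrict to the block of $s,t$, realize it as a parallel composition with terminals $s,t$) and then build an explicit Facilitator strategy by pigeonhole plus a no-overtaking argument. The plan is plausible, but as written it has concrete gaps.

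The main gap is the structural claim that the block $B$ containing $s$ and $t$ is a parallel composition \emph{with terminals $s$ and $t$} into exactly $\lambda$ strands. Your justification (``a series composition would expose a cut vertex'') only shows that the top-level composition of a $2$-connected series--parallel graph is parallel with respect to \emph{its own} terminals $x,y$, which need not be $s$ and $t$; the paper's Cases 1--3 exist precisely to handle $s$ or $t$ sitting inside a strand. Indeed the claim is false for $\lambda=2$: in a theta graph with $s$ and $t$ internal to two different arms, $\lambda_B(s,t)=2$ but $\{s,t\}$ is not a cut, so $B$ is not a parallel composition at $s,t$. (That case is rescued only because $d=1\iff\lambda=1$ already forces $d\ge 2$ there, which you should say explicitly.) For $\lambda\ge 3$ the claim is true but needs an actual argument --- e.g., that any chord-like connection between two of three internally disjoint $s$--$t$ paths creates a $K_4$ subdivision, so no $2$-separator splits $s$ from $t$ and they must be the poles of a parallel node. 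Two smaller but real issues: your dismissal of agents outside $B$ is wrong as stated --- a component of $G-V(B)$ attaches to $B$ at a single cut vertex, but that vertex can be any vertex of $B$, not just $s$ or $t$, so such agents can enter $B$ inside an arbitrary strand and must be charged to that strand in the pigeonhole; and the no-overtaking lemma with terminal reinforcements, which you yourself flag as the main obstacle, is the entire content of the lower bound $d_G(s,t)\ge\lambda_G(s,t)$ and is only sketched (the recursion into a $2$-connected piece, where Divider can feed arbitrarily many agents in behind Romeo and Juliet, is exactly where a careful timing invariant is needed). Until the structural normalization and the no-overtaking invariant are proved, the argument does not yet close.
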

\begin{proof}
Consider an instance $(G,s,t,k)$ of \textsc{Rendezvous} where $G$ is a graph of treewidth at most $2$.
We recall that these are the series-parallel graphs, which are graphs with two distinguished vertices called terminals, formed recursively by two simple composition operations.
Specifically, we have the following definitions.
A two-terminal graph (TTG) is a graph with two distinguished vertices, $s$ and $t$ called source and sink, respectively.
The parallel composition $P_c = P_c(X,Y)$ of two TTGs $X$ and $Y$ is a TTG created from the disjoint union of graphs $X$ and $Y$ by merging the sources of $X$ and $Y$ to create the source of $P_c$ and merging the sinks of $X$ and $Y$ to create the sink of $P_c$.
The series composition $S_c = S_c(X,Y)$ of two TTGs $X$ and $Y$ is a TTG created from the disjoint union of graphs X and Y by merging the sink of $X$ with the source of $Y$. The source of $X$ becomes the source of $S_c$ and the sink of $Y$ becomes the sink of $S_c$.
A two-terminal series–parallel graph (TTSPG) is a graph that may be constructed by a sequence of series and parallel compositions starting from a set of copies of a single-edge graph $K_2$ with assigned terminals.
Finally, a graph is called series–parallel (SP-graph), if it is a TTSPG when some two of its vertices are regarded as source and sink.

The proof will proceed by induction on the number of vertices.
We will do a case analysis for sequence of compositions used to arrive at the final graph $G$.
In the base case, there is nothing to prove since $G$ is simply an edge.
We use $x$ and $y$ to denote the source and sink terminals, respectively.
For the induction hypothesis, we assume that the claim is true for all series-parallel graphs with less than $k$ vertices, where $k \geq 2$.
Now, let $G$ be a series-parallel graph having $k$ vertices.





Suppose $G$ is obtained by a series or parallel composition of graphs $G_1$ and $G_2$ and let $x$ and $y$ denote the source and sink terminals of $G$, while $x_b$ and $y_b$ denote the source and sink terminals of $G_b$ for $b \in \{1,2\}$. Note that $G_1$ and $G_2$ are series-parallel graphs having less than $k$ vertices.

{\color{SteelBlue}Case 1: $s \in G_1$ and $t \in G_2$; $s \neq x_1, s \neq y_1; t \neq x_2, t \neq y_2$}
\begin{itemize}
\item \textbf{Case 1A: The composition is series.} In this case static and dynamic separation number is one: $\{x\}$ or $\{y\}$ (the joined terminal).
\item \textbf{Case 1B: The composition is parallel.} Consider the path $s \rightarrow x \rightarrow t \rightarrow y \rightarrow s$. Since $s \neq t, s \neq x, s \neq y, t \neq x, t \neq y$, the considered path is a closed walk containing $s$ and $t$ which forms a cycle. So, $s$ and $t$ lies on a cycle. So, the lower bound on the dynamic separator is $2$. And the upper bound on the dynamic separator is also $2$ as the static separator in this case is $2$. So in this case static and dynamic separation number is two and is given by both terminals together: $\{x,y\}$.
\end{itemize}

{\color{SteelBlue}Case 2: $s \in G_1$ and $t \in G_1$ $s \neq x_1, s \neq y_1; t \neq x_1, t \neq y_1$}
\begin{itemize}
    \item \textbf{Case 2A: The composition is series.}

In this case, suppose $y_1$ and $x_2$ are identified as $g_{1, 2}$; and $x = x_1$ and $y = y_2$.


\begin{claim}
Static $s,t$ separators in $G_1$ will also work in $G$ and vice versa.
\end{claim}
\begin{proof}
\emph{Forward Direction.} Suppose $G_1$ has a static $(s, t)$ separator $S_1$ of size $k_1$. So, there does not exist any path from $s$ to $t$ in $G_1$ which does not contain any vertex of $S_1$.
Now, for the supergraph $G$, all the paths between $s$ and $t$ that does not pass through $G_2$ are already blocked by the static separator $S_1$.
Further, the paths that pass through $G_2$ will pass through the terminal vertex twice.
So these paths will be $s \rightarrow g_{1, 2} \rightarrow$ some vertices of $G_2 \rightarrow g_{1, 2} \rightarrow t$.
Suppose these paths are not blocked by $S_1$, then there also exist a path $s \rightarrow g_{1, 2} \rightarrow t$ in $G_1$ that are not blocked by $S_1$, which contradicts the assumption that $S_1$ is a static $(s, t)$ separator in $G_1$.
So, these paths are also blocked by $S_1$, which implies that $S_1$ is also the static separator of $G$.

\emph{Backward Direction.} Suppose $G$ has a static $(s, t)$ separator $S_1$ of size $k_1$.
Taking the vertices from $G_2$ in the static seperator can only block paths of the type $s \rightarrow g_{1, 2} \rightarrow$ some vertices of $G_2 \rightarrow g_{1, 2} \rightarrow t$.
So, $S_1$ can not contains more than one vertex from the graph $G_2$, else those vertices can be replaced by $g_{1, 2}$ which will result in a smaller sized static $(s, t)$ separator of $G$.
Hence, $S_1$ can contain at max one vertex from $G_{2}$.
Observe that if $S_1$ does not contain any vertex of $G_2$, then the same $S_1$ is also a static $(s, t)$ seperator in $G_1$.
If $S_1$ contains exactly one vertex from $G_2$, then that vertex can be replaced by $g_{1, 2}$ to form a same sized static $(s, t)$ seperator in $G_1$.
\end{proof}
\begin{claim}
$d_G(s,t) = \lambda_G(s,t) = k_1$.
\end{claim}
\begin{proof}
Suppose the claim is not true. Then we need fewer than $k_1$ guards in $G$, say $k_1 - 1$ guards are enough to separate $s$ from $t$ in $G$. But then this will also be a valid strategy in $G_1$, contradicting the induction hypothesis from which we know that $d_{G_1}(s,t) = \lambda_{G_1}(s,t) = k_1$.
\end{proof}
\item \textbf{Case 2B: The composition is parallel.}

In this case, we have that $y_1$ and $y_2$ join into $y$ and $x_1$ and $x_2$ join into $x$.

\begin{claim}
 $\lambda_{G_1}(s,t) \leq \lambda_{G}(s,t) \leq \lambda_{G_1}(s,t) + 1$.
\end{claim}
\begin{proof}

The first inequality follows from the fact that $G$ is a supergraph of $G_1$. For the second inequality, note that a separation of $s$ and $t$ in $G$ can be achieved by adding one of the terminal vertices to the static separator in $G_1$.
\end{proof}
\begin{claim}
$k_1 \leq d_G(s,t) = \lambda_G(s,t) \leq  k_1 + 1$.
\end{claim}
\begin{proof}
Suppose the claim is not true.
Suppose $\lambda_G(s, t) = k_1$, and we need fewer than $k_1$ guards in $G$, say $k_1 - 1$ guards are enough to separate $s$ from $t$ in $G$.
But then this will also be a valid strategy in $G_1$, contradicting the induction hypothesis from which we know that $d_{G_1}(s, t) = \lambda_{G_1}(s, t) = k_1$.
Suppose $\lambda_G(s, t) = k_1 + 1$, and we need fewer than $k_1 + 1$ guards in $G$, say $k_1$ guards are enough to separate $s$ from $t$ in $G$.
If $\lambda_G(s, t)$ has more than one vertex of $G_2$, then it will contradict the induction hypothesis from which we know that $\lambda_G(s, t) = k_1$.
If it does not have any vertex of $G_2$ then we can replace this separater with the union of static $(s, t)$ seperator of $G_1$ and the vertex $x$.
And if it contains exactly ine vertex of $G_2$, then that vertex can be replaced by one of the terminals $x$ or $y$ and it will still be a valid static $(s, t)$ seperator in $G$.
Observe that, to seperate $s$ and $t$ in $G$, at least one guard must be on one of the paths of type $s \rightarrow x \rightarrow$ some vertices of $G_2 \rightarrow y \rightarrow t$.
If not then there is no guard which can block $x$, $y$, and the vertices of $G_2$ and so it will not be a valid seperator.
Additionally, this vertex in not needed in the dynamic $(s, t)$ seperator of $G_1$.
Hence we can obtain a $k_1 - 1$ size dynamic $(s, t)$ separator in $G_1$ by eliminating this vertex, contradicting the induction hypothesis.
\end{proof}
\end{itemize}

{\color{SteelBlue}Case 3: $s$ is one of the terminals and $t \in G_1$ or $t \in G_2$; $s = x$ or $s = y; t \neq x_1, t \neq y_1$}
\begin{itemize}
    \item \textbf{Case 3A: The composition is series.}
    \begin{enumerate}
        \item If $s = x$ and $t \in G_2$ or if $s = y$ and $t \in G_1$, then the vertex at the junction of the composition is a separator of size one.
        \item If $s = x$ and $t \in G_1$ or $s = y$ and $t \in G_2$, then the static separator of $s$ and $t$ in $G$ is the static separator of $s$ and $t$ in $G_1$ or the static separator of $s$ and $t$ in $G_2$, of size $k_1$ or $k_2$ respectively.
    \end{enumerate}

\begin{claim}
If $s = x$ and $t \in G_1$, then $d_G(s,t) = \lambda_G(s,t) = k_1$, while if $s = y$ and $t \in G_2$, then $d_G(s,t) = \lambda_G(s,t) = k_2$.
\end{claim}
\begin{proof}
Consider the first statement and suppose the claim is not true.  Then we need fewer than $k_1$ guards in $G$, say $k_1 - 1$ guards are enough to separate $s$ from $t$ in $G$. But then this will also be a valid strategy in $G_1$, contradicting the induction hypothesis. The same argument works for the second statement as well.
\end{proof}
    \item \textbf{Case 3B: The composition is parallel.}
    \begin{claim}
    $\lambda_{G}(s,t) \leq \lambda_{G_1}(s,t) + 1$, if $t \in G_1$ and $\lambda_{G}(s,t) \leq \lambda_{G_2}(s,t) + 1$, if $t \in G_2$
    \end{claim}
    This argument in this case is analogous to Case 2B.
\end{itemize}

{\color{SteelBlue}Case 4. $s$ and $t$ are both terminals.}
\begin{itemize}
    \item \textbf{Case 4A: The composition is series.}
    \begin{enumerate}
        \item If $s = x; t = y$, then the vertex at the junction of the composition is a static separator of size one.
        \item If $s = x; t = z$, where $z$ denotes the the vertex at the junction of the composition, then a static separator of $s$ and $t$ in $G_1$ is also a static separator of $s$ and $t$ in $G$.
        \item If $s = z; t = y$, where $z$ is the same as before, then a static separator of $s$ and $t$ in $G_2$ is also a static separator of $s$ and $t$ in $G$.
    \end{enumerate}
\begin{claim}
    If $s = x$ and $t = z$, then $d_G(s,t) = \lambda_G(s,t) = k_1$, while if $s = z$ and $t \in y$, then $d_G(s,t) = \lambda_G(s,t) = k_2$.
\end{claim}
\begin{proof}
    Consider the first statement and suppose the claim is not true.
    Then we need fewer than $k_1$ guards in $G$, say $k_1 - 1$ guards are enough to separate $s$ from $t$ in $G$.
    But then this will also be a valid strategy in $G_1$, contradicting the induction hypothesis.
    The same argument works for the second statement as well.
\end{proof}
    \item \textbf{Case 4B: The composition is parallel, i.e, $s = x; t = y$.}\\
    In this case, note that the size of the static separator of $s$ and $t$ in $G$ = $\lambda_{G_1}(s,t)+\lambda_{G_2}(s,t)$. Indeed, any smaller subset would have either fewer than $\lambda_{G_1}(s,t)$ vertices in $G[V(G_1)]$ or fewer than $\lambda_{G_2}(s,t)$ vertices in $G[V(G_2)]$, implying the existence of an unblocked path from $s$ to $t$ via $G_1$ or $G_2$
\begin{claim}
$d_G(s,t) = \lambda_G(s,t) = k_1 + k_2$.
\end{claim}
\begin{proof}
Suppose the claim is not true. Then we need fewer than $k_1 + k_2$ guards in $G$, say $k_1 + k_2 - 1$ guards are enough to separate $s$ from $t$ in $G$. But then this will also be a valid strategy in $G_1$ $($or $G_2)$ with $< k_1$ $($or $< k_2)$ guards, contradicting the induction hypothesis.
\end{proof}
\end{itemize}
So, for all the cases $\lambda_G(s,t) = d_G(s,t)$, and this concludes our argument.
\end{proof}

\section{Conclusion}

In this work, we studied the game of rendezvous with adversaries on a graph introduced by Fomin, Golovach, and Thilikos~\cite{DBLP:conf/wg/FominGT21}.
The game is a natural dynamic version of the problem of finding a vertex cut between two vertices $s$ and $t$.
Given that the problem is W[2]-\hard\ when parameterized by the natural parameter, i.e.  the solution size,  we continued 
studying structural parameters of the input graph initiated by Fomin et al.  \cite{DBLP:conf/wg/FominGT21}. 
We proved, to our surprise,  that the problem is {\co-\NP-\hard} even when restricted to
graphs whose feedback vertex set number is at most $14$, or 
pathwidth is at most $16$.
In particular,  we proved {\sc Rendezvous} is {\co-\para-\NP-\hard} parameterized by treewidth, thereby answering an open question by Fomin et al.  \cite{DBLP:conf/wg/FominGT21}.
It turns out that even augmenting the feedback vertex set number or the pathwidth with the solution size is not enough.
Specifically,  we proved that {\sc Rendezvous} is {\co-\W[1]-\hard} when parameterized by 
the feedback vertex set number and the solution size, or
the pathwidth and the solution size.
Towards the positive side,  we proved  that the problem admits a natural exponential kernel when parameterized 
by the vertex cover number and the solution size, 
however this kernel cannot be improved to a polynomial kernel under standard complexity-theoretic assumptions.
Finally, we presented polynomial time algorithms on two restricted cases and proved that 
{\sc Rendezvous} can be solved in polynomial time on the classes of treewidth at most two graphs and grids.

While we addressed the structural parameterized by arguably the most well studied parameters, 
it remains interesting to study the parameterized complexity by other structural parameters.
Amongst these,  we highlight the following question:
\emph{Is {\sc Rendezvous} {\em \W[1]-\hard} when parameterized by the vertex cover number (only)?}
We tend to believe it is indeed the case.
To the best of our knowledge,  the problems that are  \W[1]-\hard\ when parameterized by the vertex cover number, like \textsc{List Coloring}, 
\textsc{Weighted $(k, r)$-Center}, etc.,  have additional input arguments like lists or weights. 
We believe that the dynamic natural of the {\sc Rendezvous} problem might make it an exception to the above known trend.

\bibliography{references}



\end{document}